\documentclass[journal,onecolumn]{IEEEtran}

\usepackage{array}

\usepackage{graphicx}
\usepackage{dcolumn}
\usepackage{bm}

\usepackage{hyperref}
\usepackage{textcomp}
\usepackage{csquotes}

\usepackage{amssymb}
\usepackage{amsthm}
\usepackage{amsfonts}
\usepackage{amssymb}
\usepackage{graphicx}
\usepackage{comment}

\usepackage{subfiles}

\usepackage{subcaption}
\usepackage{mathtools}
\usepackage{pgfplots}
\pgfplotsset{/pgf/number format/use comma,compat=newest}
\usepackage{url}
\usepackage{tikz}
\usepackage{bbm}
\usepackage{floatflt}

\usepackage{cite}

\usepackage[most]{tcolorbox}

\newtcolorbox{rsfunctionbox}[1]{
breakable,
title={#1},
fonttitle=\bfseries,
colback=white,
colframe=black!50,
colbacktitle=black!5,
coltitle=black,
boxrule=0.5pt,
arc=0pt,
left=5pt,
right=5pt,
top=5pt,
bottom=5pt,
before skip=10pt,
after skip=10pt
}

\newenvironment{talign*}
{\csname align*\endcsname}
{\endalign}

\newcommand{\Sf}{\mathsf}

\newcommand{\C}{\mathbb{C}}

\newcommand{\E}{\mathbb{E}}

\newcommand{\R}{\mathbb{R}}

\newcommand{\cC}{\mathcal{C}}
\newcommand{\cD}{\mathcal{D}}

\newcommand{\cG}{\mathcal{G}}
\newcommand{\cH}{\mathcal{H}}
\newcommand{\cI}{\mathcal{I}}
\newcommand{\cJ}{\mathcal{J}}

\newcommand{\cL}{\mathcal{L}}

\newcommand{\cQ}{\mathcal{Q}}

\newcommand{\cM}{\mathcal{M}}
\newcommand{\cN}{\mathcal{N}}

\newcommand{\cR}{\mathcal{R}}
\newcommand{\cS}{\mathcal{S}}

\newcommand{\Tr}{\Sf{Tr}}

\newcommand{\bu}{\bm{u}}

\newcommand{\bv}{\bm{v}}
\newcommand{\bs}{\bm{s}}

\newcommand{\bx}{\bm{x}}
\newcommand{\by}{\bm{y}}

\newcommand{\bA}{\bm{A}}
\newcommand{\bB}{\bm{B}}
\newcommand{\bC}{\bm{C}}

\newcommand{\bG}{\bm{G}}

\newcommand{\bI}{\bm{I}}

\newcommand{\bM}{\bm{M}}

\newcommand{\bP}{\bm{P}}
\newcommand{\bQ}{\bm{Q}}

\newcommand{\bU}{\bm{U}}
\newcommand{\bV}{\bm{V}}

\newcommand{\bX}{\bm{X}}
\newcommand{\bY}{\bm{Y}}
\newcommand{\bZ}{\bm{Z}}

\usepackage{algorithm}
\usepackage{algpseudocode}

\newcommand{\Diag}{\Sf{D}}

\newcommand{\dd}{\Sf{d}}

\newcommand{\sh}{\Sf{h}}

\newcommand{\su}{\Sf{u}}
\newcommand{\sv}{\Sf{v}}

\newcommand{\sx}{\Sf{x}}
\newcommand{\sy}{\Sf{y}}
\newcommand{\sz}{\Sf{z}}

\newcommand{\Extr}{\mathop{\operatorname{extr}}}
\newcommand{\imagUnit}{\mathrm{i}}
\newcommand{\Normal}{\mathcal{N}}

\newcommand{\MMSE}{\Sf{mmse}}
\newcommand{\Eye}{\bm{I}}

\newtheorem{lemma}{Lemma}
\newtheorem{result}{Result}

\begin{document}

\title{Information-Theoretic Limits for\\ Rectangular Spiked Matrix Models\\ with Rotationally Invariant Noise}

\author{
Qun Chen$^{1,\dagger}$, Rudy Skerk$^{2,\dagger}$, Jean Barbier$^{3,\ddagger}$, Francesco Camilli$^{4,\ddagger}$, Junjie Ma$^{5,\ddagger}$, Haochuan Zhang$^{1,\ddagger}$
}
\makeatletter
\def\blfootnote{\gdef\@thefnmark{}\@footnotetext}
\makeatother
\blfootnote{$^{1}$ School of Automation, Guangdong University of Technology, Guangzhou 510006, China.}
\blfootnote{$^{2}$ International School for Advanced Studies (SISSA), Via Bonomea 265, 34136 Trieste, Italy.}
\blfootnote{$^{3}$ The Abdus Salam International Centre for Theoretical Physics (ICTP), Strada Costiera 11, 34151 Trieste, Italy.}
\blfootnote{$^{4}$ Dipartimento di Matematica, Universit\`a di Bologna, Piazza di Porta San Donato 5, 40126 Bologna, Italy.}
\blfootnote{$^{5}$ State Key Laboratory of Mathematical Sciences, Academy of Mathematics and Systems Science, Chinese Academy of Sciences, Beijing 100190, China.}
\blfootnote{$^{\dagger}$ These authors contributed equally to this work (co-first authors). Emails: \texttt{c.prey.q@gmail.com}, \texttt{rskerk@sissa.it}.}
\blfootnote{$^{\ddagger}$ These authors are the corresponding authors, listed in alphabetical order. Emails: \texttt{jbarbier@ictp.it}, \texttt{francesco.camilli2@unibo.it}, \texttt{majunjie@lsec.cc.ac.cn}, \texttt{haochuan.zhang@gdut.edu.cn}}
\makeatletter
\def\blfootnote{\gdef\@thefnmark{}\@footnotetext}
\makeatother

\maketitle

\begin{abstract}

We study the information-theoretic limits of inference in rectangular rank-one spiked matrix models with additive rotationally invariant noise, thereby going beyond the standard assumption of independent noise entries. Using the replica method from statistical physics, we derive a variational formula for the mutual information between the signal and the data, together with a system of saddle point equations predicting the minimum mean squared error in estimating the rank-one spike. We further show that, in the Bayes-optimal setting, these saddle point equations coincide with the fixed point equations of the state evolution of a recently proposed orthogonal approximate message passing algorithm with Bayes-optimal denoisers. This equivalence provides an algorithmic interpretation of the replica prediction and supports the Bayes-optimality of orthogonal approximate message passing whenever the relevant fixed point is unique. Numerical evaluations of the saddle point equations for several rotationally invariant noise ensembles and signal priors, together with finite-size simulations of the algorithm, confirm the asymptotic predictions. They illustrate both regimes with a unique informative fixed point, where orthogonal approximate message passing is predicted to attain the minimum mean squared error, and regimes with multiple fixed points, where a statistical-to-computational gap exists.

\end{abstract}

\begin{IEEEkeywords}

Spiked matrix models, Rotationally invariant noise, Replica method, Approximate message passing

\end{IEEEkeywords}

\section{Introduction}

Modern machine learning relies on increasingly large high-dimensional datasets and on identifying the low-dimensional structures hidden within them. Principal Component Analysis (PCA) remains one of the simplest and most widely used spectral tools for this purpose, often serving as a preprocessing step. PCA identifies the leading eigenvectors of empirical covariance matrices (or the leading singular vectors of non-symmetric matrices), thereby producing a low-dimensional subspace that captures the dominant variance in the data. To analyse PCA and related spectral methods from an information-theoretic (IT) perspective, one often considers low-rank signals corrupted by full-rank noise. A canonical example is the rectangular rank-one spiked matrix model \cite{johnstone2001distribution,johnstone2009consistency}
\begin{align}
\bY&=\frac{\lambda}{N}\bu_{0}\bv_{0}^{\top}+\bZ_{0},\label{eq:model_def}
\end{align}
where $\bu_{0}\in\R^{N}$ and $\bv_{0}\in\R^{M}$ are a pair of unknown signals, $\lambda$ is a given signal-to-noise ratio (SNR), and $\bZ_{0}$ is an additive random noise matrix. The recoverability of the rank-one spike $\bu_{0}\bv_{0}^{\top}$ from the observation $\bY$ depends on the SNR, the signal priors, and the statistical structure of the noise. Beyond its role as a model for PCA, the rectangular spiked model arises in several high-dimensional inference problems. It provides an effective model for Gaussian mixture clustering \cite{lesieur2016phase} and captures spectral aspects of community detection in stochastic block models \cite{deshpande2015asymptotic,Lelarge2019Fundamental}. Related rectangular spiked formulations also appear in financial data analysis \cite{plerou2002random,bouchaud2009financial} and rectangular community-detection problems \cite{krzakala2013spectral,abbe2015exact}. Low-rank-plus-noise structures also arise in analyses of feature learning in two-layer neural networks \cite{ba2022highdimensionalasymptoticsfeaturelearning,dandi2023twolayerneuralnetworkslearn,cui2024asymptoticsfeaturelearningtwolayer}. These examples motivate a unified study of theoretical limits and algorithmic performance in the rectangular setting.


Model \eqref{eq:model_def} generalizes the classical setting in which both $\bZ_{0}$ and $\bv_{0}$ are independently Gaussian distributed. Specifically, it extends the well-studied Spiked-Wishart (or spiked covariance) model \cite{johnstone2001distribution} by interpreting both $\bu_{0}$ and $\bv_{0}$ as signals with general prior laws and allowing $\bZ_{0}$ to be rotationally invariant (RI). For model \eqref{eq:model_def}, the performance of the spectral PCA estimator as a function of the SNR is governed by the Baik, Ben Arous, P\'ech\'e (BBP) transition \cite{baik2005phase,feral2007largest,paul2007asymptotics,bai2010spectral,benaych2011eigenvalues,benaych2012singular}: above a critical SNR, an outlier singular value of $\bY$ separates from the bulk of singular values associated with the noise, and the corresponding singular vectors become correlated with the ground-truth signal pair; below this critical SNR, no outlier appears, and the correlation between any singular vector of the data and the signals vanishes asymptotically as $N\to\infty$. Beyond spectral methods, an Approximate Message Passing (AMP) algorithm for inference in model \eqref{eq:model_def} has been developed in \cite{rangan2012iterative,deshpande2014information,lesieur2017constrained,montanari2021estimation}. More broadly, AMP-type algorithms form a powerful class of Bayesian iterative methods that can exploit prior information in high-dimensional inference problems \cite{donoho2009message,dia2016mutual,krzakala2016mutual,Ma2017OAMP,maillard2019hightemp,fan2022approximate,wang2022universality,dudeja2023universality,lovig2025universality,Chen2026orthogonal,dudeja2026optimality}. A key feature of these algorithms is that their high-dimensional dynamics are described by low-dimensional state evolution (SE) recursions, which provide exact asymptotic characterizations of their algorithmic performance \cite{bayati2011dynamics}.

For spiked matrix models, the central IT questions are to characterize the fundamental limits of inference, such as the minimum mean square error (MMSE), and to determine whether efficient algorithms, such as AMP, can attain them. Although rigorous analyses of these limits are available in \cite{dia2016mutual,BarbierM17a,miolane2017fundamental,Lelarge2019Fundamental,macris2020all,mourrat2020hamilton}, they primarily concern Gaussian spiked settings. The replica method, originating in statistical physics \cite{mezard1987spin,mezard2009information}, has therefore been adopted by the information-theory community as a more general approach that provides non-rigorous but tractable predictions of these limits \cite{lesieur2015mmse}. When no algorithmic gap is present, these IT predictions are consistent with the SE predictions of the AMP algorithms \cite{rangan2012iterative,deshpande2014information,montanari2021estimation}, thus supporting the optimality of AMP in such settings.

Although the fundamental limits of spiked matrix models are now well understood, including in tensorial cases \cite{lesieur2017statistical,barbier2017layered,barbier2019mutual_even,reeves2020information}, these results rely strongly on the Gaussian, or more generally entrywise independent, structure of the noise. Some universality phenomena extend beyond the i.i.d.\ setting; for example, \cite{Guionnet_inhomogeneous_noise} shows that a broad class of spiked models with inhomogeneous noise can be mapped to equivalent Gaussian models. However, such reductions no longer apply to genuinely correlated noise, where non-trivial dependencies across matrix entries can substantially affect the spectral structure of the observation. RI random noise ensembles offer a tractable way to model such correlations. In the rectangular setting, rotational invariance means that the left and right singular vector matrices of the noise are Haar distributed, i.e., uniformly drawn from the orthogonal groups of the appropriate dimensions, while the singular values encode its spectral structure. The model thus departs from entrywise independence while remaining analytically tractable, with the final results depending on the noise distribution only through its limiting spectral distribution. Such ensembles provide a classical generalization of i.i.d.\ random matrices and a natural framework for studying correlated noise in high-dimensional inference \cite{Livan_2018,potters2020first}.

A subset of the present authors initiated the analysis of IT and algorithmic limits for spiked matrix models with non-independent RI noise in the simpler symmetric setting \cite{barbier2023fundamental,barbier2025information}. In particular, these works introduced an AMP algorithm that could strictly outperform the AMP algorithm of \cite{fan2022approximate}, which had previously been believed to be optimal. Subsequent work developed an orthogonal AMP (OAMP) algorithm with a rigorous SE description for the same RI noise setting \cite{dudeja2026optimality}. Its SE predictions agree with the replica-based IT characterization of \cite{barbier2023fundamental,barbier2025information}, yielding a coherent picture of the symmetric RI model. For the asymmetric/rectangular setting, recent work has addressed the algorithmic side by developing an OAMP framework for rank-one spiked models with general RI noise and establishing a rigorous SE characterization \cite{Chen2026orthogonal}. However, the corresponding IT characterization remains missing. This gap motivates the present work.\\

\textit{Contributions}. First, we derive a variational expression for the asymptotic mutual information between the signal pair and the observation of the rectangular spiked matrix model \eqref{eq:model_def}. Under the replica symmetric (RS) ansatz \cite{barbier_strong_replica_symmetry}, the variational parameters, known as order parameters in the replica method literature, are shown to satisfy the saddle point equations that predict the MMSE on the signal. Second, we show that in the Bayes-optimal setting we consider, these saddle point equations coincide with the state evolution fixed point equations for the OAMP algorithm \cite{Chen2026orthogonal} with a Bayes-optimal denoiser. This equivalence gives an algorithmic interpretation of the replica prediction and supports the Bayes-optimality of OAMP in regimes where the relevant fixed point is unique. Third, we develop a new approach to evaluate the rectangular inhomogeneous spherical integral. Unlike the symmetric case \cite{barbier2023fundamental,barbier2025information}, which involves a single Haar-distributed matrix, the asymmetric integral requires handling two Haar matrices. Fourth, we present an example with multiple solutions to the saddle point equations. In this example, a statistical-to-computational gap is observed between the MSE (theoretically predicted by the replica method) and the MSE (algorithmically achievable through OAMP).

\textit{Status of the Results}. We distinguish between proved statements and replica-level predictions. Lemmas are proved, whereas Results are obtained either through the replica method under the RS ansatz or through saddle point evaluations of high-dimensional integrals. These derivations involve standard non-rigorous steps: exchanging the replica $n\to0$ and large size $N\to\infty$ limits, analytically continuing all formulas from integer to real $n$, and interchanging integration and saddle point evaluation. These non-rigorous steps are conjectured to be exact.

\textit{Organization}. The remainder of this paper is organized as follows. Section \ref{Sec:Preliminary} introduces the notation and the necessary preliminaries on random matrix theory. Section \ref{Sec:Setup} formulates the rectangular rank-one spiked model and defines the associated free entropy, which is linked to the mutual information. Section \ref{Sec:Free_Entropy} states the main replica (i.e., information-theoretic) predictions for the free entropy and the saddle point equations. Section \ref{Sec:Equivalence} establishes the equivalence between the replica saddle point equations and the SE fixed point equations of OAMP, and discusses the resulting Bayes-optimal prediction. Numerical experiments compare IT and OAMP performance, including examples with multiple fixed points and a statistical-to-computational gap. Section \ref{Sec:RM} presents the replica derivation of the free entropy and the Bayes-optimal saddle point equations. Additional technical derivations are deferred to the Appendices.

\textit{Code Repository}. The source code and implementation details for all numerical evaluations are publicly available at \cite{git}.

\section{Preliminaries on Random Matrix Theory}\label{Sec:Preliminary}

Before introducing the random matrix preliminaries, we use, throughout this paper, plain letters to denote scalars, e.g., $u$ and $v$. Column vectors are denoted by boldface lowercase letters, e.g., $\bu$ and $\bv$, and row vectors by boldface lowercase letters with an arrow, e.g., $\vec{\bu}$ and $\vec{\bv}$. Matrices are denoted by boldface uppercase letters, e.g., $\bA$, and for $\bA$, $A_{i,j}$ denotes its $(i,j)$-th entry. $\bA^{\top}$ denotes its transpose, and $\Tr(\bA)$ denotes its trace. For matrices $\bA$ and $\bB$ with the same number of rows, $[\bA,\bB]$ denotes their horizontal concatenation. For a vector $\bv$, $\Diag(\bv)$ denotes the diagonal matrix with diagonal entries given by $\bv$. The sets of real and complex numbers are denoted by $\R$ and $\C$, respectively, and the imaginary unit is denoted by $\imagUnit$. We use $\E[\cdot]$ for expectation. The Gaussian density with mean $m$ and variance $v$ is denoted by $\Normal(x|m,v)\triangleq(2\pi v)^{-\frac{1}{2}}\exp[-\frac{1}{2v}(x-m)^{2}]$. The notation $\delta(\cdot)$ denotes the Dirac delta measure, $\Sf{Unif}(\mathbb{O}(N))$ denotes the Haar measure on the orthogonal group $\mathbb{O}(N)$, and $C$ denotes a generic constant whose value may change from line to line.
We next provide the necessary random matrix theory background. The readers are referred to \cite{potters2020first} if the notions are new.

\textit{Empirical Spectral Measure}. Consider a rectangular noise matrix $\bZ_{0}\in\R^{N\times M}$ ($N<M$) characterized by the singular value decomposition (SVD) $\bZ_{0}=\bP^{\top}[\Diag(\bs),\bm{0}]\bQ$, where the orthogonal matrices $\bP\sim \Sf{Unif}(\mathbb{O}(N))$ and $\bQ\sim \Sf{Unif}(\mathbb{O}(M))$ are independent and Haar distributed on their respective orthogonal groups (symbol $\sim$ means ``sampled from''). We assume throughout this paper that $\limsup_{N\to\infty}\|\bZ_{0}\|_{\Sf{op}}\le C$ almost surely for some deterministic constant $C$ independent of the dimensions $\{M,N\}$. In the ``thermodynamic limit'', i.e., the large size limit $M,N\to\infty$ with a fixed aspect ratio
\begin{align}
\alpha\triangleq\lim_{M,N\to\infty}\frac{M}{N}\in(1,\infty),\label{eq:alpha}
\end{align}
the empirical spectral distribution of the matrix $\bZ_{0}\bZ_{0}^{\top}$ is defined via the empirical measure
\begin{align}
\mu_{\bZ_{0}\bZ_{0}^{\top}}
& \triangleq \frac{1}{N}\sum_{i=1}^{N}\delta_{\gamma_{i}}
\to \mu(\gamma),
\end{align}
where $\gamma_{i}\triangleq s_{i}^{2}$ denotes the $i$-th eigenvalue of the matrix $\bZ_{0}\bZ_{0}^{\top}$, and $\mu(\cdot)$ is a deterministic probability measure to which the empirical spectral distribution converges weakly. This measure $\mu(\cdot)$ is compactly supported within $\R^{+}$ and admits a square-integrable spectral density, i.e., $\mu(\cdot)\in L^{2}(\R)$. The limiting spectral measure $\tilde{\mu}(\cdot)$ of the companion matrix $\bZ_{0}^{\top}\bZ_{0}$ is coupled to $\mu(\cdot)$ through the identity:
\begin{align}
\tilde{\mu}(\gamma)&=\frac{1}{\alpha}\mu(\gamma)+\left(1-\frac{1}{\alpha}\right)\delta(\gamma).
\end{align}

\textit{Spectral Transforms}. The Stieltjes transform $\cS_{\mu}(\cdot)$ of the limiting spectral density $\mu(\cdot)$ is defined as
\begin{align}
\cS_{\mu}(z)&\triangleq\int{\dd\gamma}\,\frac{\mu(\gamma)}{z-\gamma},\quad z\in\C\setminus\operatorname{supp}(\mu).\label{Eq:S_Tran}
\end{align}
For a function $J(\cdot)$, we use $\cS_{J;\mu}(\cdot)$ to denote the Stieltjes transform of the law of $J(\cdot)$ with $\gamma\sim\mu(\cdot)$:
\begin{align}
\cS_{J;\mu}(z)&\triangleq\E_{\gamma\sim\mu}\frac{1}{z-J(\gamma)}.
\end{align}

The $R$-transform of the limiting density $\mu(\cdot)$, denoted by $\cR_{\mu}(\cdot)$, is defined through the local functional inverse of its Stieltjes transform near the origin:
\begin{align}
\cR_{\mu}(z)&\triangleq\cS^{-1}_{\mu}(z)-\frac{1}{z}.\label{Eq:R_Tran}
\end{align}
Similarly, $\cR_{J;\mu}(\cdot)$ denotes the $R$-transform of the law of $J(\cdot)$ with $\gamma\sim\mu(\cdot)$. With a slight abuse of notation we denote $\cR_{\bA}(\cdot)$ the R-transform associated with the asymptotic spectral law of the random symmetric matrix $\bA$. It linearizes the so-called free additive convolution and satisfies $\cR_{a\bA}(z)=a\cR_{\bA}(az)$ and $\cR_{a\Eye}(z)=a$ \cite{tulino2004random}.

The Hilbert transform of the limiting density $\mu(\cdot)$, denoted by $\cH_{\mu}(\cdot)$, is defined by the Cauchy principal value integral
\begin{align}
\cH_{\mu}(x)&=\frac{1}{\pi}\Sf{P.V.}\int{\dd\gamma}\,\frac{\mu(\gamma)}{x-\gamma},\quad x\in\R.
\end{align}
As the complex argument $z$ approaches the real axis from the upper half-plane, the limiting values of the Stieltjes transform are given by the Sokhotski-Plemelj formula:
\begin{align}
\lim_{\varepsilon\to0^{+}}\cS_{\mu}(x+\imagUnit\varepsilon)&=\pi\cH_{\mu}(x)-\imagUnit\pi\mu(x),\quad x\in\R.
\end{align}
Consequently, the limiting spectral density $\mu(\cdot)$ and its Hilbert transform $\cH_{\mu}(\cdot)$ are recovered from these limiting values via the relations
\begin{align}
\cH_{\mu}(x)=\frac{1}{\pi}\lim_{\varepsilon\to0^{+}}\Re(\cS_{\mu}(x+\imagUnit\varepsilon)),\quad\mu(x)=-\frac{1}{\pi}\lim_{\varepsilon\to0^{+}}\Im(\cS_{\mu}(x+\imagUnit\varepsilon)).\label{Eq:Stieltjes_Tran}
\end{align}

\textit{Contour Integral Representation of a Resolvent}. The resolvent of the matrix $\bZ_{0}\bZ_{0}^{\top}$ is defined as
\begin{align}
\cG_{\bZ_{0}\bZ_{0}^{\top}}(z)\triangleq(z\bI-\bZ_{0}\bZ_{0}^{\top})^{-1},\quad z\in\C\setminus\{\gamma_{i}(\bZ_{0}\bZ_{0}^{\top})\}_{i\leq N}.
\end{align}
Let $\Gamma$ be a closed contour enclosing the spectrum of $\bZ_{0}\bZ_{0}^{\top}$. For any function $g(\cdot)$ that is holomorphic on and inside $\Gamma$, the Cauchy integral formula yields
\begin{align}
\Tr\,g(\bZ_{0}\bZ_{0}^{\top})&=\frac{1}{2\pi\imagUnit}\oint_{\Gamma}{\dd z}\,g(z)\Tr\,\cG_{\bZ_{0}\bZ_{0}^{\top}}(z).\label{Eq:Contour_Integral}
\end{align}

\textit{Woodbury Identity}. For $\bA\in\C^{d\times d}$, $\bB\in\C^{r\times r}$,
$\bU\in\C^{d\times r}$, and $\bV\in\C^{r\times d}$, the Woodbury identity reads
\begin{align}
(\bA+\bU\bB\bV)^{-1}&=\bA^{-1}-\bA^{-1}\bU(\bB^{-1}+\bV\bA^{-1}\bU)^{-1}\bV\bA^{-1},\label{Eq:Woodbury_Identity}
\end{align}
whenever the required inverses exist.

\section{The Rectangular Rank-One Spiked Matrix Model}\label{Sec:Setup}

We consider the rectangular rank-one spiked matrix model defined in \eqref{eq:model_def} under the thermodynamic limit $M,N\to\infty$ with fixed ratio $\alpha$ in \eqref{eq:alpha}. The ground-truth signal vectors $\bu_{0}\in\R^{N}$ and $\bv_{0}\in\R^{M}$ have i.i.d. entries drawn from the prior distributions $P_{\su}(\cdot)$ and $P_{\sv}(\cdot)$, respectively. Without loss of generality, the priors are normalized such that $\E[u_{0}^{2}]=1$ and $\E[v_{0}^{2}]=1$. The additive noise matrix $\bZ_{0}\in\R^{N\times M}$ is independent of $\{\bu_{0},\bv_{0}\}$ and bi-rotationally invariant, i.e., $\bZ_{0}$ and $\bU\bZ_{0}\bV$ are equal in law for arbitrary orthogonal matrices $\bU\in\mathbb{O}(N)$ and $\bV\in\mathbb{O}(M)$. We further specialize to the rectangular trace ensemble \cite{di2003rectangular,barbier2023fundamental,barbier2025information}, in which the distribution of $\bZ_{0}$ is specified by the spectral potential $V(\cdot)$ through
\begin{align}
\bZ_{0}&=\bP^{\top}[\Diag(\bs),\bm{0}]\bQ\sim C_{\sv}\exp\left(-\frac{N}{2}\Tr\,V(\bZ_{0}\bZ_{0}^{\top})\right),\label{eq:noise_density}
\end{align}
where $\bs=(s_{1},\ldots,s_{N})^{\top}$ denotes the singular values of $\bZ_{0}$ and $\bP$, $\bQ$ are the corresponding singular vector matrices. The analytic potential $V(\cdot)$ acts spectrally on $\bZ_{0}\bZ_{0}^{\top}$ as
\begin{align}
V(\bZ_{0}\bZ_{0}^{\top})&=\bP^{\top}\Diag(V(\bs^{2}))\bP=\bP^{\top}\begin{bmatrix}V(s_{1}^{2})&&\\&\ddots&\\&&V(s_{N}^{2})\end{bmatrix}\bP.\label{Eq:Potential_V}
\end{align}

Let $\{\bu_{a},\bv_{a}\}$ denote a candidate signal pair, where the subscript $a$ distinguishes it from the ground-truth pair $\{\bu_{0},\bv_{0}\}$. The corresponding factorized priors are $P_{\su}(\bu_{a})\triangleq\prod_{i=1}^{N}P_{\su}(u_{a,i})$ and $P_{\sv}(\bv_{a})\triangleq\prod_{i=1}^{M}P_{\sv}(v_{a,i})$. Given the observation matrix $\bY$, the posterior distribution of the candidate signal pair is, up to a normalization constant,
\begin{align}
\label{eq:posterior}
P(\bu_{a},\bv_{a}|\bY)&\propto P_{\su}(\bu_{a})P_{\sv}(\bv_{a})\exp\left(-\frac{N}{2}\Tr\,V(\bZ_{a}\bZ_{a}^{\top})\right),
\end{align}
where
\begin{align}
\bZ_{a}&\triangleq\bY-\frac{\lambda}{N}\bu_{a}\bv_{a}^{\top}.\label{Eq:Def_A}
\end{align}
The associated partition function is
\begin{align}
\Omega(\bZ_{0},\bY)&\triangleq\int{\dd P_{\su}(\bu_{a})\dd P_{\sv}(\bv_{a})}\,\exp\left(-\frac{N}{2}\left(\Tr\,V(\bZ_{a}\bZ_{a}^{\top})-\Tr\,V(\bZ_{0}\bZ_{0}^{\top})\right)\right).\label{Eq:PartitionFunction}
\end{align}

Then, the IT limits of inference (mutual information and MMSE) are deduced from the asymptotic free entropy $f$ below, and our main goal is therefore to compute it, recalling that $M$ grows with $N$ such that \eqref{eq:alpha} holds:
\begin{align}
f &\triangleq\lim_{N\to\infty}f_{N},
\quad 
f_{N} \triangleq \frac{1}{N}\E_{\bu_{0},\bv_{0},\bZ_{0}}\log\Omega(\bZ_{0},\bY)
.
\label{Eq:Free_Entropy}
\end{align}
It serves as the central quantity throughout the remainder of the paper. The factor $\exp(\frac{N}{2}\Tr\,V(\bZ_{0}\bZ_{0}^{\top}))$ in \eqref{Eq:PartitionFunction}, which does not depend on the integration variables, is needed for $f_{N}$ to have a finite limit (see Appendix \ref{Sec:C_y} and an analogous argument in \cite[Eq. (8)]{barbier2025information}). Without this subtraction, the exponent scales as $N^{2}$ and therefore the free entropy would not admit a finite limit.

Another central IT quantity is the mutual information between the signal pair and the observation,
\begin{align}
I&\triangleq\lim_{N\to\infty}I_{N},\quad I_{N}\triangleq\frac{1}{N}I\big(\bu_{0},\bv_{0};\bY\big).\label{Eq:MI_def}
\end{align} 
These two quantities are related by an exact identity, valid for every finite $N$:
\begin{align}
I&=-f,\quad I_{N}=-f_{N}.\label{Eq:MI_identity}
\end{align}
Indeed, by \eqref{eq:noise_density} the noise admits the density $C_{\sv}\exp(-\frac{N}{2}\Tr V(\bZ_{0}\bZ_{0}^{\top}))$ w.r.t. the Lebesgue measure on $\R^{N\times M}$, where $C_{\sv}$ depends only on $V(\cdot)$ and the dimensions $M$ and $N$. Hence the conditional density of the data given the signal pair, and the evidence, read
\begin{align}
P(\bY|\bu_{0},\bv_{0})&=C_{\sv}\exp\left(-\frac{N}{2}\Tr\,V(\bZ_{0}\bZ_{0}^{\top})\right),\\
P(\bY)&=C_{\sv}\int{\dd P_{\su}(\bu_{a})\dd P_{\sv}(\bv_{a})}\,\exp\left(-\frac{N}{2}\Tr\,V(\bZ_{a}\bZ_{a}^{\top})\right),
\end{align}
where in the first expression $\bZ_{0}=\bY-\frac{\lambda}{N}\bu_{0}\bv_{0}^{\top}$ is a deterministic function of $\{\bu_{0},\bv_{0},\bY\}$. Therefore
\begin{align}
NI_{N}&=\E\log\frac{P(\bY|\bu_{0},\bv_{0})}{P(\bY)}=-\E\log\Omega(\bZ_{0},\bY)=-Nf_{N},
\end{align} 
where the expectation is taken over the joint law of $(\bu_{0},\bv_{0},\bZ_{0})$. Note that the constant $C_{\sv}$ cancels, and the last equality follows from \eqref{Eq:Free_Entropy}. Identity \eqref{Eq:MI_identity} gives two immediate sanity checks. Since the mutual information is non-negative, $f_{N}\le 0$ for every $N$. At $\lambda=0$ one has $\bZ_{a}=\bY=\bZ_{0}$, so the integrand in \eqref{Eq:PartitionFunction} equals one, and $I_{N}=0$, as it must be since the data are then independent of the signal.

\section{Information-Theoretic Results: Predictions from the Replica Method}\label{Sec:Free_Entropy}

We now state the replica predictions for the asymptotic free entropy (and thus the mutual information) and the associated saddle point equations that yield the MMSE. Their complete derivation is given in Section \ref{Sec:RM}.

\begin{result}[Replica Symmetric Free Entropy]\label{Result:Free_Entropy}

The asymptotic free entropy $f$ of the rectangular rank-one spiked model in \eqref{eq:model_def} admits the replica symmetric variational characterization
\begin{align}
f&=\log\varphi_{\alpha,\lambda}+\psi_{\alpha,\lambda}+\Extr_{\mathcal{M}_{\Sf{RS}}}f_{\Sf{RS}}\label{eq:RSfreeEnt_extr}
\end{align}
Here, $f_{\Sf{RS}}$ is the replica symmetric variational functional
\begin{align}
f_{\Sf{RS}}&\triangleq\frac{1}{4\pi\imagUnit}\oint_{\Gamma_{\sx}}{\dd x}\,\Tr\hat{\bM}^{\top}(x)\bM(x)-\frac{1}{2}\log(1-m_{\su})-\frac{\alpha}{2}\log(1-m_{\sv})-\frac{m_{\su}}{2(1-m_{\su})}-\frac{\alpha\, m_{\sv}}{2(1-m_{\sv})}\nonumber\\
&\quad+\E_{u_{0},z_{\su}}\log\E_{u_{a}}\exp\left(\hat{m}_{\su}u_{0}u_{a}-\frac{1}{2}\hat{m}_{\su}u_{a}^{2}+\sqrt{\hat{m}_{\su}}u_{a}z_{\su}\right)+\alpha\E_{v_{0},z_{\sv}}\log\E_{v_{a}}\exp\left(\hat{m}_{\sv}v_{0}v_{a}-\frac{1}{2}\hat{m}_{\sv}v_{a}^{2}+\sqrt{\hat{m}_{\sv}}v_{a}z_{\sv}\right)\nonumber\\
&\quad-\frac{\alpha-1}{2}\left(\log\tilde{\cG}_{\sv}(0)+\frac{\tilde{m}_{\sv}-\tilde{\cM}_{\sv}^{2}(0)}{\tilde{\cG}_{\sv}(0)}\right)-\frac{1}{2}\E_{s^{2}\sim\mu}\left[\log\tilde{\cG}_{\su}(s)+\frac{\tilde{m}_{\su}-\tilde{\cM}_{\su}^{2}(s)}{\tilde{\cG}_{\su}(s)}+\frac{\bar{\cN}(s)}{\bar{\cD}(s)}-\frac{\tilde{M}_{\Sf{uv}}^{2}(s)}{\tilde{\cG}_{\su}(s)}+\log\bar{\cD}(s)\right].\label{eq:RSfreeEnt}
\end{align}
The notation $\Extr_{\mathcal{M}_{\Sf{RS}}}(\cdot)$ denotes extremization w.r.t. the scalar parameters $\mathcal{M}_{\Sf{RS}}\triangleq\{m_{\su},m_{\sv},\hat{m}_{\su},\hat{m}_{\sv}\}$. For the $\su$-channel, $z_{\su}\sim\cN(0,1)$ and $u_{0},u_{a}\overset{\text{i.i.d.}}{\sim}P_{\su}(\cdot)$; the same convention is used for the $\sv$-channel. The measure $\mu(\cdot)$ denotes the limiting spectral measure of $\bZ_{0}\bZ_{0}^{\top}$. Let $\Gamma_{\sx}$ be a fixed closed contour in $\C$, independent of $M$ and $N$, chosen as described in Section \ref{Sec:Block_Symmetrization}. For ease of reference, the final expressions for the parameters and functions needed to evaluate \eqref{eq:RSfreeEnt_extr}-\eqref{eq:RSfreeEnt} are collected in Appendix \ref{App:RS_Coupling_Functions}.

\end{result}

The asymptotic mutual information follows from \eqref{Eq:MI_identity}, with $f$ given by the variational formula in Result \ref{Result:Free_Entropy}. In contrast to the saddle point equations in Result \ref{Result:Saddle_Point}, which are unaffected by the additive constants $\log\varphi_{\alpha,\lambda}$ and $\psi_{\alpha,\lambda}$, the mutual information requires these constants to be retained. Indeed, they are not free parameters but explicit quantities computed in Appendix \ref{Sec:C_y} and \eqref{eq:free_entropy_additive_constant}. They depend only on $\alpha$, $\lambda$, and the limiting spectral measure $\mu(\cdot)$. Note that, for standard Gaussian noise, the mutual information (i.e., negative free entropy) determines the MMSE through the I-MMSE relation \cite{guo2005mutual}. For general RI noise as in the present paper, no such relation is available. Nevertheless, the stationary values of the order parameters yield the MMSE directly.

The extremization in \eqref{eq:RSfreeEnt_extr} is associated with a system of saddle point equations. Under the Bayes-optimal setting, the Nishimori identities in Lemma \ref{rmk:Nishimori} reduce the subset of these equations needed to determine the MMSE to the following scalar system.

Let us observe that at $\lambda=0$ one has $\cJ_{\su}=\cJ_{\sv}=0$, so that the saddle point equations of Result~\ref{Result:Saddle_Point} give $\hat{m}_{\su}=\hat{m}_{\sv}=0$ and hence $m_{\su}=\E[u_{0}]^{2}$, $m_{\sv}=\E[v_{0}]^{2}$. Substituting into \eqref{eq:RSfreeEnt}, the contour term vanishes since $\hat{\bM}\propto\lambda^{2}$, the scalar-channel terms vanish since $\hat{m}_{\su}=\hat{m}_{\sv}=0$, and the remaining terms cancel pairwise, giving $f_{\Sf{RS}}=0$. Moreover, $\log\varphi_{\alpha,0}=\psi_{\alpha,0}=0$, because at $\lambda=0$ both integrands in \eqref{eq:free_entropy_additive_constant} and in $\log\varphi_{\alpha,\lambda}$ reduce to $xV'(x^{2})$ times a constant, which is analytic inside $\Gamma_{\sx}$. Hence $f=0$, in agreement with the identity $I=-f$ and with the fact that the data carry no information on the signal at $\lambda=0$.

\begin{result}[Saddle Point Equations and Minimum Mean Square Error]\label{Result:Saddle_Point}

Under the Bayes-optimal setting, the scalar saddle point equations for $\mathcal{M}_{\Sf{RS}}$ take the form
\begin{align}
m_{\su}&=\E_{u_{0},z_{\su}}[u_{0}\langle u_{a}\rangle],\label{Eq:RM_m_u}\\
m_{\sv}&=\E_{v_{0},z_{\sv}}[v_{0}\langle v_{a}\rangle],\label{Eq:RM_m_v}\\
\hat{m}_{\su}&=-\cR_{\cJ_{\su};\mu}(1-m_{\su}),\label{Eq:RM_hm_u}\\
\hat{m}_{\sv}&=-\cR_{\cJ_{\sv};\tilde{\mu}}(1-m_{\sv}).\label{Eq:RM_hm_v}
\end{align}
Here, $\cR_{\cJ;\nu}(\cdot)$ denotes the $R$-transform of the law of $\cJ(\cdot)$ for $\gamma\sim\nu(\cdot)$, with the functions $\cJ_{\su}(\cdot)$ and $\cJ_{\sv}(\cdot)$ given in \eqref{Eq:Ju_x}-\eqref{Eq:Jv_x} below. The random variables follow the same conventions as in Result \ref{Result:Free_Entropy}. In particular, the scalar posterior expectation for the $\su$-channel is
\begin{align}
\langle u_{a}\rangle&\triangleq\frac{\E_{u_{a}}\left[u_{a}\exp\left(\hat{m}_{\su}u_{0}u_{a}-\frac{1}{2}\hat{m}_{\su}u_{a}^{2}+\sqrt{\hat{m}_{\su}}u_{a}z_{\su}\right)\right]}{\E_{u_{a}}\exp\left(\hat{m}_{\su}u_{0}u_{a}-\frac{1}{2}\hat{m}_{\su}u_{a}^{2}+\sqrt{\hat{m}_{\su}}u_{a}z_{\su}\right)}.\label{eq:scalPosterior}
\end{align}
The $\sv$-channel is defined analogously. The functions entering \eqref{Eq:RM_hm_u}-\eqref{Eq:RM_hm_v} are given by
\begin{align}
\cJ_{\su}(x)&=-\cJ_{\Sf{uu}}(x)+\frac{\cJ_{\Sf{uv}}(x)}{(1-m_{\sv})^{-1}-\hat{m}_{\sv}+\cJ_{\Sf{vv}}(x)},\label{Eq:Ju_x}\\
\cJ_{\sv}(x)&=-\cJ_{\Sf{vv}}(x)+\frac{\cJ_{\Sf{uv}}(x)}{(1-m_{\su})^{-1}-\hat{m}_{\su}+\cJ_{\Sf{uu}}(x)},\label{Eq:Jv_x}\\
\cJ_{\Sf{uu}}(x)&=\lambda^{2}\left[x\E_{\gamma\sim\mu}\frac{V^{\prime}(x)-V^{\prime}(\gamma)}{x-\gamma}+\E_{\gamma\sim\mu}V^{\prime}(\gamma)+(\alpha-1)V^{\prime}(x)\right],\label{Eq:Juu_x}\\
\cJ_{\Sf{vv}}(x)&=\lambda^{2}\left(x\E_{\gamma\sim\mu}\frac{V^{\prime}(x)-V^{\prime}(\gamma)}{x-\gamma}+\E_{\gamma\sim\mu}V^{\prime}(\gamma)\right),\label{Eq:Jvv_x}\\
\cJ_{\Sf{uv}}(x)&=\lambda^{2}x(V^{\prime}(x))^{2}.\label{Eq:Juv_x}
\end{align}

Let $\operatorname{Sol}_{\Sf{RS}}\triangleq\{(m_{\su,\ell},m_{\sv,\ell},\hat{m}_{\su,\ell},\hat{m}_{\sv,\ell})\}_{\ell=1}^{L}$ denote the scalar stationary solutions of \eqref{Eq:RM_m_u}-\eqref{Eq:RM_hm_v}, where $L$ denotes the number of solutions. Among these solutions, the one with the largest value of $f_{\Sf{RS}}$ is selected as the IT solution. Let $\ell_{\Sf{IT}}$ denote its index and set $m_{\su}^{*}\triangleq m_{\su,\ell_{\Sf{IT}}}$ and $m_{\sv}^{*}\triangleq m_{\sv,\ell_{\Sf{IT}}}$. The selected values $m_{\su}^{*}$ and $m_{\sv}^{*}$ are the asymptotic overlaps reported in the numerical experiments:
\begin{align}
m_{\su}^{*}&=\lim_{N\to\infty}\frac{1}{N}\E|\bu_{0}^{\top}\bu_{a}|,\quad m_{\sv}^{*}=\lim_{N\to\infty}\frac{1}{M}\E|\bv_{0}^{\top}\bv_{a}|,\label{Eq:overlaps}
\end{align}
where $\{\bu_{a},\bv_{a}\}$ is a sample from the posterior \eqref{eq:posterior} given $\bY$, and the expectation is over the joint law of $\{\bu_{0},\bv_{0},\bZ_{0}\}$ and of this sample. The absolute values account for the invariance of \eqref{eq:model_def} under the joint sign flip $(\bu_{0},\bv_{0})\mapsto(-\bu_{0},-\bv_{0})$: for priors symmetric under this flip, $\E[\bu_{a}|\bY]=\E[\bv_{a}|\bY]=\boldsymbol{0}$, so that the overlap with the posterior mean vanishes even when the signal is partially recovered. Equivalently, $m_{\su}^{*}$ and $m_{\sv}^{*}$ are the overlaps obtained in the presence of an infinitesimal amount of side information breaking the symmetry. To avoid this ambiguity, as in \cite{barbier2023fundamental}, we quantify the Bayes-optimal performance through the error on the rank-one spike, which is invariant under the flip. Under the normalization $\E[u_{0}^{2}]=\E[v_{0}^{2}]=1$, the Nishimori identities give
\begin{align}
\lim_{N\to\infty}\frac{1}{NM}\E\left\|\bu_{0}\bv_{0}^{\top}-\E[\bu_{a}\bv_{a}^{\top}|\bY]\right\|_{F}^{2}&=1-m_{\su}^{*}m_{\sv}^{*}.\label{Eq:MMSEs}
\end{align}
The errors on the individual vectors, defined up to the global sign, are correspondingly $1-m_{\su}^{*}$ and $1-m_{\sv}^{*}$.

\end{result}

The equations \eqref{Eq:RM_m_u}-\eqref{Eq:RM_m_v} admit a mean-field interpretation in terms of coupled scalar Gaussian channels. For $\hat{m}_{\sv}>0$, the $\sv$-channel is represented by
\begin{align}
\mu_{\sv}&=v_{0}+\hat{m}_{\sv}^{-\frac{1}{2}}z_{\sv},\label{Eq:Mu_v}
\end{align}
where $\mu_{\sv}$ denotes the effective scalar observation used to estimate $v_{0}$, and $\hat{m}_{\sv}$ plays the role of the effective SNR. An analogous characterization holds for the $\su$-channel.

The grid-based residual minimization procedure for obtaining the stationary solutions in $\operatorname{Sol}_{\Sf{RS}}$ is described in Appendix \ref{Sec:Numerical_Solution}. For each scalar stationary solution in $\operatorname{Sol}_{\Sf{RS}}$, the scalar parameters alone are not sufficient to evaluate $f_{\Sf{RS}}$ in \eqref{eq:RSfreeEnt}. The matrix-valued functions $\{\tilde{\bM},\bM,\hat{\bM}\}$ collected in Appendix \ref{App:RS_Coupling_Functions} are also required. These functions satisfy the closed self-consistency map $\tilde{\bM}\to\bM\to\hat{\bM}\to\tilde{\bM}$ and must therefore be determined jointly from a fixed point of this map before the free entropy can be evaluated. The numerical solution of this map and the subsequent evaluation of the free entropy are detailed in Appendix \ref{Sec:Branch_Selection}.

When the above fixed point equations admit multiple solutions, the IT prediction is obtained by evaluating $f_{\Sf{RS}}$ on every solution and selecting the maximizer. We denote by $\lambda_{\Sf{IT}}$ the smallest SNR at which the global maximizer of $f_{\Sf{RS}}$ is a solution with non-zero overlaps $\{m_{\su}^{*}, m_{\sv}^{*}\}$. Below $\lambda_{\Sf{IT}}$ a solution with non-zero overlaps may already exist while having a smaller $f_{\Sf{RS}}$; it is then not the IT prediction, and we call it metastable.

Above $\lambda_{\Sf{IT}}$ the saddle point equations may still admit several solutions. In the examples of Section V-D, three stationary solutions are detected: $\{m_{\su}^{*},m_{\sv}^{*}\}$ and two additional solutions $\{m_{\su,\ell}^{\Sf{meta}},m_{\sv,\ell}^{\Sf{meta}}\}_{\ell=1,2}$. All three are stationary solutions of $f_{\Sf{RS}}$, but the IT solution attains the largest free entropy among them. Selecting between them is therefore not a matter of solving the saddle point equations, which all of them satisfy, but of comparing the values of $f_{\Sf{RS}}$ they attain.

The fixed-point iteration described in Appendix \ref{Sec:Numerical_Solution} performs no such comparison. It starts from $\hat{m}_{\su}=\hat{m}_{\sv}=\hat{m}_{\rm init}$, where $\hat{m}_{\rm init}$ is a very small positive number, and converges to whichever solution the recursion reaches, without evaluating $f_{\Sf{RS}}$. We therefore denote by $\lambda_{\Sf{alg}}$ the smallest SNR at which this iteration converges to $\{m_{\su}^{*},m_{\sv}^{*}\}$. For $\lambda>\lambda_{\Sf{alg}}$ the two selections agree. For $\lambda\in(\lambda_{\Sf{IT}},\lambda_{\Sf{alg}})$ they disagree: the IT prediction is $\{m_{\su}^{*},m_{\sv}^{*}\}$, while the iteration converges to $\{m_\su^{\Sf{meta}}, m_\sv^{\Sf{meta}}\}$. The numerical examples in Section \ref{Sec:Bayes_Optimality_Prediction} show that on this interval the state evolution of OAMP selects the same branch as the iteration, so that the error predicted for OAMP remains strictly above the replica-predicted MMSE; the interval is then the statistical-to-computational gap.

\textit{Consistency Check in the Gaussian Noise Setting}. As a consistency check, we verify that our generalized saddle point equations recover the classical spiked-Wishart equations under i.i.d. Gaussian noise \cite{miolane2017fundamental}. We first recall the known fixed point equations in this case and then show that our generalized equations reduce to the same form. The equations in \cite{miolane2017fundamental} take the following simple linear form:
\begin{align}
\hat{m}_{\su}=\alpha\lambda^{2}m_{\sv},\quad\hat{m}_{\sv}=\lambda^{2}m_{\su},\label{eq:S_W_SP}
\end{align}
while \eqref{Eq:RM_m_u}-\eqref{Eq:RM_m_v} remain unchanged. In our setting, Gaussian noise corresponds to the potential $V(x)=x$, and hence $V'(x)=1$. Therefore, the functions $\{\cJ_{\su},\cJ_{\sv}\}$ defined in \eqref{Eq:Ju_x}-\eqref{Eq:Jv_x} reduce to
\begin{align}
\cJ_{\su}(x)&=-\alpha\lambda^{2}+\frac{\lambda^{2}}{(1-m_{\sv})^{-1}-\hat{m}_{\sv}+\lambda^{2}}x,\quad\cJ_{\sv}(x)=-\lambda^{2}+\frac{\lambda^{2}}{(1-m_{\su})^{-1}-\hat{m}_{\su}+\alpha\lambda^{2}}x.
\end{align}

The evaluation of the conjugate parameters uses the additivity of the $R$-transform for freely independent matrices, together with its scaling and constant shift properties, namely $\cR_{\bA+\bB}(z)=\cR_{\bA}(z)+\cR_{\bB}(z)$, $\cR_{a\bA}(z)=a\cR_{\bA}(az)$, and $\cR_{a\Eye}(z)=a$ \cite{tulino2004random}. Together with the standard $R$-transforms for i.i.d. Gaussian noise, $\cR_{\mu}(z)=\frac{\alpha}{1-z}$ and $\cR_{\tilde{\mu}}(z)=\frac{1}{1-\alpha z}$ \cite{tulino2004random,potters2020first}, the conjugate parameter $\hat{m}_{\su}$ in \eqref{Eq:RM_hm_u} becomes
\begin{align}
\hat{m}_{\su}&=-\cR_{\cJ_{\su};\mu}(1-m_{\su})\nonumber\\
&=-\left(-\alpha\lambda^{2}+\frac{\lambda^{2}}{(1-m_{\sv})^{-1}-\hat{m}_{\sv}+\lambda^{2}}\cR_{\mu}\left(\frac{\lambda^{2}}{(1-m_{\sv})^{-1}-\hat{m}_{\sv}+\lambda^{2}}(1-m_{\su})\right)\right)\nonumber\\
&=\alpha\lambda^{2}-\frac{\alpha\lambda^{2}}{(1-m_{\sv})^{-1}-\hat{m}_{\sv}+\lambda^{2}m_{\su}}.\label{Eq:hm_u}
\end{align}
Similarly, the conjugate parameter $\hat{m}_{\sv}$ in \eqref{Eq:RM_hm_v} becomes
\begin{align}
\hat{m}_{\sv}&=-\cR_{\cJ_{\sv};\tilde{\mu}}(1-m_{\sv})\nonumber\\
&=-\left(-\lambda^{2}+\frac{\lambda^{2}}{(1-m_{\su})^{-1}-\hat{m}_{\su}+\alpha\lambda^{2}}\cR_{\tilde{\mu}}\left(\frac{\lambda^{2}}{(1-m_{\su})^{-1}-\hat{m}_{\su}+\alpha\lambda^{2}}(1-m_{\sv})\right)\right)\nonumber\\
&=\lambda^{2}-\frac{\lambda^{2}}{(1-m_{\su})^{-1}-\hat{m}_{\su}+\alpha\lambda^{2}m_{\sv}}.\label{Eq:hm_v}
\end{align}
We can verify that any solution to \eqref{eq:S_W_SP} also satisfies \eqref{Eq:hm_u}-\eqref{Eq:hm_v}, confirming that the classical spiked-Wishart fixed points are recovered by our generalized equations. This consistency can also be verified through the equivalence established in Section \ref{Sec:Equivalence} between the proposed saddle point equations and the SE fixed point equations of OAMP. Proposition 3 of \cite{Chen2026orthogonal} proves that these SE fixed point equations of OAMP reduce to the classical result of \cite{miolane2017fundamental} under i.i.d. Gaussian noise. Therefore, the same reduction of our saddle point equations follows from this equivalence.

\section{Algorithmic Results:  Bayes-Optimality of OAMP}\label{Sec:Equivalence}

This section establishes an exact equivalence between the replica saddle point equations in Result \ref{Result:Saddle_Point} and the fixed point equations associated with the rigorous SE of OAMP \cite{Chen2026orthogonal}. We first review the OAMP recursion and the SE fixed point equations, and then derive a nontrivial spectral reformulation of the replica saddle point equations that allows a direct algebraic comparison between the two sets of equations. The resulting equivalence provides an algorithmic interpretation of the replica saddle point equations. In particular, when the SE admits a unique fixed point, the asymptotic performance of OAMP, as characterized by the SE, coincides with the IT limit predicted by the replica analysis. Consequently, \emph{OAMP is conjectured to be optimal in regimes where no statistical-to-computational gap exists}.

\subsection{Review of OAMP and Its State Evolution}

To make the comparison self-contained, we briefly review the OAMP framework and its SE fixed point equations. The OAMP algorithm introduced in \cite{Chen2026orthogonal} addresses the rank-one rectangular matrix estimation problem with RI noise specified in \eqref{eq:model_def}. The algorithm generates sequences of iterates $\{\bu_{t}\}$ and $\{\bv_{t}\}$ according to
\begin{align}
\bu_{t}&=F_{t}(\bY\bY^{\top})f_{t}(\bu_{\le t-1})+\tilde{F}_{t}(\bY\bY^{\top})\bY g_{t}(\bv_{\le t-1}),\\
\bv_{t}&=G_{t}(\bY^{\top}\bY)g_{t}(\bv_{\le t-1})+\tilde{G}_{t}(\bY^{\top}\bY)\bY^{\top}f_{t}(\bu_{\le t-1}),
\end{align}
where $f_{t}(\cdot)$ and $g_{t}(\cdot)$ are entrywise denoisers. The matrix denoisers $F_{t}(\cdot)$, $\tilde{F}_{t}(\cdot)$, $G_{t}(\cdot)$, and $\tilde{G}_{t}(\cdot)$ are spectral functions that act on the eigenvalues of their matrix arguments while preserving the corresponding eigenvectors. In particular, $F_t(\cdot)$ and $G_t(\cdot)$ are trace-free, whereas $f_t(\cdot)$ and $g_t(\cdot)$ are divergence-free. We refer the reader to \cite{Chen2026orthogonal} for the precise definitions and the optimal choices of these functions.

In the thermodynamic limit, the performance of OAMP is characterized by its SE. Since our goal is to compare the SE fixed point equations with the saddle point equations in Result \ref{Result:Saddle_Point}, we work directly at the fixed point level and suppress the iteration indices. Following \cite{Chen2026orthogonal}, an SE fixed point is characterized by the four parameters $\{\rho_{\su},\rho_{\sv},w_{\su},w_{\sv}\}$, which satisfy
\begin{align}
\rho_{\su}&=\frac{1}{\MMSE_{\su}(w_{\su})}-\frac{1}{1-w_{\su}},\label{Eq:OAMP_ru}\\
\rho_{\sv}&=\frac{1}{\MMSE_{\sv}(w_{\sv})}-\frac{1}{1-w_{\sv}},\label{Eq:OAMP_rv}\\
w_{\su}&=1-\frac{1-\E_{\gamma\sim\mu} P(\gamma;\rho_{\su},\rho_{\sv})}{\E_{\gamma\sim\mu} P(\gamma;\rho_{\su},\rho_{\sv})}\frac{1}{\rho_{\su}},\label{Eq:OAMP_wu}\\
w_{\sv}&=1-\frac{1-\E_{\gamma\sim\tilde{\mu}} Q(\gamma;\rho_{\su},\rho_{\sv})}{\E_{\gamma\sim\tilde{\mu}} Q(\gamma;\rho_{\su},\rho_{\sv})}\frac{1}{\rho_{\sv}}.\label{Eq:OAMP_wv}
\end{align}
Here, the functions $P(\cdot)$ and $Q(\cdot)$ are defined by
\begin{align}
P(x;\rho_{\su},\rho_{\sv})&\triangleq\frac{x(\rho_{\sv}\varphi_{2}(x)+\delta)}{D(x)},\label{Eq:P_Func}\\
Q(x;\rho_{\su},\rho_{\sv})&\triangleq\begin{cases}\frac{\delta x(\rho_{\su}\varphi_{1}(x)+1)}{D(x)}&x>0\\\frac{\delta}{\rho_{\sv}\varphi_{2}(0)+\delta}&x=0\end{cases},\label{Eq:Q_Func}\\
D(x)&\triangleq x(\rho_{\su}\varphi_{1}(x)+1)(\rho_{\sv}\varphi_{2}(x)+\delta)-\rho_{\su}\rho_{\sv}\varphi_{3}^{2}(x),\\
\varphi_{1}(x)&\triangleq\frac{1+\delta\theta^{2}\pi^{2}x(\cH_{\mu}^{2}(x)+\mu^{2}(x))}{C(x)},\label{Eq:Rho_1}\\
\varphi_{2}(x)&\triangleq\begin{cases}\delta\varphi_{1}(x)+\theta(1-\delta)\frac{1}{x}\varphi_{3}(x)&x>0\\\frac{\delta}{1+(1-\delta)\theta^{2}\E_{\gamma\sim\mu}\frac{1}{\gamma}}&x=0\end{cases},\label{Eq:Rho_2}\\
\varphi_{3}(x)&\triangleq\frac{\theta(1-\delta+2\delta\pi x\cH_{\mu}(x))}{C(x)},\label{Eq:Rho_3}\\
C(x)&\triangleq\Big\{\theta^{2}\pi\mu(x)\Big[(1-\delta)+2\delta\pi x\cH_{\mu}(x)\Big]\Big\}^{2}+\Big\{1-\theta^{2}\Big[(1-\delta)\pi\cH_{\mu}(x)-\delta\pi^{2}x\left(\mu^{2}(x)-\cH_{\mu}^{2}(x)\right)\Big]\Big\}^{2}.\label{Eq:C_Func}
\end{align}

Throughout the comparison below, the parameters $\delta$ and $\theta$ in \eqref{Eq:P_Func}-\eqref{Eq:C_Func} are given by
\begin{align}
\delta\triangleq\frac{1}{\alpha},\quad\theta\triangleq\sqrt{\alpha}\lambda.\label{Eq:Mapping}
\end{align}

\subsection{Equivalence with the SE Fixed Point Equations of OAMP}

We first rewrite the replica saddle point equations \eqref{Eq:RM_m_u}-\eqref{Eq:RM_hm_v} in terms of $\mu(\cdot)$ and $\cH_{\mu}(\cdot)$, and then directly compare the resulting equations with the SE fixed point equations in \eqref{Eq:OAMP_ru}-\eqref{Eq:OAMP_wv}.\\

\subsubsection{Spectral Reformulation of the Replica Saddle Point Equations}

The SE fixed point equations of OAMP are expressed in terms of the limiting spectral density $\mu(\cdot)$ and its Hilbert transform $\cH_{\mu}(\cdot)$ \cite{Chen2026orthogonal}. In contrast, the saddle point equations in Result \ref{Result:Saddle_Point} are written through the matrix potential $V(\cdot)$. To compare the two formulations, we use Lemmas \ref{Lemma:Hilbert_Potential_Identities} and \ref{Lemma:Spectral_Mapping} below to rewrite the functions in Result \ref{Result:Saddle_Point} in terms of $\mu(\cdot)$ and $\cH_{\mu}(\cdot)$. We consider the last two saddle point equations \eqref{Eq:RM_hm_u}-\eqref{Eq:RM_hm_v}. Since the calculations for the two channels are analogous, we give the details for \eqref{Eq:RM_hm_v}. By the definition of the R-transform in \eqref{Eq:R_Tran}, \eqref{Eq:RM_hm_v} is equivalent to
\begin{align}
\hat{m}_{\sv}&=-\cS^{-1}_{\cJ_{\sv};\tilde{\mu}}(1-m_{\sv})+(1-m_{\sv})^{-1}.
\end{align}
Here, $\cS^{-1}_{\cJ_{\sv};\tilde{\mu}}(\cdot)$ denotes the inverse of $\cS_{\cJ_{\sv};\tilde{\mu}}(\cdot)$. Hence,
\begin{align}
1-m_{\sv}&=\cS_{\cJ_{\sv};\tilde{\mu}}\left[(1-m_{\sv})^{-1}-\hat{m}_{\sv}\right]\nonumber\\
&=\E_{\gamma\sim\tilde{\mu}}\left[(1-m_{\sv})^{-1}-\hat{m}_{\sv}-\cJ_{\sv}(\gamma)\right]^{-1}\nonumber\\
&\overset{\text{(A)}}{=}\frac{1}{\alpha}\E_{\gamma\sim\mu}\left[(1-m_{\sv})^{-1}-\hat{m}_{\sv}-\cJ_{\sv}(\gamma)\right]^{-1}+\left(1-\frac{1}{\alpha}\right)\left[(1-m_{\sv})^{-1}-\hat{m}_{\sv}-\cJ_{\sv}(0)\right]^{-1}\nonumber\\
&\overset{\text{(B)}}{=}\frac{1}{\alpha}\E_{\gamma\sim\mu}\left[(1-m_{\sv})^{-1}-\hat{m}_{\sv}+\cJ_{\Sf{vv}}(\gamma)-\frac{\cJ_{\Sf{uv}}(\gamma)}{(1-m_{\su})^{-1}-\hat{m}_{\su}+\cJ_{\Sf{uu}}(\gamma)}\right]^{-1}\nonumber\\
&\qquad\left(1-\frac{1}{\alpha}\right)\left[(1-m_{\sv})^{-1}-\hat{m}_{\sv}+\cJ_{\Sf{vv}}(0)\right]^{-1}\nonumber\\
&\overset{\text{(C)}}{=}\frac{1}{\alpha}\E_{\gamma\sim\mu}\frac{\rho_{\su}+1+\cJ_{\Sf{uu}}(\gamma)}{\rho_{\su}\rho_{\sv}+\rho_{\su}(1+\cJ_{\Sf{vv}}(\gamma))+\rho_{\sv}(1+\cJ_{\Sf{uu}}(\gamma))+(1+\cJ_{\Sf{uu}}(\gamma))(1+\cJ_{\Sf{vv}}(\gamma))-\cJ_{\Sf{uv}}(\gamma)}\\
&\qquad+\left(1-\frac{1}{\alpha}\right)(\rho_{\sv}+1+\cJ_{\Sf{vv}}(0))^{-1}.\label{Eq:1_m_v}
\end{align}
Here, step (A) uses $\tilde{\mu}=\frac{1}{\alpha}\mu+(1-\frac{1}{\alpha})\delta_{0}$, while step (B) follows from \eqref{Eq:Jv_x}. Step (C) follows from the definitions
\begin{align}
\rho_{\su}&\triangleq(1-m_{\su})^{-1}-\hat{m}_{\su}-1,\label{Eq:rho_u}\\
\rho_{\sv}&\triangleq(1-m_{\sv})^{-1}-\hat{m}_{\sv}-1.\label{Eq:rho_v}
\end{align}
Similarly, \eqref{Eq:RM_hm_u} gives
\begin{align}
1-m_{\su}&=\E_{\gamma\sim\mu}\frac{\rho_{\sv}+1+\cJ_{\Sf{vv}}(\gamma)}{\rho_{\su}\rho_{\sv}+\rho_{\su}(1+\cJ_{\Sf{vv}}(\gamma))+\rho_{\sv}(1+\cJ_{\Sf{uu}}(\gamma))+(1+\cJ_{\Sf{uu}}(\gamma))(1+\cJ_{\Sf{vv}}(\gamma))-\cJ_{\Sf{uv}}(\gamma)}.\label{Eq:1_m_u}
\end{align}

The functions $\{\cJ_{\Sf{uu}},\cJ_{\Sf{vv}},\cJ_{\Sf{uv}}\}$ are still defined in terms of $V(\cdot)$. The following two lemmas allow us to express them in terms of $\mu(\cdot)$ and $\cH_{\mu}(\cdot)$. Detailed proofs of both lemmas are provided in Appendix \ref{Appendix:Proof_Spectral_Mapping}.

\begin{lemma}\label{Lemma:Hilbert_Potential_Identities}

Let $\mu(\cdot)$ be a square-integrable limiting spectral density, i.e., $\mu(\cdot)\in L^{2}(\R)$. The mean of its associated Hilbert transform $\cH_{\mu}(\cdot)$ w.r.t. $\mu(\cdot)$ vanishes identically:
\begin{align}
\E_{\gamma\sim\mu}\cH_{\mu}(\gamma)&=\int{\dd\gamma}\,\mu(\gamma)\cH_{\mu}(\gamma)=0.
\end{align}

\end{lemma}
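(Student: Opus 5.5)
The plan is to use the antisymmetry of the Cauchy kernel $1/(x-\gamma)$, with Fubini justified through the $L^{2}$ theory of the Hilbert transform rather than applied naively to the singular double integral. Formally,
\begin{align}
\int{\dd x}\,\mu(x)\cH_{\mu}(x)=\frac{1}{\pi}\iint{\dd x\,\dd\gamma}\,\frac{\mu(x)\mu(\gamma)}{x-\gamma}.
\end{align}
Swapping $x$ and $\gamma$ leaves the measure $\mu(x)\mu(\gamma)$ unchanged but flips the sign of the kernel, so the integral equals its own negative and must vanish. The only real work is making this rigorous, since the kernel is not absolutely integrable near the diagonal.

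First I will truncate. For $\varepsilon>0$ define
\begin{align}
\cH_{\mu}^{\varepsilon}(x)=\frac{1}{\pi}\int_{|x-\gamma|>\varepsilon}{\dd\gamma}\,\frac{\mu(\gamma)}{x-\gamma}.
\end{align}
On the region $\{|x-\gamma|>\varepsilon\}$ the kernel is bounded by $1/\varepsilon$, and $\mu(x)\mu(\gamma)$ is a compactly supported probability product measure. Hence the double integral $\iint_{|x-\gamma|>\varepsilon}\mu(x)\mu(\gamma)(x-\gamma)^{-1}$ is absolutely convergent. Fubini applies, and the symmetry argument above gives exactly $\int\mu\,\cH_{\mu}^{\varepsilon}=0$ for every $\varepsilon>0$. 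This holds because the truncated region is itself symmetric under $(x,\gamma)\mapsto(\gamma,x)$.

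Second, I will pass to the limit $\varepsilon\to0^{+}$. Since $\mu\in L^{2}(\R)$, the classical theory of the Hilbert transform (M.~Riesz, and the boundedness of the maximal truncated Hilbert transform) gives two facts:
\begin{itemize}
\item $\cH_{\mu}^{\varepsilon}\to\cH_{\mu}$ in $L^{2}(\R)$, and also almost everywhere, where $\cH_{\mu}$ is the principal value transform used in the paper.
\item $\|\cH_{\mu}^{\varepsilon}\|_{L^{2}}\le C\|\mu\|_{L^{2}}$ uniformly in $\varepsilon$.
\end{itemize}
By Cauchy--Schwarz,
\begin{align}
\left|\int\mu\,(\cH_{\mu}-\cH_{\mu}^{\varepsilon})\right|\le\|\mu\|_{L^{2}}\,\|\cH_{\mu}-\cH_{\mu}^{\varepsilon}\|_{L^{2}}\to0.
\end{align}
This yields $\int\mu\,\cH_{\mu}=\lim_{\varepsilon\to0}\int\mu\,\cH_{\mu}^{\varepsilon}=0$. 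The same $L^{2}$ bound also shows that $\E_{\gamma\sim\mu}\cH_{\mu}(\gamma)$ is well defined in the first place.

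An alternative route is also available. The Hilbert transform is anti-self-adjoint on $L^{2}$, i.e.\ $\langle f,\cH g\rangle=-\langle\cH f,g\rangle$, which follows from its Fourier multiplier $-\imagUnit\,\mathrm{sgn}(\xi)$ up to the paper's normalization $1/\pi$. Taking $f=g=\mu$ gives $\langle\mu,\cH_{\mu}\rangle=-\langle\mu,\cH_{\mu}\rangle$, so the pairing is zero. The paper's $\cH_{\mu}$ carries the factor $1/\pi$ and so matches the standard transform; in any case a constant factor does not affect whether the pairing vanishes. Equivalently, in Fourier space $\langle\mu,\cH\mu\rangle\propto\int|\hat\mu(\xi)|^{2}\,\mathrm{sgn}(\xi)\,\dd\xi$, and this vanishes because $\mu$ is real, so $|\hat\mu|^{2}$ is even.

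The main obstacle is the limit exchange in the second step, namely justifying that the principal value commutes with the outer integration against $\mu$. The square-integrability hypothesis is exactly what the $L^{2}$-boundedness argument needs. I will cite the standard $L^{2}$ convergence of truncated Hilbert transforms rather than reprove it.
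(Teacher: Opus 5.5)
Your argument is correct. It shares the paper's overall architecture: regularize, show that the regularized pairing vanishes, then pass to the limit in $L^{2}$ via Cauchy--Schwarz. The regularization and the reason for vanishing are, however, different.

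The paper shifts into the complex plane. It proves $\int\cS_{\mu}^{2}(x+\imagUnit\varepsilon)\,\dd x=0$ by partial fractions and $\int(x+\imagUnit\varepsilon-t)^{-2}\,\dd x=0$. Taking the imaginary part then gives $\int\Re\cS_{\mu}\,\Im\cS_{\mu}\,\dd x=0$. This is a pairing of the conjugate Poisson integral of $\mu$ with its Poisson integral, not of $\mu$ with a regularized $\cH_{\mu}$. Consequently the paper needs the $L^{2}$ boundary-value convergence of both factors, $\Re\cS_{\mu}\to\pi\cH_{\mu}$ and $\Im\cS_{\mu}\to-\pi\mu$.

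You instead truncate the principal value symmetrically. The fixed-$\varepsilon$ identity is then just antisymmetry of the kernel on a symmetric region, where the double integral is absolutely convergent with bound $1/\varepsilon$. Only one limit is needed, $\cH_{\mu}^{\varepsilon}\to\cH_{\mu}$ in $L^{2}$, with $\mu$ itself held fixed.

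Your version exposes the mechanism more directly: equivalently $\cH^{*}=-\cH$, or the odd multiplier $-\imagUnit\,\mathrm{sgn}(\xi)$ integrated against the even $|\hat{\mu}|^{2}$. The paper's version stays in the Stieltjes/Sokhotski--Plemelj language it reuses for the boundary values in the proof of Lemma~\ref{Lemma:Spectral_Mapping}.

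One simplification is available. You do not need the maximal truncated Hilbert transform for the limit step. The $L^{2}$ convergence of $\cH_{\mu}^{\varepsilon}$ follows from Plancherel, since the multiplier of the truncated kernel is uniformly bounded and converges pointwise to $-\imagUnit\,\mathrm{sgn}(\xi)$. The almost-everywhere statement is used only to identify the paper's pointwise principal value with this $L^{2}$ limit.
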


\begin{lemma}\label{Lemma:Spectral_Mapping}

Consider the RI noise matrix $\bZ_{0}\in\R^{N\times M}$ governed by the potential $V(\bZ_{0}\bZ_{0}^{\top})$ in \eqref{eq:noise_density}. Assume $V(\cdot)$ is analytic in the neighbourhood of $\text{supp}({\mu})$, that $\text{supp}({\mu})$ is a single compact interval bounded away from the origin, and that $\mu(\cdot)$ admits a density. Then for every $x$ in the interior of $\text{supp}({\mu})$, the following identities hold:
\begin{align}
V'(x)&=2\pi\cH_{\mu}(x)+(\alpha-1)\frac{1}{x},\\
\E_{\gamma\sim\mu}\frac{V'(x)-V'(\gamma)}{x-\gamma}&=\pi^{2}(\mu^{2}(x)+\cH^{2}_{\mu}(x))-(\alpha-1)\frac{1}{x}\E_{\gamma\sim\mu}\frac{1}{\gamma}.
\end{align}

\end{lemma}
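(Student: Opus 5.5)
The first identity will come from the equilibrium condition of the rectangular trace ensemble, and the second will follow from the first by a short computation with the Hilbert transform.

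\emph{Equilibrium condition.} Start from the joint density of the squared singular values $\gamma_i=s_i^2$ induced by \eqref{eq:noise_density}. Integrating out the Haar matrices $\bP,\bQ$ and changing variables from $\bZ_{0}$ to its SVD gives the Jacobian $\prod_{i<j}|\gamma_i-\gamma_j|\prod_i \gamma_i^{(M-N-1)/2}$. The joint law is therefore proportional to
\begin{align}
\exp\Big(-\frac{N}{2}\sum_i V(\gamma_i)+\sum_{i<j}\log|\gamma_i-\gamma_j|+\frac{M-N-1}{2}\sum_i\log\gamma_i\Big).
\end{align}
Dividing the exponent by $N^2/2$, the large-deviation functional is
\begin{align}
\int V\,\dd\mu-\iint\log|x-y|\,\dd\mu(x)\dd\mu(y)-(\alpha-1)\int\log x\,\dd\mu(x).
\end{align}
Its minimizer is $\mu$. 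On the interior of the support, which is a single interval bounded away from $0$ by assumption, the variational (Euler--Lagrange) condition differentiated in $x$ reads
\begin{align}
V'(x)=2\,\Sf{P.V.}\int\frac{\mu(\gamma)\dd\gamma}{x-\gamma}+\frac{\alpha-1}{x}.
\end{align}
By the definition of $\cH_\mu$, this is exactly $V'(x)=2\pi\cH_\mu(x)+(\alpha-1)/x$. I would state this step as the standard Coulomb-gas/Laguerre-type equilibrium equation, justified by the analyticity of $V$ near the support.

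\emph{Second identity.} Substitute the first identity into $\E_{\gamma\sim\mu}\frac{V'(x)-V'(\gamma)}{x-\gamma}$ and split it into two pieces.

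The $1/x$ part is elementary: $\frac{1/x-1/\gamma}{x-\gamma}=-\frac{1}{x\gamma}$, which contributes $-(\alpha-1)\frac1x\E_\mu\frac1\gamma$.

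The Hilbert part is
\begin{align}
2\pi\int\mu(\gamma)\frac{\cH_\mu(x)-\cH_\mu(\gamma)}{x-\gamma}\dd\gamma .
\end{align}
The integrand is regular at $\gamma=x$, but I will split it into principal values:
\begin{align}
2\pi\Big[\cH_\mu(x)\,\Sf{P.V.}\int\frac{\mu(\gamma)}{x-\gamma}\dd\gamma-\Sf{P.V.}\int\frac{\mu(\gamma)\cH_\mu(\gamma)}{x-\gamma}\dd\gamma\Big].
\end{align}
The first principal value is $\pi\cH_\mu(x)$, so the first term equals $2\pi^2\cH_\mu^2(x)$. The second term is $\pi$ times the Hilbert transform of $\mu\cH_\mu$.

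For that term I will use the finite-Hilbert-transform identity
\begin{align}
\mathcal{H}[\mu\,\mathcal{H}\mu]=\tfrac12\big((\mathcal{H}\mu)^2-\mu^2\big),
\end{align}
with the paper's normalization $\cH_\mu=\frac1\pi\Sf{P.V.}\int\frac{\mu}{x-\cdot}$. It follows from the fact that $(\cH_\mu-\imagUnit\mu)^2$ is the boundary value of $(\cS_\mu/\pi)^2$, an analytic function decaying like $z^{-2}$. Hence the real and imaginary parts of its boundary values form a Hilbert pair, and no constant appears because of the decay. The mean-zero property of Lemma~\ref{Lemma:Hilbert_Potential_Identities} is exactly what guarantees this $z^{-2}$ decay: $\cS_\mu(z)^2\sim z^{-2}$ since the total mass is 1, with no extra constant term. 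Plugging the identity in gives
\begin{align}
2\pi^2\cH_\mu^2-\pi^2(\cH_\mu^2-\mu^2)=\pi^2(\mu^2+\cH_\mu^2),
\end{align}
which is the claimed formula.

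\emph{Main obstacle.} The delicate step is the squared-Hilbert identity: getting the sign conventions right from Sokhotski--Plemelj and justifying the exchange of principal values. Square-integrability $\mu\in L^2$ makes $\cH_\mu\in L^2$ and the products integrable. Differentiating the equilibrium condition is the other point needing care, but it is standard given analyticity of $V$ and a one-cut support.
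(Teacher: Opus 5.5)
Your proof is correct, but it reaches the second identity by a different route from the paper.

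\textbf{The paper's route.} The paper never differentiates a continuum Euler--Lagrange condition. It multiplies the finite-$N$ stationarity equations of the Coulomb gas by $(z-\gamma_i)^{-1}$ and sums. In the limit this gives the quadratic loop equation $\cS_{\mu}^{2}=D\cS_{\mu}-F$, with $D(z)=V'(z)-(\alpha-1)/z$ and $F(z)=\E_{\gamma\sim\mu}\frac{V'(z)-V'(\gamma)}{z-\gamma}+(\alpha-1)\frac{1}{z}\E_{\gamma\sim\mu}\frac{1}{\gamma}$. Both identities are then read off at once from the real and imaginary parts of the boundary value $\pi\cH_{\mu}-\imagUnit\pi\mu=\frac{1}{2}D-\frac{\imagUnit}{2}\sqrt{4F-D^{2}}$.

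\textbf{Your route.} You take the first identity directly from the equilibrium condition. You then derive the second as a purely analytic consequence of the first, via the Cotlar--Tricomi identity $\cH[\mu\cH_{\mu}]=\frac{1}{2}(\cH_{\mu}^{2}-\mu^{2})$; your sign is correct for the paper's normalization.

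\textbf{What each buys.} Your argument shows that the second identity holds for any $V$ satisfying the first on the support. It also avoids choosing a branch of the square root, at the cost of a harmonic-analysis input. The paper's argument yields the explicit one-cut equation for $\cS_{\mu}$ and needs only boundary values of an algebraic expression. The two are tightly linked: symmetrizing the double-integral form of $\cS_{\mu}^{2}(z)$ and inserting your first identity reproduces exactly the paper's loop equation.

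\textbf{Two small corrections.}
\begin{itemize}
\item The decay $\cS_{\mu}^{2}=O(z^{-2})$ is automatic from $\cS_{\mu}(z)\sim 1/z$. Lemma~\ref{Lemma:Hilbert_Potential_Identities} follows from this decay (it is the vanishing of the $1/z$ coefficient in the Cauchy representation of $\cS_{\mu}^{2}$), rather than guaranteeing it. The Hilbert-pair relation only needs the absence of a constant term, so that lemma is not actually used in your proof.
\item $\mu\in L^{2}$ alone is too weak to apply Cotlar's identity pointwise. However, the hypotheses (analytic $V$, one-cut support) give a bounded H\"older density, and $\cH_{\mu}$ is analytic in the interior of the support. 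So the a.e.\ identity upgrades to every interior $x$.
\end{itemize}
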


For $\cJ_{\Sf{uu}}(\cdot)$ in \eqref{Eq:Juu_x}, we have
\begin{align}
\cJ_{\Sf{uu}}(x)&=\lambda^{2}\left[x\E_{\gamma\sim\mu}\frac{V'(x)-V'(\gamma)}{x-\gamma}+\E_{\gamma\sim\mu}V'(\gamma)+(\alpha-1)V'(x)\right]\nonumber\\
&\overset{\text{(D)}}{=}\lambda^{2}\left[\pi^{2}x\left(\mu^{2}(x)+\cH_{\mu}^{2}(x)\right)+2\pi\E_{\gamma\sim\mu}\cH_{\mu}(\gamma)+2(\alpha-1)\pi\cH_{\mu}(x)+(\alpha-1)^{2}\frac{1}{x}\right]\nonumber\\
&\overset{\text{(E)}}{=}\lambda^{2}\pi^{2}x\left(\mu^{2}(x)+\cH_{\mu}^{2}(x)\right)+2(\alpha-1)\lambda^{2}\pi\cH_{\mu}(x)+(\alpha-1)^{2}\lambda^{2}\frac{1}{x}.\label{Eq:Juu_X}
\end{align}
Here, step (D) uses Lemma \ref{Lemma:Spectral_Mapping}, while step (E) uses Lemma \ref{Lemma:Hilbert_Potential_Identities}. Similarly, \eqref{Eq:Jvv_x}-\eqref{Eq:Juv_x} give
\begin{align}
\cJ_{\Sf{vv}}(0)&=(\alpha-1)\lambda^{2}\E_{\gamma\sim\mu}\frac{1}{\gamma},\\
\cJ_{\Sf{vv}}(x)&=\lambda^{2}\pi^{2}x(\mu^{2}(x)+\cH^{2}_{\mu}(x)),\label{Eq:Jvv_X}\\
\cJ_{\Sf{uv}}(x)&=4\lambda^{2}\pi^{2}x\cH_{\mu}^{2}(x)+4(\alpha-1)\lambda^{2}\pi\cH_{\mu}(x)+(\alpha-1)^{2}\lambda^{2}\frac{1}{x}.\label{Eq:Juv_X}
\end{align}

After substituting \eqref{Eq:Juu_X}-\eqref{Eq:Juv_X}, equations \eqref{Eq:1_m_v} and \eqref{Eq:1_m_u} are expressed entirely in terms of $\mu(\cdot)$ and $\cH_{\mu}(\cdot)$. The replica saddle point equations are therefore in the spectral form required for a direct comparison with the SE fixed point equations.\\

\subsubsection{Matching with the SE Fixed Point Equations}

With both formulations written in compatible spectral forms, we now establish their algebraic equivalence. The comparison proceeds in two steps. First, we match the effective scalar Gaussian channels in the SE and the replica formulation, which yields the relations in \eqref{Eq:rho_u}-\eqref{Eq:rho_v}. Second, we derive the MMSE expressions from the SE fixed point equations \eqref{Eq:OAMP_ru}-\eqref{Eq:OAMP_wv} and show that they coincide with the saddle point equations in \eqref{Eq:1_m_v} and \eqref{Eq:1_m_u}. This completes the algebraic comparison.

In the SE, the scalar Gaussian channel for $\sv$ is
\begin{align}
\mu_{\sv}&=\sqrt{w_{\sv}}v_{0}+\sqrt{1-w_{\sv}}z_{\sv},\label{Eq:O_Mu_v}
\end{align}
where $v_{0}\sim P_{\sv}(\cdot)$ is the ground-truth signal component and $z_{\sv}\sim\cN(0,1)$ is standard Gaussian noise. Under this channel,
\begin{align}
\MMSE_{\sv}(w_{\sv})&=\E[(v_{0}-\langle v_{a}\rangle)^{2}]=1-m_{\sv},\label{Eq:SE_MMSE_SV}
\end{align}
where $m_{\sv}$ is the overlap between $v_{0}$ and the posterior mean $\langle v_{a}\rangle$. Comparing this channel with the replica scalar channel in \eqref{Eq:Mu_v} gives
\begin{align}
\hat{m}_{\sv}=\frac{w_{\sv}}{1-w_{\sv}}\quad\Longleftrightarrow\quad w_{\sv}=\frac{\hat{m}_{\sv}}{1+\hat{m}_{\sv}}.
\end{align}
Substituting this relation into \eqref{Eq:OAMP_rv} yields
\begin{align}
\rho_{\sv}&=\frac{1}{\MMSE_{\sv}(w_{\sv})}-\left(1-\frac{\hat{m}_{\sv}}{1+\hat{m}_{\sv}}\right)^{-1}=(1-m_{\sv})^{-1}-\hat{m}_{\sv}-1,
\end{align}
which recovers \eqref{Eq:rho_v}. The same argument for the $\su$-channel gives
\begin{align}
\MMSE_{\su}(w_{\su})&=1-m_{\su},\label{Eq:SE_MMSE_SU}\\
\rho_{\su}&=(1-m_{\su})^{-1}-\hat{m}_{\su}-1,
\end{align}
which recovers \eqref{Eq:rho_u}.

We next derive the MMSE expressions from the SE fixed point equations. Using \eqref{Eq:SE_MMSE_SV}, the MMSE for the $\sv$-channel can be written as
\begin{align}
1-m_{\sv}&=\MMSE_{\sv}(w_{\sv})\nonumber\\
&\overset{\text{\eqref{Eq:OAMP_rv}}}{=}[\rho_{\sv}+(1-w_{\sv})^{-1}]^{-1}\nonumber\\
&\overset{\text{\eqref{Eq:OAMP_wv}}}{=}\frac{1}{\rho_{\sv}}\E_{\gamma\sim\tilde{\mu}}[1-Q(\gamma;\rho_{\su},\rho_{\sv})]\nonumber\\
&\overset{\text{(A)}}{=}\frac{1}{\rho_{\sv}}\left[\frac{1}{\alpha}\E_{\gamma\sim\mu}[1-Q(\gamma;\rho_{\su},\rho_{\sv})]+\left(1-\frac{1}{\alpha}\right)(1-Q(0;\rho_{\su},\rho_{\sv}))\right]\nonumber\\
&=\frac{1}{\alpha}\E_{\gamma\sim\mu}\frac{\gamma(\rho_{\su}\varphi_{1}(\gamma)+1)\varphi_{2}(\gamma)-\rho_{\su}\varphi_{3}^{2}(\gamma)}{\gamma(\rho_{\su}\varphi_{1}(\gamma)+1)(\rho_{\sv}\varphi_{2}(\gamma)+\delta)-\rho_{\su}\rho_{\sv}\varphi_{3}^{2}(\gamma)}+\left(1-\frac{1}{\alpha}\right)\frac{\varphi_{2}(0)}{\rho_{\sv}\varphi_{2}(0)+\delta}\nonumber\\
&=\frac{1}{\alpha}\E_{\gamma\sim\mu}\frac{\rho_{\su}+\frac{\varphi_{2}(\gamma)}{\varphi_{1}(\gamma)\varphi_{2}(\gamma)-\frac{1}{\gamma}\varphi_{3}^{2}(\gamma)}}{\rho_{\su}\rho_{\sv}+\frac{\rho_{\su}\delta\varphi_{1}(\gamma)+\rho_{\sv}\varphi_{2}(\gamma)+\delta}{\varphi_{1}(\gamma)\varphi_{2}(\gamma)-\frac{1}{\gamma}\varphi_{3}^{2}(\gamma)}}+\left(1-\frac{1}{\alpha}\right)\left(\rho_{\sv}+\frac{\delta}{\varphi_{2}(0)}\right)^{-1}\nonumber\\
&\overset{\text{(B)}}{=}\frac{1}{\alpha}\E_{\gamma\sim\mu}\frac{\rho_{\su}+1+\cJ_{\Sf{uu}}(\gamma)}{\rho_{\su}\rho_{\sv}+\rho_{\su}(1+\cJ_{\Sf{vv}}(\gamma))+\rho_{\sv}(1+\cJ_{\Sf{uu}}(\gamma))+(1+\cJ_{\Sf{uu}}(\gamma))(1+\cJ_{\Sf{vv}}(\gamma))-\cJ_{\Sf{uv}}(\gamma)}\nonumber\\
&\qquad+\left(1-\frac{1}{\alpha}\right)(\rho_{\sv}+1+\cJ_{\Sf{vv}}(0))^{-1}.\label{Eq:SE_MMSE_V}
\end{align}
Here, step (A) follows from $\tilde{\mu}=\frac{1}{\alpha}\mu+(1-\frac{1}{\alpha})\delta_{0}$. In step (B), we substitute \eqref{Eq:Rho_1}-\eqref{Eq:Rho_3} and the parameter mapping \eqref{Eq:Mapping}; after simplification, the fractions involving $\varphi_{1}(\cdot)$, $\varphi_{2}(\cdot)$, and $\varphi_{3}(\cdot)$ reduce to the expressions involving $\cJ_{\Sf{uu}}(\cdot)$, $\cJ_{\Sf{vv}}(0)$, $\cJ_{\Sf{vv}}(\cdot)$, and $\cJ_{\Sf{uv}}(\cdot)$ given in \eqref{Eq:Juu_X}-\eqref{Eq:Juv_X}. Similarly, using \eqref{Eq:OAMP_ru}, \eqref{Eq:OAMP_wu}, and \eqref{Eq:SE_MMSE_SU}, we obtain
\begin{align}
1-m_{\su}&=\E_{\gamma\sim\mu}\frac{\rho_{\sv}+1+\cJ_{\Sf{vv}}(\gamma)}{\rho_{\su}\rho_{\sv}+\rho_{\su}(1+\cJ_{\Sf{vv}}(\gamma))+\rho_{\sv}(1+\cJ_{\Sf{uu}}(\gamma))+(1+\cJ_{\Sf{uu}}(\gamma))(1+\cJ_{\Sf{vv}}(\gamma))-\cJ_{\Sf{uv}}(\gamma)}.\label{Eq:SE_MMSE_U}
\end{align}
Thus, \eqref{Eq:SE_MMSE_V}-\eqref{Eq:SE_MMSE_U}, derived from the SE fixed point equations, coincide with \eqref{Eq:1_m_v} and \eqref{Eq:1_m_u}, respectively. Together with the scalar channel matching that yields \eqref{Eq:rho_u}-\eqref{Eq:rho_v}, this establishes the algebraic equivalence between the replica saddle point equations in Result \ref{Result:Saddle_Point} and the SE fixed point equations \eqref{Eq:OAMP_ru}-\eqref{Eq:OAMP_wv}.

\subsection{Implications of the Equivalence}\label{Sec:Bayes_Optimality_Prediction}

We now discuss the algorithmic implications of the equivalence established above. The replica method used to derive the IT limits is a heuristic but powerful framework originating from the statistical physics of disordered systems \cite{parisi1980sequence,mezard1987spin,mezard2009information} and widely applied in high-dimensional Bayesian inference \cite{takahashi2022macroscopic,barbier2023fundamental,barbier2025information}. Although the replica method does not provide a rigorous proof in general, many of its predictions have been rigorously established in spin glass and inference models; see, for example, \cite{guerra2003broken,talagrand2006parisi,panchenko2013parisi,reeves2016replica,miolane2017fundamental,barbier2018mutual,barbier2019adaptive,barbier2019optimal,barbier2020mutual,mourrat2020hamilton}. We therefore conjecture that the IT limits characterized by Result \ref{Result:Saddle_Point} are asymptotically exact.

Under this conjecture, the equivalence provides an algorithmic interpretation of the replica saddle point equations. In particular, the fixed point reached by the SE recursion corresponds to a stationary solution of the replica saddle point equations. In the unique fixed point regime, this correspondence implies that the SE recursion has no competing stationary branch to select, and its asymptotic performance coincides with the IT limit predicted by the replica analysis. Hence, OAMP is conjectured to achieve Bayes-optimal performance in this regime. When multiple fixed points coexist, however, the equivalence alone does not guarantee that the SE recursion reaches the IT optimal branch, namely the stationary solution maximizing $f_{\Sf{RS}}$ in \eqref{eq:RSfreeEnt_extr}. In such algorithmically hard regimes \cite{lesieur2017statistical,antenucci2019glassy,barbier2019optimal,barbier2020mutual,angelini2023limits}, the SE recursion with an uninformative initialization may converge to a metastable branch with lower free entropy rather than the IT branch, resulting in a statistical-to-computational gap for OAMP \cite{pourkamali2024matrix,defilippis2026noise,dudeja2026optimality}. This behavior is consistent with the conjecture that no polynomial-time algorithm can achieve the IT limit from an uninformative initialization in these regimes \cite{dudeja2026optimality}. Thus, the equivalence reveals the SE trajectory as an algorithmic selection mechanism among stationary solutions of the replica saddle point equations, and OAMP is expected to be Bayes-optimal precisely when the selected branch coincides with the IT optimal branch.

\subsection{Numerical Examples}\label{Sec:Numerical_Validation}

This section numerically solves the saddle point equations in Result \ref{Result:Saddle_Point}, studies the resulting stationary branches, and compares them with the SE fixed points of OAMP. All details concerning how the various solutions of the saddle point equations are numerically found are described in Appendix \ref{App:NumericalSP}.

The numerical examples are organized according to the fixed point structure of the saddle point equations. We first consider a regime with multiple fixed points, where the distinction between the IT prediction and the algorithmic prediction becomes nontrivial. Among the coexisting stationary solutions, the IT prediction corresponds to the solution maximizing the RS free entropy $f_{\Sf{RS}}$, whereas the algorithmic prediction is determined by the fixed point selected by the SE dynamics. These two selections need not coincide, leading to a possible statistical-to-computational gap. We then consider regimes with a unique detected informative fixed point, where no competing informative stationary branch is detected in the prescribed numerical domain and OAMP is conjectured to achieve the IT limit.

In the numerical experiments, empirical performance is measured by the squared cosine similarity between the estimated and ground-truth signals, while the replica prediction is expressed through the overlap with the posterior mean estimate. For the $\su$-channel, this quantity is
\begin{align}
m_{\su}&=\E[u_{0}\langle u_{a}\rangle].
\end{align}
The $\sv$-channel is evaluated analogously. At a stationary point, these quantities coincide with the overlaps $m_{\su}^{*}$ and $m_{\sv}^{*}$ defined in \eqref{Eq:overlaps}.

In the figures, ``IT Limit'' denotes the stationary solution with the largest replica free entropy among all detected stationary solutions. When multiple stationary branches coexist, ``Metastable State'' denotes a branch with smaller free entropy. The label ``PCA'' refers to the spectral initialization proposed in \cite{Chen2026orthogonal}, rather than the standard PCA based only on the leading singular vector. In rectangular RI models, a rank-one signal can generate multiple informative singular outliers, and the PCA baseline considered here combines these informative outliers to construct the initialization. See Section 4 of \cite{Chen2026orthogonal} for the detailed construction. The label ``OAMP'' denotes the empirical performance of the OAMP algorithm \cite{Chen2026orthogonal} at the finite dimensions $N=4000$ and $M=8000$, whereas ``SE'' denotes the corresponding asymptotic performance predicted by its SE. Since the SE of OAMP is rigorously characterized, the agreement between its fixed points and the replica saddle point solutions provides numerical evidence supporting the replica prediction.\\

\begin{figure}[!t]
\centering
\begin{subfigure}{1.0\linewidth}
\centering
\stepcounter{subfigure}
\begin{subfigure}{0.49\linewidth}
\centering
\includegraphics[width=\linewidth]{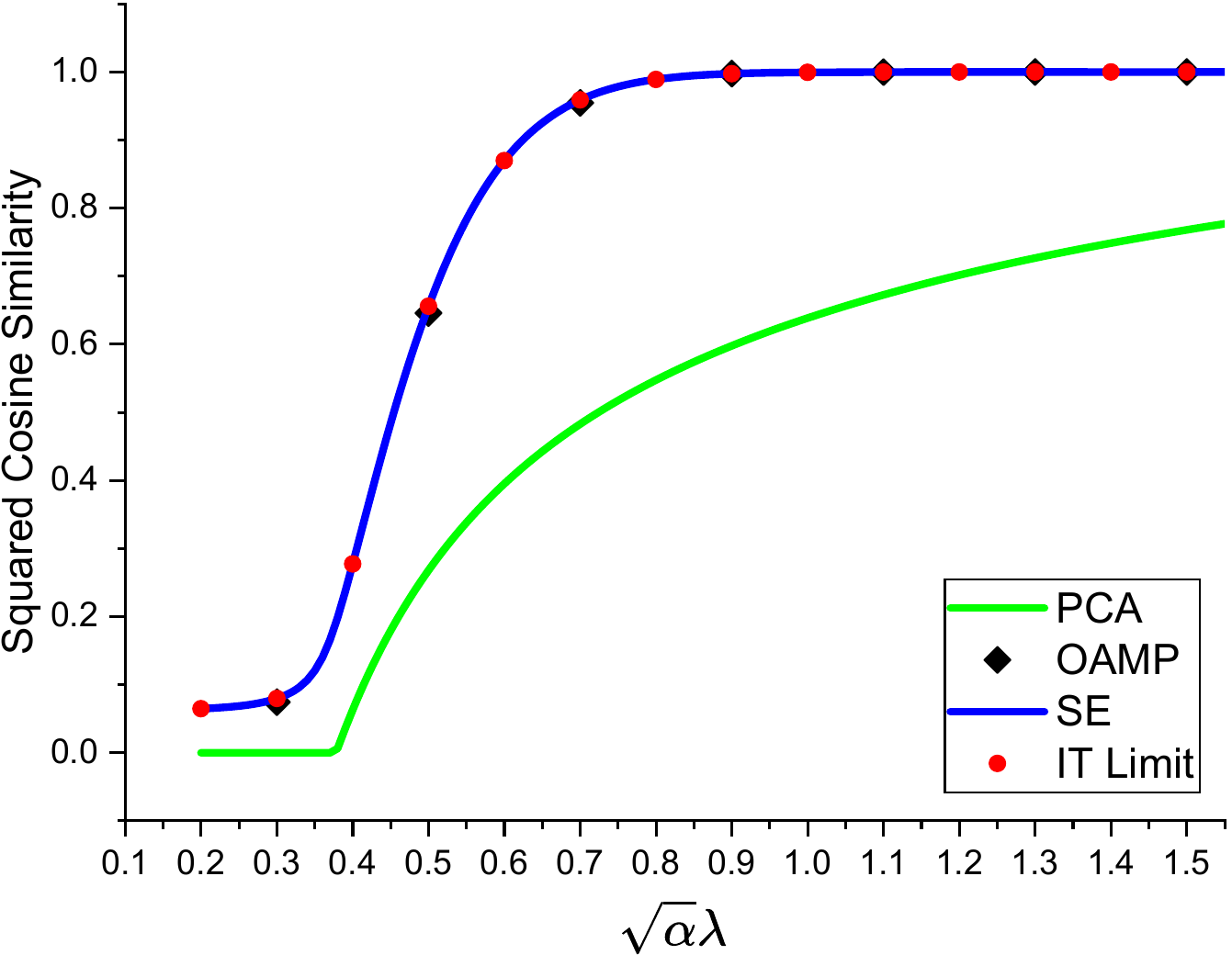}
\caption*{(\thesubfigure.1) $\su$-channel.}
\end{subfigure}
\hfill
\begin{subfigure}{0.49\linewidth}
\centering
\includegraphics[width=\linewidth]{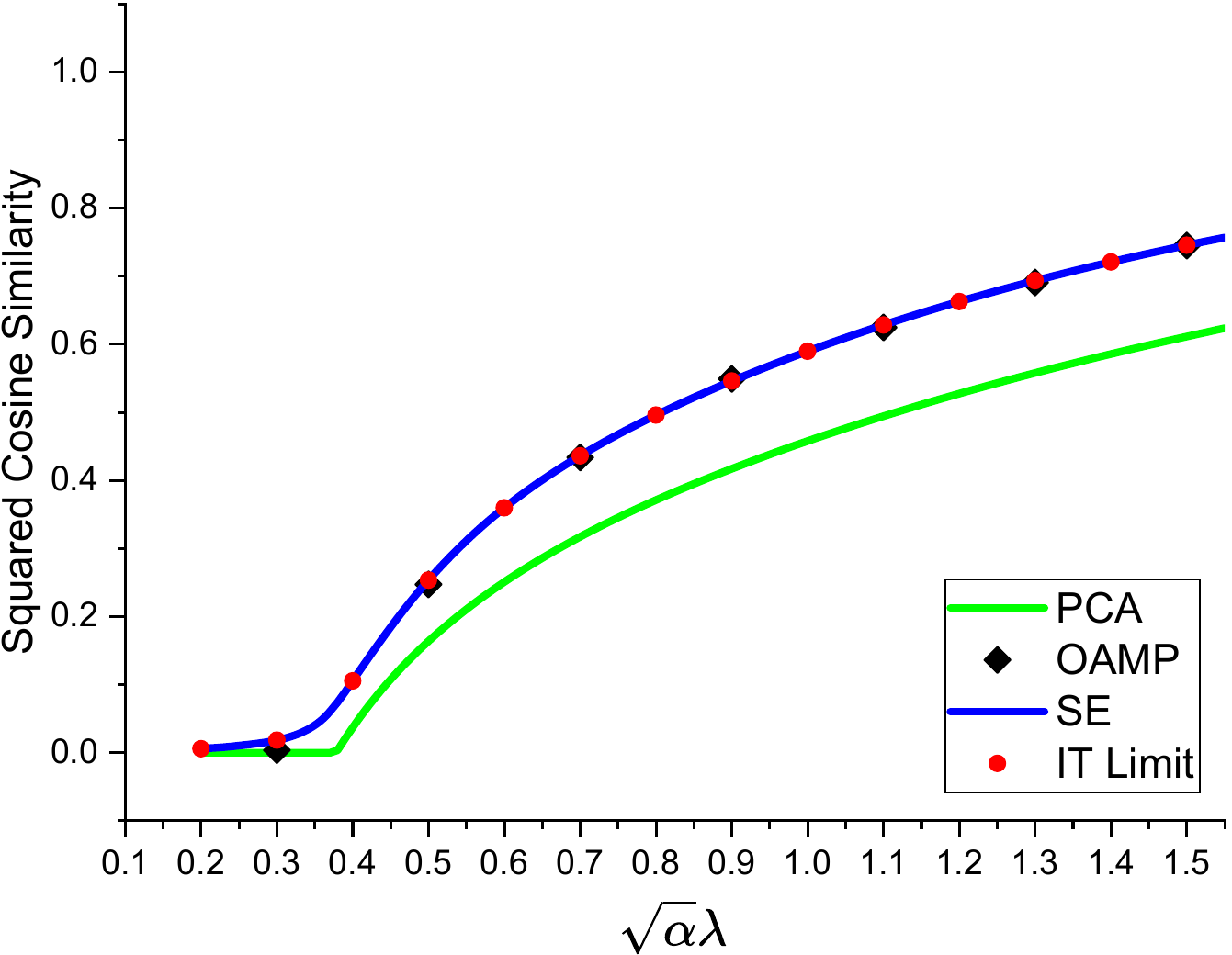}
\caption*{(\thesubfigure.2) $\sv$-channel.}
\end{subfigure}
\addtocounter{subfigure}{-1}
\caption{Single-branch case with $\epsilon=\frac{1}{4}$.}
\label{Fig:Main_SE_RM_Rad_1_3_2Points_0_25_Gauss_Ber_0_4_A_2_T_0_2_0_01_2}
\end{subfigure}
\begin{subfigure}{1.0\linewidth}
\centering
\stepcounter{subfigure}
\begin{subfigure}{0.49\linewidth}
\centering
\includegraphics[width=\linewidth]{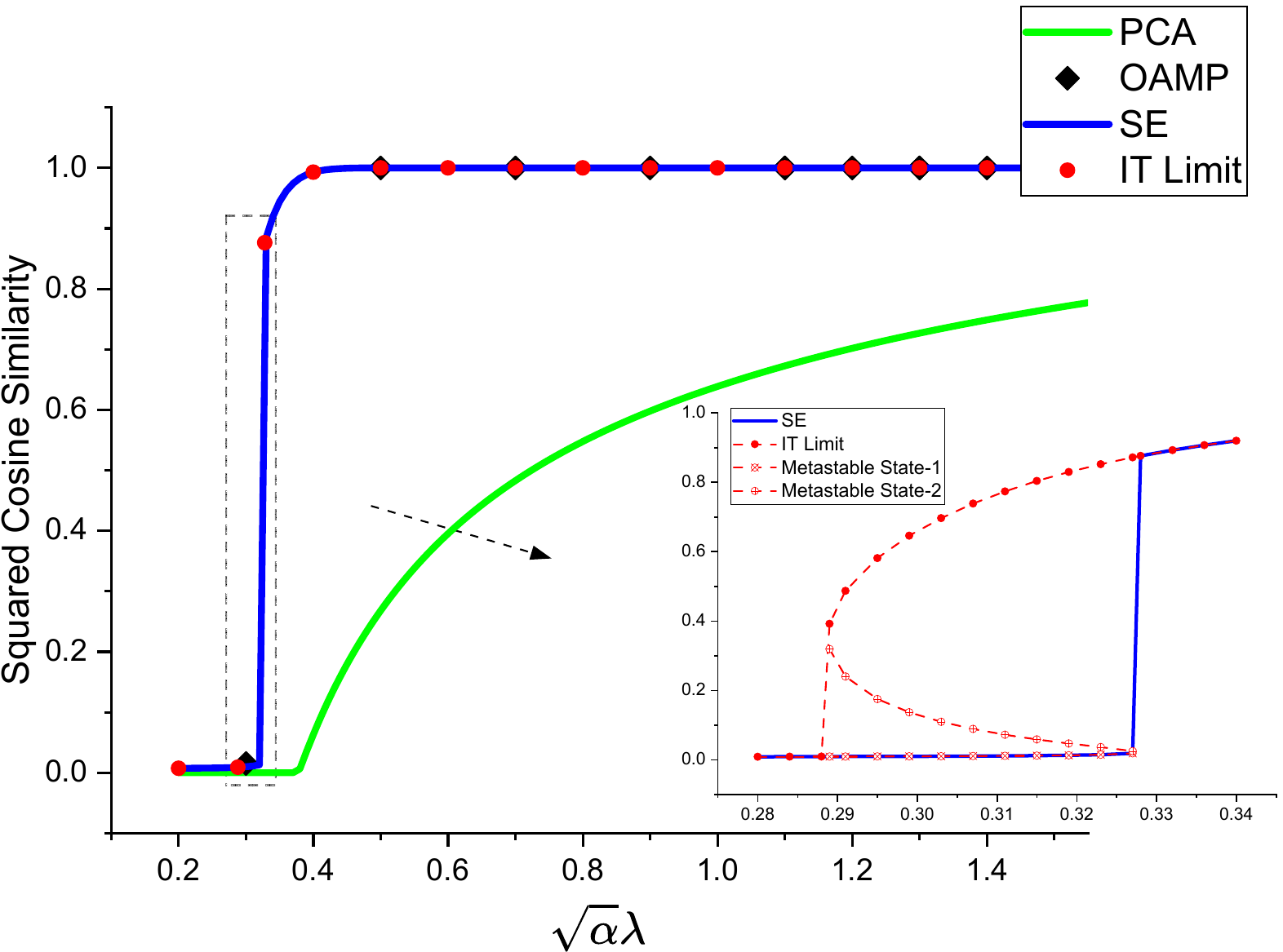}
\caption*{(\thesubfigure.1) $\su$-channel.}
\end{subfigure}
\hfill
\begin{subfigure}{0.49\linewidth}
\centering
\includegraphics[width=\linewidth]{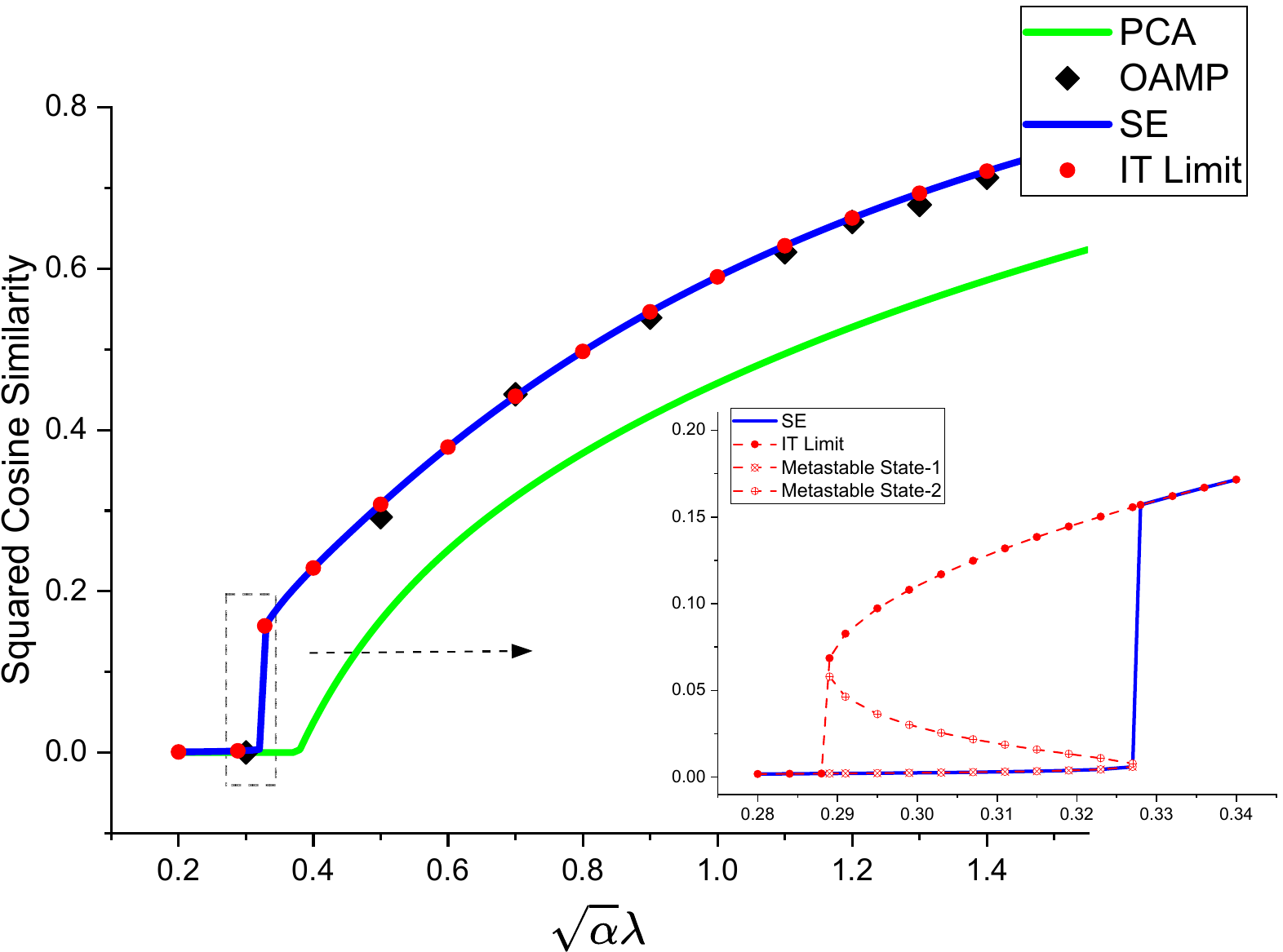}
\caption*{(\thesubfigure.2) $\sv$-channel.}
\end{subfigure}
\addtocounter{subfigure}{-1}
\caption{Multiple-branch case with $\epsilon=\frac1{12}$. The insets magnify the transition region where multiple stationary branches coexist.}
\label{Fig:Main_SE_RM_Rad_1_3_2Points_0_083333_Gauss_Ber_0_4_A_2_T_0_2_0_01_2}
\end{subfigure}
\caption{Sparse-prior examples with $\rho = 0.4$ illustrating the transition from a single stationary branch to multiple coexisting stationary branches. The RI noise is defined in \eqref{eq: RI_noise_ex_1} with $\alpha=2$, $a=1$ and $b=3$. For the Bayes-optimal estimator, the squared cosine similarity on the vertical axis coincides asymptotically with the overlaps $m_{\su}^{*}$ and $m_{\sv}^{*}$ of \eqref{Eq:overlaps}.}
\label{Fig:Main_SE_RM_Rad}
\end{figure}

\begin{figure}[!t]
\centering
\begin{subfigure}{0.49\linewidth}
\centering
\includegraphics[width=\linewidth]{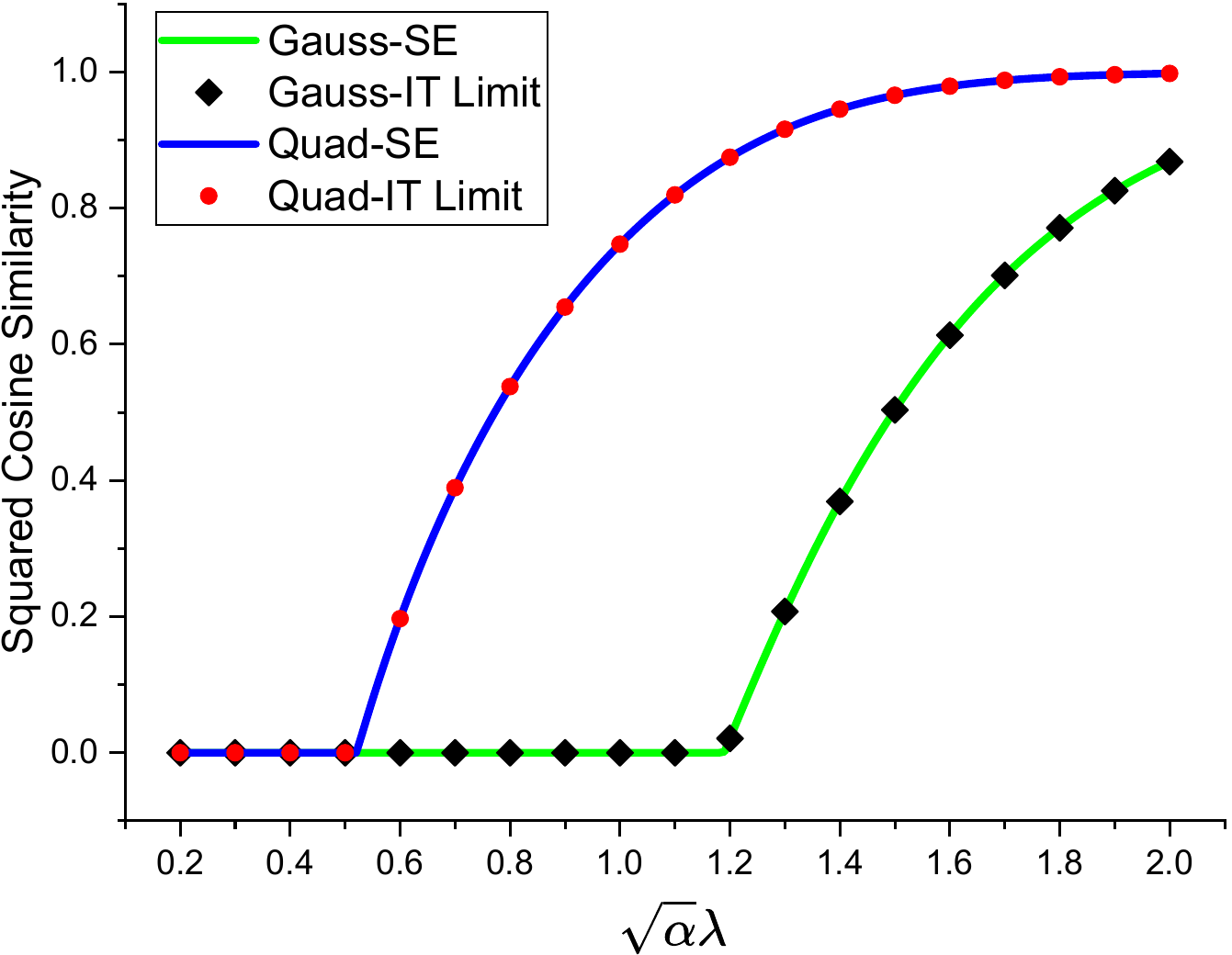}
\caption{$\su$-channel}
\end{subfigure}
\hfill
\begin{subfigure}{0.49\linewidth}
\centering
\includegraphics[width=\linewidth]{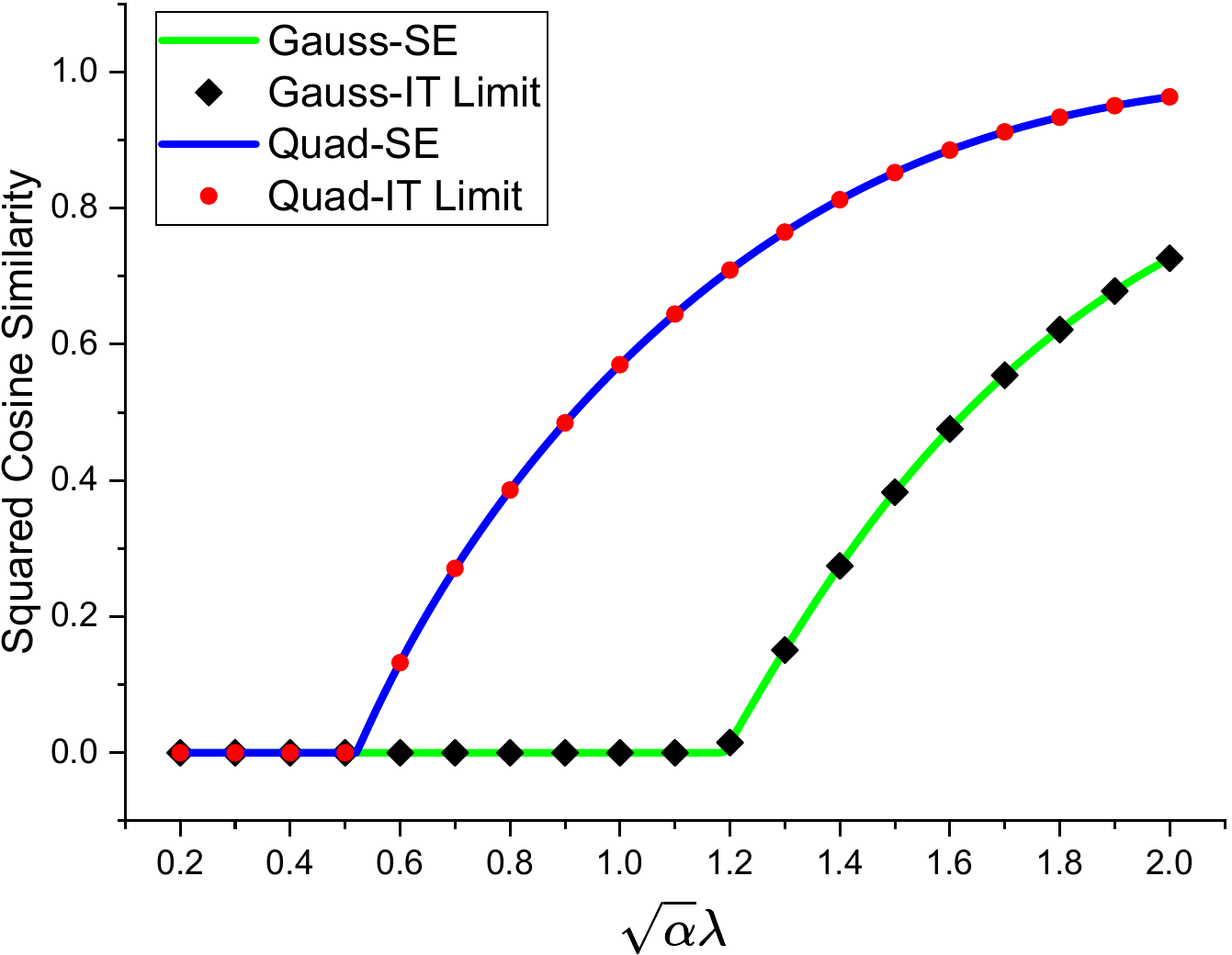}
\caption{$\sv$-channel}
\end{subfigure}
\caption{Numerical comparison in the unique informative fixed point regime for quadratic potential RI noise and standard i.i.d. Gaussian noise at the same aspect ratio $\alpha=2$. For each noise model, the fixed-point iteration and the SE fixed points of OAMP identify the same performance curve. The ground-truth signal follows the symmetric Rademacher prior defined in \eqref{eq:rademacher_prior}. As in Fig.~\ref{Fig:Main_SE_RM_Rad}, the vertical axis reports the overlaps $m_{\su}^{*}$ and $m_{\sv}^{*}$.}
\label{Fig:Main_SE_RM}
\end{figure}

\subsubsection{Regime with Multiple Fixed Points}

This subsection studies a sparse-prior setting in which the stationary branch structure changes with the sparsity level. This regime is particularly relevant because the replica free entropy comparison is required to distinguish the IT branch from metastable branches. We first consider a baseline case with a single detected stationary branch and then decrease the sparsity level to reveal a regime with multiple coexisting branches.

The ground-truth signal pair $\{\bu_{0},\bv_{0}\}$ is drawn from a two-point prior and a Gauss-Bernoulli prior, respectively:
\begin{align}
P_{\su}(u_{0})\sim\epsilon^{2}\delta_{\frac{1}{\epsilon}}+(1-\epsilon^{2})\delta_{0},\quad P_{\sv}(v_{0})\sim\rho\Normal\left(0,\frac{1}{\rho}\right)+(1-\rho)\delta_{0}.
\end{align}
The system is corrupted by an RI noise model following the example considered in \cite{Chen2026orthogonal}, with spectral density and potential given by
\begin{align}
\label{eq: RI_noise_ex_1}
\mu(\gamma)&=\frac{2}{\pi}\sqrt{(\gamma-a)(b-\gamma)},\quad V(\gamma)=2[\gamma^{2}-(a+b)\gamma]+(\alpha-1)\log\gamma,\quad\gamma\in[a,b].
\end{align}
This model has an explicit limiting spectral density and provides a convenient setting to study the replica-SE correspondence beyond the Marchenko-Pastur case.

As shown in Fig. \ref{Fig:Main_SE_RM_Rad_1_3_2Points_0_25_Gauss_Ber_0_4_A_2_T_0_2_0_01_2}, when $\epsilon=\frac{1}{4}$, the solution obtained by fixed-point iteration coincides with the only stationary solution detected by the grid-based residual minimization described in Appendix \ref{App:NumericalSP}. The SE prediction of OAMP agrees with the replica prediction, and no competing stationary solution is detected in the prescribed numerical domain. The finite-dimensional OAMP results at $N=4000$ and $M=8000$ also closely follow the SE prediction. Since no free-entropy-based selection among the detected solutions is required, the unique detected solution is taken as the IT prediction. Hence, OAMP is conjectured to achieve the IT limit in this regime.

We next decrease the sparsity parameter to $\epsilon=\frac{1}{12}$. As shown in Fig. \ref{Fig:Main_SE_RM_Rad_1_3_2Points_0_083333_Gauss_Ber_0_4_A_2_T_0_2_0_01_2}, the fixed-point iteration and the finite-dimensional OAMP results continue to follow the SE prediction over a broad parameter range. However, the additional stationary solutions occur within a narrow transition window that is not visible at the scale of the main panels. This region is therefore magnified in the corresponding insets.

To examine the transition, we apply the grid-based residual minimization described in Appendix \ref{App:NumericalSP} over the interval $\sqrt{\alpha}\lambda\in[0.289,0.327]$. Within this interval, the search reveals a region in which the saddle point equations \eqref{Eq:Unit_Saddle_Point} admit three stationary solutions. At each SNR, we evaluate $f_{\Sf{RS}}$ at every detected solution. The solution attaining the largest value is identified as the IT solution, while the other two have smaller free entropies and are classified as metastable. The IT selection changes at $\lambda_{\Sf{IT}}$, whereas the fixed-point iteration, initialized close to zero, and the SE of OAMP continue to select one of the metastable solutions until $\lambda_{\Sf{alg}}$. The finite-dimensional OAMP results follow the same SE-selected solution. Consequently, for $\lambda\in(\lambda_{\Sf{IT}},\lambda_{\Sf{alg}})$, the OAMP prediction differs from the IT prediction, giving rise to the statistical-to-computational gap.

This example demonstrates that the equivalence between the replica saddle point equations and the SE fixed point equations of OAMP does not imply that the SE dynamics always select the IT solution. The equivalence identifies the same set of stationary solutions, whereas the SE dynamics select a particular solution through their initialization and basin of attraction. In the transition region, the change between the selected solutions occurs simultaneously in the $\su$- and $\sv$-channels because their fixed point equations are coupled. Nevertheless, the two channels can exhibit substantially different reconstruction performance: the $\su$-channel may approach nearly perfect recovery, whereas the $\sv$-channel can remain only partially recovered.\\

\subsubsection{Regime with a Unique Informative Fixed Point}

We next consider a regime where only one informative stationary branch is detected. In this setting, no competing informative branch is detected, and the comparison between the replica prediction and the SE prediction of OAMP is direct. Specifically, we consider the quadratic potential RI noise model characterized by
\begin{align}
\mu(\gamma)&=\frac{\kappa}{2\pi}\sqrt{(\gamma-a)(b-\gamma)}\left(1+\frac{a+b}{2\gamma}\right),\quad V(\gamma)=\frac{\kappa}{2}\gamma^{2},\quad\gamma\in[a,b],
\end{align}
where $\kappa>0$ and $[a,b]$ is the associated spectral support. The explicit expressions for the support endpoints are standard and are omitted here for brevity; see Chapter 4 of \cite{potters2020first} and Chapter 1, particularly Eqs. (1.67)-(1.68), of \cite{log_gases}. In this numerical experiment, the ground-truth signal pair $\{\bu_{0},\bv_{0}\}$ follows a symmetric Rademacher prior:
\begin{align}
P_{\su}(\cdot)&=P_{\sv}(\cdot)=\frac12\delta_{+1}+\frac12\delta_{-1}.\label{eq:rademacher_prior}
\end{align}
For these symmetric priors, we assume infinitesimal side information to resolve the global sign ambiguity, with its strength taken to zero after the thermodynamic limit.

In the numerical experiment, we set $\kappa=2$ for the quadratic potential model and compare it with standard i.i.d. Gaussian noise at the same aspect ratio $\alpha=2$. As shown in Fig. \ref{Fig:Main_SE_RM}, for each noise model, the replica fixed-point iteration and the OAMP SE yield the same performance curve. The grid-based search detects no competing informative stationary branch in the prescribed numerical domain for either noise model, so the unique detected informative branch is taken as the corresponding IT prediction. For each noise model, the overlaps in both channels vanish below the transition, become nonzero at the same threshold, and then increase with the SNR. Compared with the standard i.i.d. Gaussian case, the quadratic potential RI noise model exhibits an earlier transition, after which the overlaps increase more rapidly and remain higher over the range shown. Unlike the sparse-prior setting considered above, no coexistence window or switching between competing informative stationary solutions is observed.

Although the two channels undergo the transition simultaneously within each noise model because their fixed point equations are coupled, their reconstruction performance can differ owing to the rectangular asymmetry of the model. In particular, under quadratic potential RI noise, the $\su$-channel can approach essentially perfect recovery, whereas the $\sv$-channel remains slightly below perfect recovery over the range shown. Under standard i.i.d. Gaussian noise, the overlaps in both channels remain noticeably lower, particularly in the $\sv$-channel. This example provides numerical evidence consistent with the equivalence established above: with no competing informative stationary solutions detected in the prescribed numerical domain, the replica fixed-point iteration and the OAMP SE yield the same unique detected informative solution for each noise model.

\section{Derivation of the Information-Theoretic Results by the Replica Method}\label{Sec:RM}

This section derives the replica predictions stated in Result \ref{Result:Free_Entropy} and Result \ref{Result:Saddle_Point}. We will start by reformulating the partition function $\Omega(\bZ_{0},\bY)$ in \eqref{Eq:PartitionFunction} in a more convenient form. The derivation of the free entropy is then presented, followed by the simplification of the saddle point equations under the Bayes-optimal setting. Technical algebraic steps are deferred to Appendix~\ref{Sec:Proof_RM}.

\subsection{Reformulating the Free Entropy}

\subsubsection{Block-Symmetrization of the Trace Potential}\label{Sec:Block_Symmetrization}

We begin by reformulating the trace potential $\Tr\,V(\bZ_{a}\bZ_{a}^{\top})$ appearing in \eqref{Eq:PartitionFunction}. This potential is naturally expressed through the resolvent of $\bZ_{a}\bZ_{a}^{\top}$, which remains tied to the rectangular formulation and cannot be handled directly by the replica framework developed for symmetric RI models \cite{barbier2023fundamental,barbier2025information,dudeja2026optimality}. We will first introduce a block-symmetrization of the trace potential, and then factorize the partition function into a noise-dependent coefficient and a candidate-signal-dependent contribution. The latter is further rewritten through matrix-valued functions and functional Dirac constraints, providing the functional representation to which the replica analysis is subsequently applied.

Since the analytic potential $V(\cdot)$ acts on the spectrum of $\bZ_{a}\bZ_{a}^{\top}$, Cauchy's integral formula provides a convenient contour representation that separates the matrix dependence from the analytic potential \cite{bodin2024random,barbier2025information}. Let $\{\sigma_{i}^{2}\}_{i=1}^{N}$ denote the positive eigenvalues of $\bZ_{a}\bZ_{a}^{\top}$. Under our assumptions on the noise and the signal priors, we choose a fixed closed contour $\Gamma_{\sy}$ that is common to all indices $a$ and, with probability one, encloses $\{\sigma_{i}^{2}\}_{i=1}^{N}$ for all sufficiently large dimensions. Then,
\begin{align}
\Tr\,V(\bZ_{a}\bZ_{a}^{\top})&=\frac{1}{2\pi\imagUnit}\oint_{\Gamma_{\sy}}{\dd y}\,V(y)\Tr\,\cG_{\bZ_{a}\bZ_{a}^{\top}}(y)\nonumber\\
&=\frac{1}{\pi\imagUnit}\oint_{\Gamma_{\sx}}{\dd x}\,xV(x^{2})\Tr\,\cG_{\bZ_{a}\bZ_{a}^{\top}}(x^{2}),\label{Eq:Cauchy_{i}ntegral}
\end{align}
where the second equality follows from the change of variable $y=x^{2}$. The positive square-root branch maps $\Gamma_{\sy}$ onto a fixed closed contour $\Gamma_{\sx}$, common to all indices $a$ and enclosing the positive singular values $\{\sigma_{i}\}_{i=1}^{N}$.

Equation \eqref{Eq:Cauchy_{i}ntegral} expresses the trace potential in terms of the resolvent of $\bZ_{a}\bZ_{a}^{\top}$. However, this resolvent is still associated with the rectangular formulation, preventing the direct application of the replica framework developed for symmetric RI models \cite{barbier2023fundamental,barbier2025information}. To recover the required spectral symmetry, we introduce the auxiliary block-symmetric matrix \cite{pourkamali2023rectangularrotationalinvariantestimator} (below matrices of zeroes are taken with proper dimensions):
\begin{align}
\tilde{\bZ}_{a}&\triangleq\begin{bmatrix}\bm{0}&\bZ_{a}\\\bZ_{a}^{\top}&\bm{0}\end{bmatrix}=\tilde{\bY}-\frac{\lambda}{N}\tilde{\bX}_{a}\tilde{\bI}\tilde{\bX}_{a}^{\top}\in\R^{(N+M)\times(N+M)},
\end{align}
where 
\begin{align}
\tilde{\bY}\triangleq\begin{bmatrix}\bm{0}&\bY\\\bY^{\top}&\bm{0}\end{bmatrix}\in\R^{(N+M)\times(N+M)},\quad\tilde{\bX}_{a}\triangleq\begin{bmatrix}\bu_{a}&\bm{0}\\\bm{0}&\bv_{a}\end{bmatrix}\in\R^{(N+M)\times2},\quad\tilde{\bI}\triangleq\begin{bmatrix}0&1\\1&0\end{bmatrix}\in\R^{2\times2}.
\end{align}
The spectrum of $\tilde{\bZ}_{a}$ consists of the eigenvalues $\{\pm\sigma_{i}\}_{i=1}^{N}$ together with $M-N$ zero eigenvalues. Consequently, the corresponding resolvent traces satisfy
\begin{align}
\Tr\,\cG_{\tilde{\bZ}_{a}}(x)=\sum_{i=1}^{N}\left(\frac{1}{x-\sigma_{i}}+\frac{1}{x+\sigma_{i}}\right)+\frac{M-N}{x},\quad\Tr\,\cG_{\bZ_{a}\bZ_{a}^{\top}}(x^{2})=\sum_{i=1}^{N}\frac{1}{x^{2}-\sigma^{2}_{i}},
\end{align}
which implies
\begin{align}
\Tr\,\cG_{\bZ_{a}\bZ_{a}^{\top}}(x^{2})&=\frac{1}{2x}\left(\Tr\,\cG_{\tilde{\bZ}_{a}}(x)-\frac{M-N}{x}\right).
\end{align}

Substituting this identity into \eqref{Eq:Cauchy_{i}ntegral} yields
\begin{align}
\Tr\,V(\bZ_{a}\bZ_{a}^{\top})&=\frac{1}{\pi\imagUnit}\oint_{\Gamma_{\sx}}{\dd x}\,xV(x^{2})\Tr\,\cG_{\bZ_{a}\bZ_{a}^{\top}}(x^{2})\nonumber\\
&=\frac{1}{2\pi\imagUnit}\oint_{\Gamma_{\sx}}{\dd x}\,V(x^{2})\Tr\,\cG_{\tilde{\bZ}_{a}}(x)-\frac{1}{2\pi\imagUnit}\oint_{\Gamma_{\sx}}{\dd x}\,V(x^{2})\frac{M-N}{x}.\label{Eq:Trace_V}
\end{align}

Equation \eqref{Eq:Trace_V} reformulates the trace potential in terms of the resolvent of the auxiliary symmetric matrix while preserving the probabilistic structure of the original rectangular model. In particular, the two signal vectors remain structurally independent throughout the transformation, distinguishing the present setting from symmetric models \cite{barbier2023fundamental,barbier2025information,dudeja2026optimality}. This equivalent symmetric formulation provides the analytical foundation for extending the replica analysis developed for symmetric models \cite{barbier2023fundamental,barbier2025information} to the rectangular setting considered in this work.\\

\subsubsection{Factorization of the Partition Function}\label{sec:matrix_decoupling}

We reformulate the partition function $\Omega(\bZ_{0},\bY)$ in \eqref{Eq:PartitionFunction} as follows:
\begin{align}
\Omega(\bZ_{0},\bY)&\overset{\text{\eqref{Eq:Trace_V}}}{=}\int{\dd P_{\su}(\bu_{a})\dd P_{\sv}(\bv_{a})}\,\exp\Bigg(-\frac{N}{4\pi\imagUnit}\oint_{\Gamma_{\sx}}{\dd x}\,V(x^{2})\Tr\,\cG_{\tilde{\bZ}_{a}}(x)+\frac{N}{4\pi\imagUnit}\oint_{\Gamma_{\sx}}{\dd x}\,V(x^{2})\frac{M-N}{x}\Bigg)\exp\left(\frac{N}{2}\Tr\,V(\bZ_{0}\bZ_{0}^{\top})\right)\nonumber\\
&\overset{\text{(A)}}{=}\int{\dd P_{\su}(\bu_{a})\dd P_{\sv}(\bv_{a})}\,\exp\Bigg(-\frac{N}{4\pi\imagUnit}\oint_{\Gamma_{\sx}}{\dd x}\,V(x^{2})\Tr\,\left(\cG_{\tilde{\bY}}(x)-\frac{\lambda}{N}\cG_{\tilde{\bY}}(x)\tilde{\bX}_{a}\left(\tilde{\bI}+\frac{\lambda}{N}\tilde{\bX}_{a}^{\top}\cG_{\tilde{\bY}}(x)\tilde{\bX}_{a}\right)^{-1}\tilde{\bX}_{a}^{\top}\cG_{\tilde{\bY}}(x)\right)\nonumber\\
&\quad+\frac{N}{4\pi\imagUnit}\oint_{\Gamma_{\sx}}{\dd x}\,V(x^{2})\frac{M-N}{x}\Bigg)\exp\left(\frac{N}{2}\Tr\,V(\bZ_{0}\bZ_{0}^{\top})\right)\nonumber\\
&\overset{\text{(B)}}{=}\int{\dd P_{\su}(\bu_{a})\dd P_{\sv}(\bv_{a})}\,\exp\Bigg(-\frac{N}{4\pi\imagUnit}\oint_{\Gamma_{\sx}}{\dd x}\,V(x^{2})\left(\Tr\,\cG_{\tilde{\bY}}(x)+\frac{\dd}{\dd x}\log\left|\tilde{\bI}+\frac{\lambda}{N}\tilde{\bX}_{a}^{\top}\cG_{\tilde{\bY}}(x)\tilde{\bX}_{a}\right|\right)\nonumber\\
&\quad+\frac{N}{4\pi\imagUnit}\oint_{\Gamma_{\sx}}{\dd x}\,V(x^{2})\frac{M-N}{x}\Bigg)\exp\left(\frac{N}{2}\Tr\,V(\bZ_{0}\bZ_{0}^{\top})\right)\nonumber\\
&\overset{\text{(C)}}{=}\int{\dd P_{\su}(\bu_{a})\dd P_{\sv}(\bv_{a})}\,\exp\Bigg(\frac{N}{2\pi\imagUnit}\oint_{\Gamma_{\sx}}{\dd x}\,xV^{\prime}(x^{2})\log\left|\tilde{\bI}+\frac{\lambda}{N}\tilde{\bX}_{a}^{\top}\cG_{\tilde{\bY}}(x)\tilde{\bX}_{a}\right|\nonumber\\
&\quad-\frac{N}{4\pi\imagUnit}\oint_{\Gamma_{\sx}}{\dd x}\,V(x^{2})\Tr\,\cG_{\tilde{\bY}}(x)+\frac{N}{4\pi\imagUnit}\oint_{\Gamma_{\sx}}{\dd x}\,V(x^{2})\frac{M-N}{x}\Bigg)\exp\left(\frac{N}{2}\Tr\,V(\bZ_{0}\bZ_{0}^{\top})\right)\nonumber\\
&\overset{\text{(D)}}{=}\underbrace{\exp\left(-\frac{N}{2}\left(\Tr\,V(\bY\bY^{\top})-\Tr\,V(\bZ_{0}\bZ_{0}^{\top})\right)\right)}_{\triangleq\varphi(\bZ_{0},\bY)}\nonumber\\
&\quad\times\underbrace{\int{\dd P_{\su}(\bu_{a})\dd P_{\sv}(\bv_{a})}\,\exp\left(\frac{N}{2\pi\imagUnit}\oint_{\Gamma_{\sx}}{\dd x}\,xV^{\prime}(x^{2})\log\left|\tilde{\bI}+\frac{\lambda}{N}\tilde{\bX}_{a}^{\top}\cG_{\tilde{\bY}}(x)\tilde{\bX}_{a}\right|\right)}_{\triangleq\phi(\bY)}.\label{Eq:Partition_Func}
\end{align}
Here, step (A) follows from the Woodbury matrix identity. Step (B) uses $\dd\log|\bX|=\Tr\,(\bX^{-1}\dd\bX)$ and $\dd\cG_{\tilde{\bY}}(x)=-\cG_{\tilde{\bY}}^{2}(x)\dd x$ to rewrite the second trace term as a logarithmic determinant derivative. Step (C) follows from integration by parts in the contour integral. Step (D) follows from applying the contour representation in \eqref{Eq:Trace_V} to the observation matrix $\bY$. This factorization separates the term $\varphi(\bZ_{0},\bY)$, which depends only on the observation and noise matrices and is analyzed separately in Appendix \ref{Sec:C_y}, from the remaining term $\phi(\bY)$ used in the subsequent free entropy reformulation.\\

\subsubsection{Final Steps}

Using the factorization of the partition function in \eqref{Eq:Partition_Func}, the asymptotic free entropy in \eqref{Eq:Free_Entropy} becomes
\begin{align}
f&=\lim_{N\to\infty}\frac{1}{N}\E_{\bu_{0},\bv_{0},\bZ_{0}}\log\Omega(\bZ_{0},\bY)\nonumber\\
&=\lim_{N\to\infty}\frac{1}{N}\E_{\bu_{0},\bv_{0},\bZ_{0}}\log\phi(\bY)+\lim_{N\to\infty}\frac{1}{N}\E_{\bu_{0},\bv_{0},\bZ_{0}}\log\varphi(\bZ_{0},\bY)\nonumber\\
&\overset{\text{(E)}}{=}\lim_{N\to\infty}\frac{1}{N}\E_{\bu_{0},\bv_{0},\bZ_{0}}\log\phi(\bu_{0},\bv_{0},\bZ_{0})+\log\varphi_{\alpha,\lambda}.\label{Eq:free_entropy}
\end{align}
In step (E), $\frac{1}{N}\E_{\bu_{0},\bv_{0},\bZ_{0}}\log\varphi(\bZ_{0},\bY)$ converges to $\log\varphi_{\alpha,\lambda}$, which depends only on $\alpha$ and $\lambda$ and is evaluated in Appendix \ref{Sec:C_y}. We also substitute $\bY=\frac{\lambda}{N}\bu_{0}\bv_{0}^{\top}+\bZ_{0}$ from \eqref{eq:model_def}, thereby making the dependence of $\phi(\cdot)$ on $\bu_{0}$, $\bv_{0}$, and $\bZ_{0}$ explicit.

It remains to evaluate $\phi(\bu_{0},\bv_{0},\bZ_{0})$. For this purpose, we define
\begin{align}
\tilde{\bZ}_{0}\triangleq\begin{bmatrix}\bm{0}&\bZ_{0}\\\bZ_{0}^{\top}&\bm{0}\end{bmatrix}\in\R^{(N+M)\times(N+M)},\quad\tilde{\bX}_{0}\triangleq\begin{bmatrix}\bu_{0}&\bm{0}\\\bm{0}&\bv_{0}\end{bmatrix}\in\R^{(N+M)\times2}.\label{Eq:Definition_tZ_tX}
\end{align}
The symmetrized form of model \eqref{eq:model_def} is
\begin{align}
\tilde{\bY}=\frac{\lambda}{N}\tilde{\bX}_{0}\tilde{\bI}\tilde{\bX}_{0}^{\top}+\tilde{\bZ}_{0}.
\end{align}
With this representation, $\phi(\bY)$ can be rewritten as
\begin{align}
\phi(\bY)&=\int{\dd P_{\su}(\bu_{a})\dd P_{\sv}(\bv_{a})}\,\exp\left(\frac{N}{2\pi\imagUnit}\oint_{\Gamma_{\sx}}{\dd x}\,xV^{\prime}(x^{2})\log\left|\tilde{\bI}+\frac{\lambda}{N}\tilde{\bX}_{a}^{\top}\cG_{\tilde{\bY}}(x)\tilde{\bX}_{a}\right|\right)\nonumber\\
&\overset{\text{(F)}}{=}\int{\cD\bB_{a}\dd P_{\su}(\bu_{a})\dd P_{\sv}(\bv_{a})}\,\exp\left(\frac{N}{2\pi\imagUnit}\oint_{\Gamma_{\sx}}{\dd x}\,xV^{\prime}(x^{2})\log|\tilde{\bI}+\lambda\bB_{a}(x)|\right)\delta\left(N\bB_{a}(x)-\tilde{\bX}_{a}^{\top}\cG_{\tilde{\bY}}(x)\tilde{\bX}_{a}\right)\nonumber\\
&\overset{\text{(G)}}{=}\int{\cD\bB_{a}\dd P_{\su}(\bu_{a})\dd P_{\sv}(\bv_{a})}\,\exp\left(\frac{N}{2\pi\imagUnit}\oint_{\Gamma_{\sx}}{\dd x}\,xV^{\prime}(x^{2})\log|\tilde{\bI}+\lambda\bB_{a}(x)|\right)\nonumber\\
&\quad\times\delta\left(N\bB_{a}(x)-\tilde{\bX}_{a}^{\top}\left[\cG_{\tilde{\bZ}_{0}}(x)+\frac{\lambda}{N}\cG_{\tilde{\bZ}_{0}}(x)\tilde{\bX}_{0}\left(\tilde{\bI}-\frac{\lambda}{N}\tilde{\bX}_{0}^{\top}\cG_{\tilde{\bZ}_{0}}(x)\tilde{\bX}_{0}\right)^{-1}\tilde{\bX}_{0}^{\top}\cG_{\tilde{\bZ}_{0}}(x)\right]\tilde{\bX}_{a}\right)\nonumber\\
&\overset{\text{(H)}}{=}\int{\cD[\bB_{a},\bG_{a},\bM_{a}]\dd P_{\su}(\bu_{a})\dd P_{\sv}(\bv_{a})}\,\exp\left(\frac{N}{2\pi\imagUnit}\oint_{\Gamma_{\sx}}{\dd x}\,xV^{\prime}(x^{2})\log|\tilde{\bI}+\lambda\bB_{a}(x)|\right)\nonumber\\
&\quad\times\delta\left(N\bB_{a}(x)-N\bG_{a}(x)-N\lambda\bM_{a}(x)(\tilde{\bI}-\lambda\bC(x))^{-1}\bM^{\top}_{a}(x)\right)\nonumber\\
&\quad\times\delta\left(N\bG_{a}(x)-\tilde{\bX}_{a}^{\top}\cG_{\tilde{\bZ}_{0}}(x)\tilde{\bX}_{a}\right)\delta\left(N\bM_{a}(x)-\tilde{\bX}_{a}^{\top}\cG_{\tilde{\bZ}_{0}}(x)\tilde{\bX}_{0}\right)\nonumber\\
&\overset{\text{(I)}}{=}\int{\cD[\bB_{a},\bG_{a},\bM_{a},\hat{\bB}_{a},\hat{\bG}_{a},\hat{\bM}_{a}]\dd P_{\su}(\bu_{a})\dd P_{\sv}(\bv_{a})}\,\exp\left(\frac{N}{2\pi\imagUnit}\oint_{\Gamma_{\sx}}{\dd x}\,xV^{\prime}(x^{2})\log|\tilde{\bI}+\lambda\bB_{a}(x)|\right)\nonumber\\
&\quad\times\exp\left(\oint_{\Gamma_{\sx}}{\dd x}\,\Tr\,\hat{\bB}_{a}(x)\left[N\bB_{a}(x)-N\bG_{a}(x)-N\lambda\bM_{a}(x)(\tilde{\bI}-\lambda\bC(x))^{-1}\bM^{\top}_{a}(x)\right]\right)\nonumber\\
&\quad\times\exp\left(\oint_{\Gamma_{\sx}}{\dd x}\,\Tr\,\hat{\bG}_{a}(x)\left(N\bG_{a}(x)-\tilde{\bX}_{a}^{\top}\cG_{\tilde{\bZ}_{0}}(x)\tilde{\bX}_{a}\right)\right)\exp\left(\oint_{\Gamma_{\sx}}{\dd x}\,\Tr\,\hat{\bM}_{a}^{\top}(x)\left(N\bM_{a}(x)-\tilde{\bX}_{a}^{\top}\cG_{\tilde{\bZ}_{0}}(x)\tilde{\bX}_{0}\right)\right).\label{Eq:Phi_Y}
\end{align}
The $2\times2$ matrix-valued functions introduced in steps (F) and (H) are
\begin{align}
\bB_{a}(x)\triangleq\frac{1}{N}\tilde{\bX}_{a}^{\top}\cG_{\tilde{\bY}}(x)\tilde{\bX}_{a},\quad\bG_{a}(x)\triangleq\frac{1}{N}\tilde{\bX}_{a}^{\top}\cG_{\tilde{\bZ}_{0}}(x)\tilde{\bX}_{a},\quad\bM_{a}(x)\triangleq\frac{1}{N}\tilde{\bX}_{a}^{\top}\cG_{\tilde{\bZ}_{0}}(x)\tilde{\bX}_{0},\quad\bC(x)\triangleq\frac{1}{N}\tilde{\bX}_{0}^{\top}\cG_{\tilde{\bZ}_{0}}(x)\tilde{\bX}_{0}.\label{Eq:Self_Overlap}
\end{align}
All but $\bM_{a}(x)$ are symmetric. Step (G) follows directly from the Woodbury matrix identity applied to the resolvent of $\tilde{\bY}$. Step (I) replaces the functional Dirac delta constraints by their Fourier representations and introduces the conjugate functions $\{\hat{\bB}_{a},\hat{\bG}_{a},\hat{\bM}_{a}\}$. The conjugate functions are likewise symmetric except for $\hat{\bM}_{a}(x)$. We use $\cD[\bB_{a},\bG_{a},\bM_{a}]$ and $\cD[\bB_{a},\bG_{a},\bM_{a},\hat{\bB}_{a},\hat{\bG}_{a},\hat{\bM}_{a}]$ as shorthand for the corresponding functional integration measures. This completes the preliminary reformulation of the free entropy.

\subsection{Derivation of Result \ref{Result:Free_Entropy} by the Replica Method}

\subsubsection{Replica Trick and Order Parameters}

We start the replica calculation from the functional integral representation of the free entropy \eqref{Eq:free_entropy} and \eqref{Eq:Phi_Y}. The expectation of the logarithm in \eqref{Eq:Phi_Y} is handled through the replica trick \cite{parisi1980sequence,mezard1987spin,mezard2009information,parisi2020theory}, which introduces an auxiliary parameter $n$, later interpreted as the number of replicas of the system, and rewrites the free entropy as
\begin{align}
f&=\lim_{N\to\infty}\frac{1}{N}\E_{\bu_{0},\bv_{0},\bZ_{0}}\log\phi(\bu_{0},\bv_{0},\bZ_{0})+\log\varphi_{\alpha,\lambda}\nonumber\\
&=\lim_{N\to\infty}\frac{1}{N}\lim_{n\to0}\frac{1}{n}\log\E_{\bu_{0},\bv_{0},\bZ_{0}}\phi^{n}(\bu_{0},\bv_{0},\bZ_{0})+\log\varphi_{\alpha,\lambda}\nonumber\\
&\overset{\text{(A)}}{=}\lim_{n\to0}\frac{1}{n}\underbrace{\lim_{N\to\infty}\frac{1}{N}\log\E_{\bu_{0},\bv_{0},\bZ_{0}}\phi^{n}(\bu_{0},\bv_{0},\bZ_{0})}_{\triangleq f_{n}}+\log\varphi_{\alpha,\lambda}.\label{Eq:f_n_Variational}
\end{align}
Here, (A) assumes the commutation of the thermodynamic and replica limits.

We next evaluate the replicated functional $f_{n}$. Although the replica limit in \eqref{Eq:f_n_Variational} is taken as a real limit $n\to0$, the calculation is first carried out for positive integer $n$, for which $\phi^{n}(\cdot)$ can be represented as a product over $n$ replicated candidate signal pairs:
\begin{align}
f_{n}&=\lim_{N\to\infty}\frac{1}{N}\log\E_{\bu_{0},\bv_{0},\bZ_{0}}\prod_{a=1}^{n}\phi(\bu_{0},\bv_{0},\bZ_{0}).\label{Eq:f_n_Replicated}
\end{align}
As before, the index $a=0$ is reserved for the ground-truth signal pair $\{\bu_{0},\bv_{0}\}$, while $a=1,\ldots,n$ index the replicated candidate signal pairs. The expression obtained for integer $n$ is then assumed to admit an analytic continuation to real $n$ in a neighborhood of zero, where the replica limit $n\to0$ is taken.

For the remainder, we employ the RS ansatz \cite{parisi1980sequence,mezard1987spin,mezard2009information,takahashi2022macroscopic}: all replicated order parameters are assumed to be invariant under arbitrary permutations of the candidate indices $a\in\{1,\cdots,n\}$. This applies both to the matrix-valued functions introduced above and to the scalar overlap parameters introduced below through the replica matrices. It substantially reduces the number of order parameters.

For Bayes-optimal inference, replica symmetry was established in \cite{barbier_strong_CMP} in the strong sense of concentration of all multioverlaps. The result holds for generic likelihoods, while the hypotheses concern only the prior, which must factorize over the signal components, and the concentration of the free entropy. Our likelihood comes from a RI noise ensemble, and is therefore covered. Our priors $P_{\su}(\cdot)$ and $P_{\sv}(\cdot)$ are factorized by construction. Only the concentration of the free entropy remains to be checked. This suggests that the present model is replica symmetric, and we adopt the RS ansatz on that basis. Under it,
\begin{align}
\{\bB_{a},\bG_{a},\bM_{a},\hat{\bB}_{a},\hat{\bG}_{a},\hat{\bM}_{a}\}=\{\bB,\bG,\bM,\hat{\bB},\hat{\bG},\hat{\bM}\},\quad a\in\{1,\cdots,n\}.
\end{align}
Without loss of generality, we rescale the conjugate functions so that all contour integrals carry the common prefactor $\frac{1}{2\pi\imagUnit}$:
\begin{align}
\{2\pi\imagUnit\hat{\bB},2\pi\imagUnit\hat{\bG},2\pi\imagUnit\hat{\bM}\}\to\{\hat{\bB},\hat{\bG},\hat{\bM}\}.
\end{align}

Under the RS ansatz, the replicated functional simplifies to
\begin{align}
f_{n}&=\lim_{N\to\infty}\frac{1}{N}\log\int{\cD[\bB,\bG,\bM,\hat{\bB},\hat{\bG},\hat{\bM}]\prod_{a=0}^{n}\dd P_{\su}(\bu_{a})\prod_{a=0}^{n}\dd P_{\sv}(\bv_{a})}\,\exp(Nn\tilde{g})\nonumber\\
&\quad\times\E_{\bZ_{0}}\exp\left(-\sum_{a=1}^{n}\frac{1}{2\pi\imagUnit}\oint_{\Gamma_{\sx}}{\dd x}\,\Tr\,\left(\hat{\bG}(x)\tilde{\bX}_{a}^{\top}\cG_{\tilde{\bZ}_{0}}(x)\tilde{\bX}_{a}+\hat{\bM}^{\top}(x)\tilde{\bX}_{a}^{\top}\cG_{\tilde{\bZ}_{0}}(x)\tilde{\bX}_{0}\right)\right),\label{eq:fn_rep}
\end{align}
where $\cD[\bB,\bG,\bM,\hat{\bB},\hat{\bG},\hat{\bM}]$ denotes the corresponding functional integration measure, and
\begin{align}
\tilde{g}(\bB,\bG,\bM,\hat{\bB},\hat{\bG},\hat{\bM})&\triangleq\frac{1}{2\pi\imagUnit}\oint_{\Gamma_{\sx}}{\dd x}\,\Bigg\{xV'(x^{2})\log|\tilde{\bI}+\lambda\bB(x)|+\Tr\,\hat{\bG}(x)\bG(x)+\Tr\,\hat{\bM}^{\top}(x)\bM(x)\nonumber\\
&\quad+\Tr\,\hat{\bB}(x)\left[\bB(x)-\bG(x)-\lambda\bM(x)(\tilde{\bI}-\lambda\bC(x))^{-1}\bM^{\top}(x)\right]\Bigg\},\label{Eq:bar_g}
\end{align}
For notational simplicity, the arguments of all such functions are displayed only in their definitions and omitted elsewhere.

We will show that the noise average $\E_{\bZ_{0}}\exp(\cdot)$ in \eqref{eq:fn_rep} depends on the replicas $\{\bu_{a}\}_{a=0}^{n}$ and $\{\bv_{a}\}_{a=0}^{n}$ only through their pairwise inner products. This motivates the introduction of the overlap matrices, commonly called the replica matrices:
\begin{align}
\bQ_{\su}\triangleq\frac{1}{N}\bU^{\top}\bU\in\R^{(n+1)\times(n+1)},\quad\bQ_{\sv}\triangleq\frac{1}{M}\bV^{\top}\bV\in\R^{(n+1)\times(n+1)},\label{Eq:RS_Matrices}
\end{align}
where $\bU\triangleq[\bu_{0},\ldots,\bu_{n}]\in\R^{N\times(n+1)}$ and $\bV\triangleq[\bv_{0},\ldots,\bv_{n}]\in\R^{M\times(n+1)}$. Enforcing these definitions with Dirac delta constraints gives
\begin{align}
f_{n}&=\lim_{N\to\infty}\frac{1}{N}\log\int{\cD[\bB,\bG,\bM,\hat{\bB},\hat{\bG},\hat{\bM}]\dd\bQ_{\su}\dd\bQ_{\sv}}\,\exp(Nn\tilde{g})\exp(Ng_{\sz,N,n})\nonumber\\
&\quad\times\left(\int{\prod_{a=0}^{n}\dd P_{\su}(\bu_{a})}\,\delta(N\bQ_{\su}-\bU^{\top}\bU)\right)\left(\int{\prod_{a=0}^{n}\dd P_{\sv}(\bv_{a})}\,\delta(M\bQ_{\sv}-\bV^{\top}\bV)\right)\nonumber\\
&=\lim_{N\to\infty}\frac{1}{N}\log\int{\cD[\bB,\bG,\bM,\hat{\bB},\hat{\bG},\hat{\bM}]\dd\bQ_{\su}\dd\bQ_{\sv}}\,\exp\left(N(n\tilde{g}+g_{\sz,N,n}+g_{\su,N,n}+\alpha g_{\sv,N,n})\right),\label{eq:fn_evaluate}
\end{align}
where
\begin{align}
g_{\su,N,n}(\bQ_{\su})&\triangleq\frac{1}{N}\log\int{\prod_{a=0}^{n}\dd P_{\su}(\bu_{a})}\,\prod_{0\leq a\leq a'\leq n}\delta(NQ_{\su,a,a'}-\bu_{a}^{\top}\bu_{a'}),\label{Eq:g_u_def}\\
g_{\sv,N,n}(\bQ_{\sv})&\triangleq \frac{1}{M}\log\int{\prod_{a=0}^{n}\dd P_{\sv}(\bv_{a})}\,\prod_{0\leq a\leq a'\leq n}\delta(MQ_{\sv,a,a'}-\bv_{a}^{\top}\bv_{a'}),\label{Eq:g_v_def}\\
g_{\sz,N,n}(\hat{\bG},\hat{\bM},\bQ_{\su},\bQ_{\sv})&\triangleq\frac{1}{N}\log\E_{\bZ_{0}}\exp\left(-\sum_{a=1}^{n}\frac{1}{2\pi\imagUnit}\oint_{\Gamma_{\sx}}{\dd x}\,\Tr\,\left(\hat{\bG}(x)\tilde{\bX}_{a}^{\top}\cG_{\tilde{\bZ}_{0}}(x)\tilde{\bX}_{a}+\hat{\bM}^{\top}(x)\tilde{\bX}_{a}^{\top}\cG_{\tilde{\bZ}_{0}}(x)\tilde{\bX}_{0}\right)\right).\label{Eq:g_z_{n}}
\end{align}
In \eqref{eq:fn_evaluate} one can replace all $N$-dependent functions $g_{\bullet,N,n}$ entering the exponent by their thermodynamic limit $g_{\bullet,n}\triangleq\lim_{N\to\infty}g_{\bullet,N,n}$ without changing the expression at leading order. The resulting expression possesses the right form to evaluate the remaining integrals over the order parameters by the saddle point method, which yields
\begin{align}
f_{n}&=\Extr(n\tilde{g}+g_{\sz,n}+g_{\su,n}+\alpha g_{\sv,n}),\label{Eq:f_n}
\end{align}
where the extremization is taken over all order parameters $\mathcal{M}=\{\bB,\bG,\bM,\hat{\bB},\hat{\bG},\hat{\bM},\bQ_{\su},\bQ_{\sv}\}$. The saddle points satisfy $\nabla_{\mathcal{M}}(n\tilde{g}+g_{\sz,n}+g_{\su,n}+\alpha g_{\sv,n})=\bm{0}$, and $f_{n}$ is given by the value at the dominant saddle point.

Under the RS ansatz, the replica matrices $\bQ_{\su}$ and $\bQ_{\sv}$ are replaced by $\cQ(\beta^{(0)}_{\su},m_{\su},q_{\su},\beta_{\su})$ and $\cQ(\beta^{(0)}_{\sv},m_{\sv},q_{\sv},\beta_{\sv})$, respectively, where
\begin{align}
\cQ(\beta^{(0)},m,q,\beta)&\triangleq\left[\begin{array}{c|c}\beta^{(0)}&m\vec{\bm{1}}\\\hline m\bm{1}&(\beta-q)\Eye+q\bm{1}\vec{\bm{1}}\end{array}\right].\label{Eq:RS_Matrix}
\end{align}
Here, $\beta^{(0)}$ and $\beta$ denote the self-overlaps of the ground-truth signal and a replica, respectively, while $m$ and $q$ denote the overlap between the ground-truth signal and a replica and that between two distinct replicas, respectively. Since the signal priors have unit second moments, $\E[u_{0}^{2}]=\E[v_{0}^{2}]=1$, the ground-truth self-overlaps have unit limiting values. We retain $\beta_{\su}^{(0)}$ and $\beta_{\sv}^{(0)}$ as auxiliary scalar order parameters until the corresponding conjugate parameters are determined, as detailed in Appendix \ref{Sec:Completion}. The extremization is therefore w.r.t. the scalar $\{\beta_{\su}^{(0)},m_{\su},q_{\su},\beta_{\su},\beta_{\sv}^{(0)},m_{\sv},q_{\sv},\beta_{\sv}\}$ and functional $\{\bB,\bG,\bM,\hat{\bB},\hat{\bG},\hat{\bM}\}$ order parameters.

\subsubsection{Simplifying the Terms Entering $f_{n}$}

Starting with the simplest one, the term $\tilde{g}$ is independent of the dimension $N$ and replica number $n$, and thus cannot be further simplified. All the remaining dependencies are contained in the other terms.

The simplest terms requiring further simplification are the prior-related terms $g_{\su,n}$ and $g_{\sv,n}$. Such terms appear in all replica calculations for Bayes-optimal inference problems with a dense underlying graphical structure and are thus standard to compute; for completeness, we provide the derivation in Appendix \ref{Sec:Proof_RM_Prior}. At leading order in $n$, as required for the replica limit $n\to 0$, this yields
\begin{align}
g_{\su,n}&=n\tilde{g}_{\su}+O(n^{2}),\label{Eq:G_SU_N}\\
\tilde{g}_{\su}&=\Extr_{\hat{m}_{\su},\hat{q}_{\su},\hat{\beta}_{\su}}\left(-m_{\su}\hat{m}_{\su}+\frac{1}{2}q_{\su}\hat{q}_{\su}+\frac{1}{2}\beta_{\su}\hat{\beta}_{\su}+\E_{u_{0},z_{\su}}\log\E_{u_{a}}\exp\left(\hat{m}_{\su}u_{0}u_{a}-\frac{\hat{\beta}_{\su}+\hat{q}_{\su}}{2}u_{a}^{2}+\sqrt{\hat{q}_{\su}}u_{a}z_{\su}\right)\right),\label{Eq:TG_SU}
\end{align}
with $z_{\su}\sim\Normal(0,1)$ and $u_{0},u_{a}\overset{\text{i.i.d.}}{\sim}P_{\su}(\cdot)$. $g_{\sv,n}$ is given by the same expression with the replacements $\su\to\sv$ and $u_{0},u_{a}\to v_{0},v_{a}\overset{\text{i.i.d.}}{\sim}P_{\sv}(\cdot)$.

The noise-related term $g_{\sz,n}$ is the one requiring special care. We recognize it as a special case of what we call a \textit{rectangular inhomogeneous spherical integral}. Unlike the so-called inhomogeneous spherical integral associated with uniform single-channel models defined and analyzed in \cite{barbier2023fundamental,barbier2025information}, the present formulation is rectangular and contains asymmetric couplings between the $\su$-channel and the $\sv$-channel. This requires the generalized framework developed in Result \ref{Lemma:HCIZ} below.

\begin{result}[Rectangular Inhomogeneous Spherical Integral]\label{Lemma:HCIZ}

Let $\bP\sim\Sf{Unif}(\mathbb{O}(N))$ and $\bQ\sim\Sf{Unif}(\mathbb{O}(M))$ be independent Haar orthogonal matrices. The indices $a,a'\leq n$ refer to the replica dimension, whereas indices $i$ and $j$ refer to the spin dimension associated with the matrix dimensions $N$ and $M$. Let $\bX\in\R^{N\times n}$ and $\bY\in\R^{M\times n}$ collect the replicas in the $\sx$- and $\sy$-channels, respectively. Their replica matrices are defined as
\begin{align}
\bQ_{\sx}=\frac{1}{N}\bX^{\top}\bX,\quad\bQ_{\sy}=\frac{1}{M}\bY^{\top}\bY\in\R^{n\times n}.
\end{align}
Let $\bA\in\R^{N\times n\times n}$, $\bB\in\R^{M\times n\times n}$, and $\bC\in\R^{N\times n\times n}$ denote the replica coupling arrays, with entries $A_{i,a,a'}$, $B_{i,a,a'}$, and $C_{i,a,a'}$, respectively. For each pair of replica indices $(a,a')$, define the diagonal matrices
\begin{align}
\bA_{a,a'}&\triangleq\Diag([A_{i,a,a'}]_{i\leq N})\in\R^{N\times N},\quad\bB_{a,a'}\triangleq\Diag([B_{i,a,a'}]_{i\leq M})\in\R^{M\times M},\quad\bC_{a,a'}\triangleq\Diag([C_{i,a,a'}]_{i\leq N})\in\R^{N\times N}.\label{Aaa_Baa_Caa}
\end{align}
For each spin index, define the matrices
\begin{align}
\bA_{i}&\triangleq[A_{i,a,a'}]_{a,a'\leq n}\in\R^{n\times n},\quad\bB_{i}\triangleq[B_{i,a,a'}]_{a,a'\leq n}\in\R^{n\times n},\quad\bC_{i}\triangleq[C_{i,a,a'}]_{a,a'\leq n}\in\R^{n\times n}.\label{Eq:Ai_Bi_Ci}
\end{align}

The rectangular inhomogeneous spherical integral is defined as
\begin{align}
g_{\sh,n}&\triangleq\lim_{N\to\infty}\frac{1}{N}\log\E_{\bP,\bQ}\exp\left(\sum_{1\leq a,a'\leq n}\bx_{a}^{\top}\bP^{\top}\bA_{a,a'}\bP\bx_{a'}+\by_{a}^{\top}\bQ^{\top}\bB_{a,a'}\bQ\by_{a'}+\bx_{a}^{\top}\bP^{\top}[\bC_{a,a'},\bm{0}]\bQ\by_{a'}\right).\label{Eq:HCIZ_Definition}
\end{align}
In the thermodynamic limit, this integral admits the following variational representation:
\begin{align}
g_{\sh,n}&=\Extr_{\tilde{\bQ}_{\sx},\tilde{\bQ}_{\sy}}\Bigg(\frac{1}{2}\Tr\,\bQ_{\sx}\tilde{\bQ}_{\sx}+\frac{\alpha}{2}\Tr\,\bQ_{\sy}\tilde{\bQ}_{\sy}-\frac{1}{2N}\sum_{i=1}^{N}\log|\tilde{\bQ}_{\sx}-2\bA_{i}|-\frac{1}{2N}\sum_{i=N+1}^{M}\log|\tilde{\bQ}_{\sy}-2\bB_{i}|\nonumber\\
&\quad-\frac{1}{2N}\sum_{i=1}^{N}\log|(\tilde{\bQ}_{\sy}-2\bB_{i})-\bC_{i}^{\top}(\tilde{\bQ}_{\sx}-2\bA_{i})^{-1}\bC_{i}|\Bigg)-\frac{1}{2}\log|\bQ_{\sx}|-\frac{\alpha}{2}\log|\bQ_{\sy}|-\frac{n(1+\alpha)}{2},\label{Eq:Gen_Spherical_Int_Result_Lemma}
\end{align}
where $\tilde{\bQ}_{\sx},\tilde{\bQ}_{\sy}\in\R^{n\times n}$ are the conjugate replica matrices associated with $\bQ_{\sx}$ and $\bQ_{\sy}$, respectively. 

A derivation of this result is provided in Appendix \ref{app:spherical_int}.

\end{result}

Result \ref{Lemma:HCIZ} is stated for replica indices $a\in\{1,\ldots,n\}$. In $g_{\sz,n}$, the ground-truth signal pair $\{\bu_{0},\bv_{0}\}$ is assigned the additional index $a=0$. Hence, the replica index set becomes $\{0,1,\ldots,n\}$, and the replica matrices are enlarged from $n\times n$ to $(n+1)\times(n+1)$. Accordingly, the constant $-n(1+\alpha)/2$ in \eqref{Eq:Gen_Spherical_Int_Result_Lemma} becomes $-(n+1)(1+\alpha)/2$.

We now rewrite $g_{\sz,n}$ in the form of \eqref{Eq:HCIZ_Definition} and identify the corresponding matrices $\bA_{a,a'}$, $\bB_{a,a'}$, and $\bC_{a,a'}$. We first write the $2\times2$ conjugate matrix-valued functions $\hat{\bG}(x)$ and $\hat{\bM}(x)$ entrywise and use the block representation of $\cG_{\tilde{\bZ}_{0}}(x)$ in \eqref{Eq:Block_Structure}. Here, $\hat{\bG}(x)$ is symmetric, whereas $\hat{\bM}(x)$ need not be. For the replica calculation, we fix a realization of $\bs$ and assume that averaging over the singular values does not change the asymptotic free entropy \cite{barbier2023fundamental,barbier2025information}. Using the SVD $\bZ_{0}=\bP^{\top}[\Diag(\bs),\bm{0}]\bQ$, the remaining expectation is taken over the independent Haar orthogonal matrices $\bP$ and $\bQ$. The contour integrals are then collected into the functions $\{\tilde{M}_{\Sf{uu}},\tilde{M}_{\Sf{uv}},\tilde{M}_{\Sf{vu}},\tilde{M}_{\Sf{vv}},\tilde{G}_{\Sf{uu}},\tilde{G}_{\Sf{uv}},\tilde{G}_{\Sf{vv}}\}$ defined in \eqref{Eq:tM_uu}-\eqref{Eq:tG_vv}, yielding \eqref{Eq:g_Z_n}. Comparing this expression with \eqref{Eq:HCIZ_Definition} gives the following nonzero diagonal matrices:
\begin{align}
\bA_{a,a}(\bs)&=-\Diag(\tilde{G}_{\Sf{uu}}(\bs)),\quad\bA_{0,a}(\bs)=\bA_{a,0}(\bs)=-\frac{1}{2}\Diag(\tilde{M}_{\Sf{uu}}(\bs)),\label{Eq:A_aa}\\
\bB_{a,a}(\bs)&=-\begin{bmatrix}\Diag(\tilde{G}_{\Sf{vv}}(\bs))&\bm{0}\\\bm{0}&\tilde{G}_{\Sf{vv}}(0)\Eye\end{bmatrix},\quad\bB_{0,a}(\bs)=\bB_{a,0}(\bs)=-\frac{1}{2}\begin{bmatrix}\Diag(\tilde{M}_{\Sf{vv}}(\bs))&\bm{0}\\\bm{0}&\tilde{M}_{\Sf{vv}}(0)\Eye\end{bmatrix},\label{Eq:B_aa}\\
\bC_{a,a}(\bs)&=-2\Diag(\tilde{G}_{\Sf{uv}}(\bs)),\quad\bC_{0,a}(\bs)=-\Diag(\tilde{M}_{\Sf{vu}}(\bs)),\quad\bC_{a,0}(\bs)=-\Diag(\tilde{M}_{\Sf{uv}}(\bs)),\label{Eq:C_aa}
\end{align}
where $a\in\{1,\ldots,n\}$, and all other matrices are zero. The corresponding matrices for each spin index are
\begin{align}
\bA(s_{i})&=\cC\left(-\frac{1}{2}\tilde{M}_{\Sf{uu}}(s_{i}),-\frac{1}{2}\tilde{M}_{\Sf{uu}}(s_{i}),-\tilde{G}_{\Sf{uu}}(s_{i})\right),\quad i\in\{1,\cdots,N\},\label{Eq:A_i}\\
\bB(s_{i})&=\cC\left(-\frac{1}{2}\tilde{M}_{\Sf{vv}}(s_{i}),-\frac{1}{2}\tilde{M}_{\Sf{vv}}(s_{i}),-\tilde{G}_{\Sf{vv}}(s_{i})\right),\quad i\in\{1,\cdots,M\},\label{Eq:B_i}\\
\bC(s_{i})&=\cC\left(-\tilde{M}_{\Sf{vu}}(s_{i}),-\tilde{M}_{\Sf{uv}}(s_{i}),-2\tilde{G}_{\Sf{uv}}(s_{i})\right),\quad i\in\{1,\cdots,N\},\label{Eq:C_i}
\end{align}
where $\cC$ is defined in \eqref{Eq:C_Matrix}, and $s_{i}=0$ for $i\in\{N+1,\cdots,M\}$ in $\bB(s_{i})$.

With these identifications, Result \ref{Lemma:HCIZ} yields the variational representation in \eqref{Eq:tI_HCIZ}. In the thermodynamic limit, the sums over the $N$ singular-value terms converge to expectations over $s^{2}\sim\mu(\cdot)$, while the remaining $M-N$ zero singular values in the $\sv$-channel give the contribution proportional to $\alpha-1$, as shown in \eqref{Eq:gz_n}.

Under the RS ansatz, the conjugate replica matrices are parameterized as
\begin{align}
\tilde{\bQ}_{\su}=\cQ(\tilde{\beta}^{(0)}_{\su},\tilde{m}_{\su},\tilde{q}_{\su},\tilde{\beta}_{\su}),\quad\tilde{\bQ}_{\sv}=\cQ(\tilde{\beta}^{(0)}_{\sv},\tilde{m}_{\sv},\tilde{q}_{\sv},\tilde{\beta}_{\sv}).\label{Eq:tQ_u_tQ_v}
\end{align}
The variational problem is thus reduced to an extremization over the scalar RS parameters. Expanding around $n=0$ gives
\begin{align}
g_{\sz,n}&=n\tilde{g}_{\sz}+O(n^{2}),\label{Eq:g_sz_n}
\end{align}
where
\begin{align}
\tilde{g}_{\sz}&=\Extr_{\tilde{m}_{\su},\tilde{q}_{\su},\tilde{\beta}_{\su},\tilde{m}_{\sv},\tilde{q}_{\sv},\tilde{\beta}_{\sv}}\Bigg(\frac{1}{2}(2m_{\su}\tilde{m}_{\su}-q_{\su}\tilde{q}_{\su}+\beta_{\su}\tilde{\beta}_{\su})+\frac{\alpha}{2}(2m_{\sv}\tilde{m}_{\sv}-q_{\sv}\tilde{q}_{\sv}+\beta_{\sv}\tilde{\beta}_{\sv})-\frac{\alpha-1}{2}\left(\log\tilde{\cG}_{\sv}(0)+\frac{\tilde{q}_{\sv}-\tilde{\cM}_{\sv}^{2}(0)}{\tilde{\cG}_{\sv}(0)}\right)\nonumber\\
&\quad-\frac{1}{2}\E_{s^{2}\sim\mu}\left[\log\tilde{\cG}_{\su}(s)+\frac{\tilde{q}_{\su}-\tilde{\cM}_{\su}^{2}(s)}{\tilde{\cG}_{\su}(s)}+\frac{\bar{\cN}(s)}{\bar{\cD}(s)}-\frac{\tilde{M}_{\Sf{uv}}^{2}(s)}{\tilde{\cG}_{\su}(s)}+\log\bar{\cD}(s)\right]\Bigg)\nonumber\\
&\quad-\frac{1}{2}\left(\log(\beta_{\su}-q_{\su})+\frac{q_{\su}-m_{\su}^{2}}{\beta_{\su}-q_{\su}}\right)-\frac{\alpha}{2}\left(\log(\beta_{\sv}-q_{\sv})+\frac{q_{\sv}-m_{\sv}^{2}}{\beta_{\sv}-q_{\sv}}\right)-\frac{1+\alpha}{2}.\label{Eq:tb_u_0_tb_v_0}
\end{align}
Here, the functions $\{\bar{\cN},\bar{\cD},\tilde{\cM}_{\su},\tilde{\cM}_{\sv},\tilde{\cG}_{\su},\tilde{\cG}_{\sv}\}$ are defined in \eqref{Eq:b_N}-\eqref{Eq:t_CG_v}. Details of the derivation are given in Appendix \ref{Sec:Proof_RM_Noise}.\\

%
%
%

\subsubsection{Final Steps}

Using \eqref{Eq:f_n_Variational}, we have the free entropy
\begin{align}
f&=\lim_{n\to0}\frac{1}{n}f_{n}+\log\varphi_{\alpha,\lambda}\nonumber\\
&\overset{\text{(A)}}{=}\lim_{n\to0}\frac{1}{n}\Extr\left(n\tilde{g}+g_{\su,n}+\alpha g_{\sv,n}+g_{\sz,n}\right)+\log\varphi_{\alpha,\lambda}\nonumber\\
&\overset{\text{(B)}}{=}\log\varphi_{\alpha,\lambda}+\Extr\left(\tilde{g}+\tilde{g}_{\su}+\alpha\tilde{g}_{\sv}+\tilde{g}_{\sz}\right)\nonumber\\
&=\log\varphi_{\alpha,\lambda}+\Extr\Bigg(\frac{1}{2\pi\imagUnit}\oint_{\Gamma_{\sx}}{\dd x}\,\Bigg\{xV'(x^{2})\log|\tilde{\bI}+\lambda\bB(x)|+\Tr\,\hat{\bG}(x)\bG(x)+\Tr\,\hat{\bM}^{\top}(x)\bM(x)\nonumber\\
&\quad+\Tr\,\hat{\bB}(x)\left[\bB(x)-\bG(x)-\lambda\bM(x)(\tilde{\bI}-\lambda\bC(x))^{-1}\bM^{\top}(x)\right]\Bigg\}\nonumber\\
&\quad+\E_{u_{0},z_{\su}}\log\E_{u_{a}}\exp\left(\hat{m}_{\su}u_{0}u_{a}-\frac{\hat{\beta}_{\su}+\hat{q}_{\su}}{2}u_{a}^{2}+\sqrt{\hat{q}_{\su}}u_{a}z_{\su}\right)+\alpha\E_{v_{0},z_{\sv}}\log\E_{v_{a}}\exp\left(\hat{m}_{\sv}v_{0}v_{a}-\frac{\hat{\beta}_{\sv}+\hat{q}_{\sv}}{2}v_{a}^{2}+\sqrt{\hat{q}_{\sv}}v_{a}z_{\sv}\right)\nonumber\\
&\quad-\frac{\alpha-1}{2}\left(\log\tilde{\cG}_{\sv}(0)+\frac{\tilde{q}_{\sv}-\tilde{\cM}_{\sv}^{2}(0)}{\tilde{\cG}_{\sv}(0)}\right)-\frac{1}{2}\E_{s^{2}\sim\mu}\left[\log\tilde{\cG}_{\su}(s)+\frac{\tilde{q}_{\su}-\tilde{\cM}_{\su}^{2}(s)}{\tilde{\cG}_{\su}(s)}+\frac{\bar{\cN}(s)}{\bar{\cD}(s)}-\frac{\tilde{M}_{\Sf{uv}}^{2}(s)}{\tilde{\cG}_{\su}(s)}+\log\bar{\cD}(s)\right]\nonumber\\
&\quad+\frac{1}{2}\left(-2m_{\su}\hat{m}_{\su}+q_{\su}\hat{q}_{\su}+\beta_{\su}\hat{\beta}_{\su}\right)+\frac{\alpha}{2}\left(-2m_{\sv}\hat{m}_{\sv}+q_{\sv}\hat{q}_{\sv}+\beta_{\sv}\hat{\beta}_{\sv}\right)+\frac{1}{2}(2m_{\su}\tilde{m}_{\su}-q_{\su}\tilde{q}_{\su}+\beta_{\su}\tilde{\beta}_{\su})\nonumber\\
&\quad+\frac{\alpha}{2}(2m_{\sv}\tilde{m}_{\sv}-q_{\sv}\tilde{q}_{\sv}+\beta_{\sv}\tilde{\beta}_{\sv})-\frac{1}{2}\left(\log(\beta_{\su}-q_{\su})+\frac{q_{\su}-m_{\su}^{2}}{\beta_{\su}-q_{\su}}\right)-\frac{\alpha}{2}\left(\log(\beta_{\sv}-q_{\sv})+\frac{q_{\sv}-m_{\sv}^{2}}{\beta_{\sv}-q_{\sv}}\right)-\frac{1+\alpha}{2}\Bigg) .\label{Eq:Intermediate_Free_Entropy}
\end{align}
Here, (A) follows from \eqref{Eq:f_n}, while (B) follows from the first-order expansions obtained above. In \eqref{Eq:Intermediate_Free_Entropy}, $\Extr(\cdot)$ denotes extremization w.r.t. the scalar parameters $\{m_{\bullet},q_{\bullet},\beta_{\bullet},\hat{m}_{\bullet},\hat{q}_{\bullet},\hat{\beta}_{\bullet},\tilde{m}_{\bullet},\tilde{q}_{\bullet},\tilde{\beta}_{\bullet}\}_{\bullet\in\{\su,\sv\}}$ and the functions $\{\bB,\bG,\bM,\hat{\bB},\hat{\bG},\hat{\bM}\}$. For completeness, the remaining details of the derivation are provided in Appendix \ref{Sec:Completion}.

Stationarity w.r.t. the functions $\{\bB,\bG,\bM,\hat{\bM}\}$ and all scalar parameters gives the saddle point equations in \eqref{Eq:bB}-\eqref{Eq:hM_vv}. At this stage, some scalar parameters that become equal under the Nishimori identities are still written separately. Using these equalities reduces the saddle point equations to the form stated in Result \ref{Result:Saddle_Point} and simplifies \eqref{Eq:Intermediate_Free_Entropy} to the free entropy in Result \ref{Result:Free_Entropy}. The simplification of the free entropy is completed in Section \ref{Sec:Final_Free_Entropy}.

\subsection{Derivation of Result \ref{Result:Saddle_Point}}

\subsubsection{Saddle Point Equations}

Setting the derivatives of the free entropy in \eqref{Eq:Intermediate_Free_Entropy} to zero gives the full saddle point system. To characterize the asymptotic Bayes-optimal performance, as measured by the MMSE, we only need the following subset of equations \cite{shinzato2008learning,shinzato2008perceptron,takahashi2022macroscopic,barbier2023fundamental,barbier2025information}. The labels in parentheses indicate the corresponding stationarity conditions:
\begin{align}
(\bB)&:\quad\bm{0}=\lambda xV'(x^{2})(\tilde{\bI}+\lambda\bB(x))^{-1}+\hat{\bB}(x),\label{Eq:bB}\\
(\bM)&:\quad\bm{0}=\hat{\bM}(x)-2\lambda\hat{\bB}(x)\bM(x)(\tilde{\bI}-\lambda\bC(x))^{-1},\label{Eq:bM}\\
(\bG)&:\quad\bm{0}=\hat{\bG}(x)-\hat{\bB}(x),\label{Eq:bG}\\
(\hat{m}_{\su})&:\quad m_{\su}=\E_{u_{0},z_{\su}}[u_{0}\langle u_{a}\rangle],\label{Eq:msu}\\
(\hat{q}_{\su})&:\quad q_{\su}\overset{\text{(A)}}{=}\E_{u_{0},z_{\su}}[\langle u_{a}\rangle^{2}],\label{Eq:qsu}\\
(\hat{\beta}_{\su})&:\quad\beta_{\su}=\E_{u_{0},z_{\su}}[\langle u^{2}_{a}\rangle],\label{Eq:bsu}\\
(\hat{m}_{\sv})&:\quad m_{\sv}=\E_{v_{0},z_{\sv}}[v_{0}\langle v_{a}\rangle],\label{Eq:msv}\\
(\hat{q}_{\sv})&:\quad q_{\sv}\overset{\text{(A)}}{=}\E_{v_{0},z_{\sv}}[\langle v_{a}\rangle^{2}],\label{Eq:qsv}\\
(\hat{\beta}_{\sv})&:\quad\beta_{\sv}=\E_{v_{0},z_{\sv}}[\langle v^{2}_{a}\rangle],\label{Eq:bsv}\\
(q_{\su})+(\beta_{\su})&:\quad\tilde{\beta}_{\su}-\tilde{q}_{\su}=\frac{1}{\beta_{\su}-q_{\su}}-\hat{\beta}_{\su}-\hat{q}_{\su},\label{Eq:qsu_bsu}\\
(q_{\sv})+(\beta_{\sv})&:\quad\tilde{\beta}_{\sv}-\tilde{q}_{\sv}=\frac{1}{\beta_{\sv}-q_{\sv}}-\hat{\beta}_{\sv}-\hat{q}_{\sv},\label{Eq:qsv_bsv}\\
(\tilde{q}_{\su})+(\tilde{\beta}_{\su})&:\quad\beta_{\su}-q_{\su}=\E_{s^{2}\sim\mu}\frac{1}{\tilde{\cG}_{\su}(s)-\frac{4\tilde{G}_{\Sf{uv}}^{2}(s)}{\tilde{\cG}_{\sv}(s)}},\label{Eq:tqsu_tbsu}\\
(\tilde{q}_{\sv})+(\tilde{\beta}_{\sv})&:\quad\beta_{\sv}-q_{\sv}=\frac{1}{\alpha}\E_{s^{2}\sim\mu}\frac{1}{\tilde{\cG}_{\sv}(s)-\frac{4\tilde{G}_{\Sf{uv}}^{2}(s)}{\tilde{\cG}_{\su}(s)}}+\left(1-\frac{1}{\alpha}\right)\frac{1}{\tilde{\cG}_{\sv}(0)},\label{Eq:tqsv_tbsv}\\
(\hat{M}_{\Sf{uu}})&:\quad M_{\Sf{uu}}(x)=\E_{s^{2}\sim\mu}\left[-\frac{1}{\tilde{\cG}_{\su}(s)}\left[\tilde{\cM}_{\su}(s)-\frac{2\tilde{G}_{\Sf{uv}}(s)}{\bar{\cD}(s)}\left(\tilde{M}_{\Sf{vu}}(s)-\frac{2\tilde{\cM}_{\su}(s)\tilde{G}_{\Sf{uv}}(s)}{\tilde{\cG}_{\su}(s)}\right)\right]\frac{x}{x^{2}-s^{2}}\right],\label{Eq:hM_uu}\\
(\hat{M}_{\Sf{uv}})&:\quad M_{\Sf{uv}}(x)=\E_{s^{2}\sim\mu}\left[-\frac{1}{\tilde{\cG}_{\su}(s)}\left[\tilde{M}_{\Sf{uv}}(s)-\frac{2\tilde{G}_{\Sf{uv}}(s)}{\bar{\cD}(s)}\left(\tilde{\cM}_{\sv}(s)-\frac{2\tilde{M}_{\Sf{uv}}(s)\tilde{G}_{\Sf{uv}}(s)}{\tilde{\cG}_{\su}(s)}\right)\right]\frac{s}{x^{2}-s^{2}}\right],\label{Eq:hM_uv}\\
(\hat{M}_{\Sf{vu}})&:\quad M_{\Sf{vu}}(x)=\E_{s^{2}\sim\mu}\left[-\frac{1}{\bar{\cD}(s)}\left(\tilde{M}_{\Sf{vu}}(s)-\frac{2\tilde{\cM}_{\su}(s)\tilde{G}_{\Sf{uv}}(s)}{\tilde{\cG}_{\su}(s)}\right)\frac{s}{x^{2}-s^{2}}\right],\label{Eq:hM_vu}\\
(\hat{M}_{\Sf{vv}})&:\quad M_{\Sf{vv}}(x)=\E_{s^{2}\sim\tilde{\mu}}\left[-\frac{\alpha}{\bar{\cD}(s)}\left(\tilde{\cM}_{\sv}(s)-\frac{2\tilde{M}_{\Sf{uv}}(s)\tilde{G}_{\Sf{uv}}(s)}{\tilde{\cG}_{\su}(s)}\right)\frac{x}{x^{2}-s^{2}}\right].\label{Eq:hM_vv}
\end{align}
Here, (A) follows from Stein's lemma, and the functions $\{\tilde{G}_{\Sf{uv}},\tilde{\cG}_{\su},\tilde{\cG}_{\sv}\}$ are defined in \eqref{Eq:tG_uv}, \eqref{Eq:t_CG_u}, and \eqref{Eq:t_CG_v}. For the $\su$-channel, let $z_{\su}\sim\cN(0,1)$ and $u_{0},u_{a}\overset{\text{i.i.d.}}{\sim}P_{\su}(\cdot)$. The scalar posterior moments are
\begin{align}
\langle u_{a}^{k}\rangle&=\frac{\E_{u_{a}}\left[u_{a}^{k}\exp\left(\hat{m}_{\su}u_{0}u_{a}-\frac{\hat{\beta}_{\su}+\hat{q}_{\su}}{2}u_{a}^{2}+\sqrt{\hat{q}_{\su}}u_{a}z_{\su}\right)\right]}{\E_{u_{a}}\exp\left(\hat{m}_{\su}u_{0}u_{a}-\frac{\hat{\beta}_{\su}+\hat{q}_{\su}}{2}u_{a}^{2}+\sqrt{\hat{q}_{\su}}u_{a}z_{\su}\right)},\quad k\in\{1,2\}.
\end{align}
The $\sv$-channel is obtained by replacing $\su$ with $\sv$ and $\{u_{0},u_{a}\}$ with $\{v_{0},v_{a}\}$.

To obtain \eqref{Eq:tqsu_tbsu}, we add the stationarity conditions w.r.t. $\tilde{q}_{\su}$ and $\tilde{\beta}_{\su}$. Within the functions, $\tilde{\beta}_{\su}$ appears only through the difference $\tilde{\beta}_{\su}-\tilde{q}_{\su}$, as in $\tilde{\cG}_{\su}(\cdot)$. Differentiating this difference w.r.t. the two parameters gives opposite signs, so these contributions cancel when the conditions are added. The remaining contributions from the functions come from terms that depend explicitly on $\tilde{q}_{\su}$. The same argument gives \eqref{Eq:tqsv_tbsv} for the $\sv$-channel.\\

\subsubsection{Simplifications}

We now simplify the saddle point equations in the Bayes-optimal setting. Assuming concentration of the corresponding empirical quantities in the thermodynamic limit, their values can be related to the appropriate expectations through the Nishimori identity \cite{nishimori01}. These relations fix several parameters in the replica calculation and substantially reduce the saddle point system. We first recall the Nishimori identity and then apply the resulting relations to \eqref{Eq:bB}-\eqref{Eq:tqsv_tbsv}, leading to the equations stated in Result \ref{Result:Saddle_Point}.

\begin{lemma}[Nishimori identity \cite{nishimori01}] \label{rmk:Nishimori}

In the Bayes-optimal setting, let $\{\bu_{0},\bY\}$ be drawn from the model in \eqref{eq:model_def} and, conditionally on $\bY$, let $\{\bu_{1},\cdots,\bu_{n}\}$ be i.i.d. samples from the posterior $P_{\su}(\cdot|\bY)$. Then, for any bounded function $f(\cdot)$,
\begin{align}
\E_{\bu_{0},\bY}\langle f(\bY,\bu_{0},\bu_{2},\cdots)\rangle&=\E_{\bY}\langle f(\bY,\bu_{1},\bu_{2},\cdots)\rangle,
\end{align}
where $\langle\cdot\rangle$ denotes the posterior expectation over $\{\bu_{a}\}_{a=1}^{n}$. The same identity holds for the $\sv$-channel.

\end{lemma}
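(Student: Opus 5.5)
The plan is to derive the identity from Bayes' rule by sampling $\{\bu_{0},\bY\}$ in two equivalent ways. In the Bayes-optimal setting the posterior used by the statistician is the true conditional law of the signal given $\bY$. This holds because the prior $P_{\su}\otimes P_{\sv}$ and the likelihood $C_{\sv}\exp(-\frac{N}{2}\Tr\,V(\bZ_{a}\bZ_{a}^{\top}))$ in \eqref{eq:posterior} coincide with the data-generating mechanism \eqref{eq:model_def}--\eqref{eq:noise_density}. Hence the joint law of $(\bu_{0},\bv_{0},\bY)$ can be generated in two ways:
\begin{itemize}
\item draw the signal from the prior and then $\bY$ from the likelihood;
\item draw $\bY$ from its marginal (the evidence $P(\bY)$ computed earlier) and then $(\bu_{0},\bv_{0})$ from the posterior $P(\cdot|\bY)$.
\end{itemize}

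\textbf{Key steps.} I will work with the joint posterior over pairs $(\bu_{a},\bv_{a})$ and obtain the $\su$-marginal statement by taking functions that do not depend on the $\bv$'s.

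1. Under the second sampling scheme, conditionally on $\bY$ the ground truth $\bu_{0}$ is exchangeable with the i.i.d. posterior samples $\bu_{1},\dots,\bu_{n}$. The tuple $(\bu_{0},\bu_{2},\dots,\bu_{n})$ is then, given $\bY$, an i.i.d. sample of size $n$ from $P_{\su}(\cdot|\bY)$, exactly like $(\bu_{1},\bu_{2},\dots,\bu_{n})$.

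2. Write the left-hand side as $\E_{\bY}\E[f(\bY,\bu_{0},\bu_{2},\dots)|\bY]$ using the tower property. Replace the conditional law of $\bu_{0}$ by the posterior and use conditional independence of $\bu_{0}$ from the replicas given $\bY$. The inner expectation then equals $\langle f(\bY,\bu_{1},\bu_{2},\dots)\rangle$, which gives the right-hand side.

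3. Boundedness of $f$ guarantees all expectations are finite, so the Fubini and tower steps are valid. The $\sv$-channel follows by the symmetric argument.

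\textbf{Main obstacle.} The only point needing care is verifying that the posterior really is the conditional law. Concretely, this means checking that $P(\bY)$ is a proper, strictly positive density, so that $P(\bu_{0},\bv_{0}|\bY)=P_{\su}P_{\sv}P(\bY|\bu_{0},\bv_{0})/P(\bY)$ is well defined almost surely. This follows from the density representation of the noise given before \eqref{Eq:MI_identity}: the constant $C_{\sv}$ normalizes the Lebesgue density on $\R^{N\times M}$, and the integrand is positive. Once this is checked, the exchangeability argument is immediate.
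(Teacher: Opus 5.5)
Your proposal is correct and follows essentially the paper's route. The paper justifies the identity in one line, ``the identity follows directly from the tower property of conditional expectation,'' and your two-sampling-scheme and exchangeability argument is the expanded form of that line; your reduction from the joint posterior over $(\bu_{a},\bv_{a})$ to the $\su$-marginal, and the check that the evidence is a positive density, are harmless details the paper leaves implicit.
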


The identity follows directly from the tower property of conditional expectation; see, e.g., \cite{barbier2019optimal}. Together with concentration in the thermodynamic limit, it gives the following relations:
\begin{itemize}

\item The overlap parameters satisfy
\begin{align}
\beta^{(0)}_{\su}=\beta_{\su}=1,\quad \beta^{(0)}_{\sv}=\beta_{\sv}=1,\quad m_{\su}=q_{\su},\quad m_{\sv}=q_{\sv}.\label{Eq:Overlap_Consistency}
\end{align}

\item The conjugate parameters associated with the prior-related terms $g_{\su,n}$ and $g_{\sv,n}$ satisfy
\begin{align}
\hat{\beta}^{(0)}_{\su}=\hat{\beta}_{\su}=0,\quad\hat{\beta}^{(0)}_{\sv}=\hat{\beta}_{\sv}=0,\quad\hat{m}_{\su}=\hat{q}_{\su},\quad\hat{m}_{\sv}=\hat{q}_{\sv}.\label{Eq:Prior_Side_Conjugate}
\end{align}

\item The conjugate parameters associated with the noise-related term $g_{\sz,n}$ satisfy
\begin{align}
\tilde{\beta}^{(0)}_{\su}=\tilde{\beta}_{\su}=1,\quad\tilde{\beta}^{(0)}_{\sv}=\tilde{\beta}_{\sv}=1,\quad\tilde{m}_{\su}=\tilde{q}_{\su},\quad\tilde{m}_{\sv}=\tilde{q}_{\sv}.\label{Eq:Noise_Side_Conjugate}
\end{align}

\end{itemize}
Applying \eqref{Eq:Overlap_Consistency}-\eqref{Eq:Prior_Side_Conjugate} to the stationarity conditions \eqref{Eq:msu}-\eqref{Eq:bsv} yields \eqref{Eq:RM_m_u}-\eqref{Eq:RM_m_v}, as stated in Result \ref{Result:Saddle_Point}.

We next turn to the matrix-valued functions. From \eqref{Eq:Self_Overlap}, $\bC(x)$ has the deterministic limit
\begin{align}
\bC(x)&=\frac{1}{N}\tilde{\bX}_{0}^{\top}\cG_{\tilde{\bZ}_{0}}(x)\tilde{\bX}_{0}\label{Eq:C_X}\\
&\overset{\text{(A)}}{=}\frac{1}{N}\begin{bmatrix}\bu_{0}&\bm{0}\\\bm{0}&\bv_{0}\end{bmatrix}^{\top}\begin{bmatrix}x\cG_{\bZ_{0}\bZ_{0}^{\top}}(x^{2})&\cG_{\bZ_{0}\bZ_{0}^{\top}}(x^{2})\bZ_{0}\\\bZ_{0}^{\top}\cG_{\bZ_{0}\bZ_{0}^{\top}}(x^{2})&x\cG_{\bZ_{0}^{\top}\bZ_{0}}(x^{2})\end{bmatrix}\begin{bmatrix}\bu_{0}&\bm{0}\\\bm{0}&\bv_{0}\end{bmatrix}\nonumber\\
&=\begin{bmatrix}\frac{1}{N}x\bu_{0}^{\top}\cG_{\bZ_{0}\bZ_{0}^{\top}}(x^{2})\bu_{0}&\frac{1}{N}\bu_{0}^{\top}\cG_{\bZ_{0}\bZ_{0}^{\top}}(x^{2})\bZ_{0}\bv_{0}\\\frac{1}{N}\bv_{0}^{\top}\bZ_{0}^{\top}\cG_{\bZ_{0}\bZ_{0}^{\top}}(x^{2})\bu_{0}&\frac{1}{N}x\bv_{0}^{\top}\cG_{\bZ_{0}^{\top}\bZ_{0}}(x^{2})\bv_{0}\end{bmatrix}\nonumber\\
&\overset{\text{(B)}}{\to}\bar{\bC}(x)\triangleq\begin{bmatrix}c_{1}(x)&0\\0&c_{2}(x)\end{bmatrix}.\label{Eq:C_x}
\end{align}
Here, (A) follows from the block-matrix inversion formula in \eqref{Eq:Block_Structure}, obtained via the Schur complement. In (B), the off-diagonal terms vanish in the thermodynamic limit, while the diagonal terms converge to
\begin{align}
c_{1}(x)&\triangleq\lim_{N\to\infty}\frac{1}{N}x\bu_{0}^{\top}\cG_{\bZ_{0}\bZ_{0}^{\top}}(x^{2})\bu_{0}=\E_{s^{2}\sim\mu}\frac{x}{x^{2}-s^{2}},\label{Eq:C_1_x}\\
c_{2}(x)&\triangleq\lim_{N\to\infty}\frac{1}{N}x\bv_{0}^{\top}\cG_{\bZ_{0}^{\top}\bZ_{0}}(x^{2})\bv_{0}=\alpha\left[\frac{1}{\alpha}\E_{s^{2}\sim\mu}\frac{x}{x^{2}-s^{2}}+\left(1-\frac{1}{\alpha}\right)\frac{1}{x}\right].\label{Eq:C_2_x}
\end{align}

Using the Nishimori identity, $\bB(x)$ in \eqref{Eq:Self_Overlap} can be written as
\begin{align}
\bB(x)&=\frac{1}{N}\E\langle\tilde{\bX}_{a}^{\top}\cG_{\tilde{\bY}}(x)\tilde{\bX}_{a}\rangle\nonumber\\
&=\frac{1}{N}\E\tilde{\bX}_{0}^{\top}\cG_{\tilde{\bY}}(x)\tilde{\bX}_{0}\nonumber\\
&\overset{\text{(C)}}{=}\frac{1}{N}\E\tilde{\bX}_{0}^{\top}\left[\cG_{\tilde{\bZ}_{0}}(x)+\frac{\lambda}{N}\cG_{\tilde{\bZ}_{0}}(x)\tilde{\bX}_{0}\left(\tilde{\bI}-\frac{\lambda}{N}\tilde{\bX}_{0}^{\top}\cG_{\tilde{\bZ}_{0}}(x)\tilde{\bX}_{0}\right)^{-1}\tilde{\bX}_{0}^{\top}\cG_{\tilde{\bZ}_{0}}(x)\right]\tilde{\bX}_{0}\nonumber\\
&\overset{\text{\eqref{Eq:C_X}}}{=}\E\left[\bC(x)(\tilde{\bI}-\lambda\bC(x))^{-1}\tilde{\bI}\right]\nonumber\\
&\overset{\text{(D)}}{\to}\bar{\bC}(x)(\tilde{\bI}-\lambda\bar{\bC}(x))^{-1}\tilde{\bI}.\label{Eq:B_X}
\end{align}
Here, (C) follows from the Woodbury matrix identity applied to the resolvent $\cG_{\tilde{\bY}}(x)$, as shown in step (G) of \eqref{Eq:Phi_Y}. In (D), we use the convergence $\bC(x)\to\bar{\bC}(x)$ established in \eqref{Eq:C_x}.

Combining the limiting expression in \eqref{Eq:B_X} with the stationarity conditions \eqref{Eq:bB} and \eqref{Eq:bG} gives
\begin{align}
\hat{\bG}(x)&=-\lambda xV'(x^{2})(\tilde{\bI}+\lambda\bB(x))^{-1}\nonumber\\
&\overset{\text{\eqref{Eq:B_X}}}{\to}-\lambda xV'(x^{2})\left[\tilde{\bI}+\lambda\bar{\bC}(x)(\tilde{\bI}-\lambda\bar{\bC}(x))^{-1}\tilde{\bI}\right]^{-1}\nonumber\\
&\overset{\eqref{Eq:C_x}}{=}-\lambda xV'(x^{2})\begin{bmatrix}-\alpha\lambda\left[\frac{1}{\alpha}\E_{s^{2}\sim\mu}\frac{x}{x^{2}-s^{2}}+\left(1-\frac{1}{\alpha}\right)\frac{1}{x}\right]&1\\1&-\lambda\E_{s^{2}\sim\mu}\frac{x}{x^{2}-s^{2}}\end{bmatrix}.\label{Eq:H_G_x}
\end{align}
Comparing \eqref{Eq:H_G_x} with the componentwise form of $\hat{\bG}(x)$ in \eqref{Eq:Block_Structure}, we obtain
\begin{align}
\hat{G}_{\Sf{uu}}(x)&=\alpha\lambda^{2}xV'(x^{2})\left[\frac{1}{\alpha}\E_{s^{2}\sim\mu}\frac{x}{x^{2}-s^{2}}+\left(1-\frac{1}{\alpha}\right)\frac{1}{x}\right],\\
\hat{G}_{\Sf{vv}}(x)&=\lambda^{2}xV'(x^{2})\E_{s^{2}\sim\mu}\frac{x}{x^{2}-s^{2}},\\
\hat{G}_{\Sf{uv}}(x)&=-\lambda xV'(x^{2}).
\end{align}

Similarly, \eqref{Eq:bM} and \eqref{Eq:bG} give
\begin{align}
\hat{\bM}(x)&=2\lambda\hat{\bG}(x)\bM(x)(\tilde{\bI}-\lambda\bC(x))^{-1}\nonumber\\
&\overset{\text{(E)}}{\to}-2\lambda^{2}xV'(x^{2})\tilde{\bI}\left(\frac{1}{\lambda}\tilde{\bI}-\bar{\bC}(x)\right)\tilde{\bI}\bM(x)\left(\frac{1}{\lambda}\tilde{\bI}-\bar{\bC}(x)\right)^{-1}\nonumber\\
&=-\frac{2\lambda^{2}xV'(x^{2})}{c_{1}(x)c_{2}(x)-\frac{1}{\lambda^{2}}}\begin{bmatrix}\bar{M}_{\Sf{uu}}(x)&\bar{M}_{\Sf{uv}}(x)\\\bar{M}_{\Sf{vu}}(x)&\bar{M}_{\Sf{vv}}(x)\end{bmatrix}.\label{Eq:hM_x_Func}
\end{align}
Here, (E) follows from the limiting expressions in \eqref{Eq:C_x} and \eqref{Eq:H_G_x}. The entries of the matrix above are
\begin{align}
\bar{M}_{\Sf{uu}}(x)=&c^{2}_{2}(x)M_{\Sf{uu}}(x)-\frac{1}{\lambda}c_{2}(x)M_{\Sf{vu}}(x)+\frac{1}{\lambda}c_{2}(x)M_{\Sf{uv}}(x)-\frac{1}{\lambda^{2}}M_{\Sf{vv}}(x),\label{Eq:T_uu}\\
\bar{M}_{\Sf{uv}}(x)=&\frac{1}{\lambda}c_{2}(x)M_{\Sf{uu}}(x)-\frac{1}{\lambda^{2}}M_{\Sf{vu}}(x)+c_{1}(x)c_{2}(x)M_{\Sf{uv}}(x)-\frac{1}{\lambda}c_{1}(x)M_{\Sf{vv}}(x),\label{Eq:T_uv}\\
\bar{M}_{\Sf{vu}}(x)=&-\frac{1}{\lambda}c_{2}(x)M_{\Sf{uu}}(x)+c_{1}(x)c_{2}(x)M_{\Sf{vu}}(x)-\frac{1}{\lambda^{2}}M_{\Sf{uv}}(x)+\frac{1}{\lambda}c_{1}(x)M_{\Sf{vv}}(x),\label{Eq:T_vu}\\
\bar{M}_{\Sf{vv}}(x)=&-\frac{1}{\lambda^{2}}M_{\Sf{uu}}(x)+\frac{1}{\lambda}c_{1}(x)M_{\Sf{vu}}(x)-\frac{1}{\lambda}c_{1}(x)M_{\Sf{uv}}(x)+c^{2}_{1}(x)M_{\Sf{vv}}(x).\label{Eq:T_vv}
\end{align}

We next evaluate the functions entering the stationarity conditions \eqref{Eq:tqsu_tbsu}-\eqref{Eq:tqsv_tbsv}. Substituting $\hat{G}_{\Sf{uu}}(x)$ into $\tilde{G}_{\Sf{uu}}(z)$ in \eqref{Eq:tG_uu} gives
\begin{align}
\tilde{G}_{\Sf{uu}}(z)&=\alpha\lambda^{2}\frac{1}{2\pi\imagUnit}\oint_{\Gamma_{\sx}}{\dd x}\,V'(x^{2})\left[\frac{1}{\alpha}\E_{s^{2}\sim\mu}\frac{x^{2}}{x^{2}-s^{2}}+\left(1-\frac{1}{\alpha}\right)\right]\frac{x}{x^{2}-z^{2}}\nonumber\\
&\overset{\text{(F)}}{=}\alpha\lambda^{2}\frac{1}{4\pi\imagUnit}\oint_{\Gamma_{\Sf{y}}}{\dd y}\,V'(y)\left[\frac{1}{\alpha}\E_{s^{2}\sim\mu}\frac{y}{y-s^{2}}+\left(1-\frac{1}{\alpha}\right)\right]\frac{1}{y-z^{2}}\nonumber\\
&=\frac{\alpha}{2}\lambda^{2}\left[\frac{1}{\alpha}\E_{s^{2}\sim\mu}\frac{1}{z^{2}-s^{2}}\frac{1}{2\pi\imagUnit}\oint_{\Gamma_{\Sf{y}}}{\dd y}\left(\frac{1}{y-z^{2}}-\frac{1}{y-s^{2}}\right)yV'(y)+\left(1-\frac{1}{\alpha}\right)\frac{1}{2\pi\imagUnit}\oint_{\Gamma_{\Sf{y}}}{\dd y}\,V'(y)\frac{1}{y-z^{2}}\right]\nonumber\\
&\overset{\text{(G)}}{=}\frac{\alpha}{2}\lambda^{2}\left[\frac{1}{\alpha}\E_{s^{2}\sim\mu}\frac{z^{2}V'(z^{2})-s^{2}V'(s^{2})}{z^{2}-s^{2}}+\left(1-\frac{1}{\alpha}\right)V'(z^{2})\right]\nonumber\\
&=\frac{\alpha}{2}\lambda^{2}\left[\frac{1}{\alpha}z^{2}\E_{s^{2}\sim\mu}\frac{V'(z^{2})-V'(s^{2})}{z^{2}-s^{2}}+\frac{1}{\alpha}\E_{s^{2}\sim\mu}V'(s^{2})+\left(1-\frac{1}{\alpha}\right)V'(z^{2})\right].\label{Eq:tG_UU}
\end{align}
Here, (F) follows from the change of variable $y=x^{2}$, with $\dd y=2x\dd x$. (G) follows from Cauchy's integral formula. Similarly, applying the same procedure to $\tilde{G}_{\Sf{vv}}(z)$ and $\tilde{G}_{\Sf{uv}}(z)$ in \eqref{Eq:tG_uv}-\eqref{Eq:tG_vv} gives
\begin{align}
\tilde{G}_{\Sf{vv}}(z)&=\frac{1}{2}\lambda^{2}\left(z^{2}\E_{s^{2}\sim\mu}\frac{V'(z^{2})-V'(s^{2})}{z^{2}-s^{2}}+\E_{s^{2}\sim\mu}V'(s^{2})\right),\label{Eq:tG_VV}\\
\tilde{G}_{\Sf{uv}}(z)&=-\frac{1}{2}\lambda zV'(z^{2}).\label{Eq:tG_UV}
\end{align}

Using the relations in \eqref{Eq:Overlap_Consistency}-\eqref{Eq:Noise_Side_Conjugate}, the stationarity conditions \eqref{Eq:qsu_bsu}-\eqref{Eq:qsv_bsv} reduce to
\begin{align}
\tilde{m}_{\su}&=1-(1-m_{\su})^{-1}+\hat{m}_{\su},\label{Eq:tm_u}\\
\tilde{m}_{\sv}&=1-(1-m_{\sv})^{-1}+\hat{m}_{\sv}.\label{Eq:tm_v}
\end{align}
Combining these relations with \eqref{Eq:tG_UU}-\eqref{Eq:tG_VV}, the functions $\tilde{\cG}_{\su}(x)$ and $\tilde{\cG}_{\sv}(x)$ in \eqref{Eq:t_CG_u}-\eqref{Eq:t_CG_v} become
\begin{align}
\tilde{\cG}_{\su}(x)&=1-\tilde{m}_{\su}+2\tilde{G}_{\Sf{uu}}(x)=(1-m_{\su})^{-1}-\hat{m}_{\su}+\lambda^{2}\left[x^{2}\E_{s^{2}\sim\mu}\frac{V'(x^{2})-V'(s^{2})}{x^{2}-s^{2}}+\E_{s^{2}\sim\mu}V'(s^{2})+(\alpha-1)V'(x^{2})\right],\label{Eq:tG_U}\\
\tilde{\cG}_{\sv}(x)&=1-\tilde{m}_{\sv}+2\tilde{G}_{\Sf{vv}}(x)=(1-m_{\sv})^{-1}-\hat{m}_{\sv}+\lambda^{2}\left(x^{2}\E_{s^{2}\sim\mu}\frac{V'(x^{2})-V'(s^{2})}{x^{2}-s^{2}}+\E_{s^{2}\sim\mu}V'(s^{2})\right).\label{Eq:tG_V}
\end{align}
Substituting these expressions into \eqref{Eq:tqsu_tbsu}-\eqref{Eq:tqsv_tbsv} gives
\begin{align}
1-m_{\su}&=\E_{s^{2}\sim\mu}\left[\tilde{\cG}_{\su}(s)-\frac{4\tilde{G}_{\Sf{uv}}^{2}(s)}{\tilde{\cG}_{\sv}(s)}\right]^{-1},\label{Eq:Eean_beta0u_mu}\\
1-m_{\sv}&=\frac{1}{\alpha}\E_{s^{2}\sim\mu}\left[\tilde{\cG}_{\sv}(s)-\frac{4\tilde{G}_{\Sf{uv}}^{2}(s)}{\tilde{\cG}_{\su}(s)}\right]^{-1}+\left(1-\frac{1}{\alpha}\right)(\tilde{\cG}_{\sv}(0))^{-1}=\E_{s^{2}\sim\tilde{\mu}}\left[\tilde{\cG}_{\sv}(s)-\frac{4\tilde{G}_{\Sf{uv}}^{2}(s)}{\tilde{\cG}_{\su}(s)}\right]^{-1}.\label{Eq:Eean_beta0v_mv}
\end{align}

To rewrite \eqref{Eq:Eean_beta0u_mu}-\eqref{Eq:Eean_beta0v_mv} in the form stated in Result \ref{Result:Saddle_Point}, we first relate the functions obtained above to $\{\cJ_{\Sf{uu}},\cJ_{\Sf{vv}},\cJ_{\Sf{uv}}\}$. Comparing \eqref{Eq:tG_UU}-\eqref{Eq:tG_UV} with \eqref{Eq:Juu_x}-\eqref{Eq:Juv_x} gives
\begin{align}
2\tilde{G}_{\Sf{uu}}(x)=\cJ_{\Sf{uu}}(x^{2}),\quad2\tilde{G}_{\Sf{vv}}(x)=\cJ_{\Sf{vv}}(x^{2}),\quad4\tilde{G}_{\Sf{uv}}^{2}(x)=\cJ_{\Sf{uv}}(x^{2}).
\end{align}
Therefore,
\begin{align}
\tilde{\cG}_{\su}(x)=(1-m_{\su})^{-1}-\hat{m}_{\su}+\cJ_{\Sf{uu}}(x^{2}),\quad\tilde{\cG}_{\sv}(x)=(1-m_{\sv})^{-1}-\hat{m}_{\sv}+\cJ_{\Sf{vv}}(x^{2}).
\end{align}
The definitions of $\cJ_{\su}(\cdot)$ and $\cJ_{\sv}(\cdot)$ in \eqref{Eq:Ju_x}-\eqref{Eq:Jv_x} then imply
\begin{align}
\tilde{\cG}_{\su}(x)-\frac{4\tilde{G}_{\Sf{uv}}^{2}(x)}{\tilde{\cG}_{\sv}(x)}&=(1-m_{\su})^{-1}-\hat{m}_{\su}+\cJ_{\Sf{uu}}(x^{2})-\frac{\cJ_{\Sf{uv}}(x^{2})}{(1-m_{\sv})^{-1}-\hat{m}_{\sv}+\cJ_{\Sf{vv}}(x^{2})}=(1-m_{\su})^{-1}-\hat{m}_{\su}-\cJ_{\su}(x^{2}),\\
\tilde{\cG}_{\sv}(x)-\frac{4\tilde{G}_{\Sf{uv}}^{2}(x)}{\tilde{\cG}_{\su}(x)}&=(1-m_{\sv})^{-1}-\hat{m}_{\sv}+\cJ_{\Sf{vv}}(x^{2})-\frac{\cJ_{\Sf{uv}}(x^{2})}{(1-m_{\su})^{-1}-\hat{m}_{\su}+\cJ_{\Sf{uu}}(x^{2})}=(1-m_{\sv})^{-1}-\hat{m}_{\sv}-\cJ_{\sv}(x^{2}).
\end{align}
Hence, \eqref{Eq:Eean_beta0u_mu}-\eqref{Eq:Eean_beta0v_mv} can be rewritten as
\begin{align}
1-m_{\su}&=\E_{\gamma\sim\mu}\left[(1-m_{\su})^{-1}-\hat{m}_{\su}-\cJ_{\su}(\gamma)\right]^{-1}=\cS_{\cJ_{\su};\mu}((1-m_{\su})^{-1}-\hat{m}_{\su}),\label{Eq:R_u_Pre}\\
1-m_{\sv}&=\E_{\gamma\sim\tilde{\mu}}\left[(1-m_{\sv})^{-1}-\hat{m}_{\sv}-\cJ_{\sv}(\gamma)\right]^{-1}=\cS_{\cJ_{\sv};\tilde{\mu}}((1-m_{\sv})^{-1}-\hat{m}_{\sv}).\label{Eq:R_v_Pre}
\end{align}
Equations \eqref{Eq:R_u_Pre}-\eqref{Eq:R_v_Pre} are now in the form of the Stieltjes transform defined in \eqref{Eq:S_Tran}. Using the relation between the Stieltjes and R-transforms in \eqref{Eq:R_Tran}, \eqref{Eq:R_u_Pre} gives
\begin{align}
(1-m_{\su})^{-1}-\hat{m}_{\su}&=\cS_{\cJ_{\su};\mu}^{-1}(1-m_{\su})=(1-m_{\su})^{-1}+\cR_{\cJ_{\su};\mu}(1-m_{\su}).
\end{align}
It follows that
\begin{align}
\hat{m}_{\su}&=-\cR_{\cJ_{\su};\mu}(1-m_{\su}),
\end{align}
recovering \eqref{Eq:RM_hm_u}. For the $\sv$-channel, \eqref{Eq:R_v_Pre} similarly gives
\begin{align}
\hat{m}_{\sv}
&=-\cR_{\cJ_{\sv};\tilde{\mu}}(1-m_{\sv}),
\end{align}
and hence \eqref{Eq:RM_hm_v}. Together with \eqref{Eq:RM_m_u}-\eqref{Eq:RM_m_v}, these equations give the saddle point system stated in Result \ref{Result:Saddle_Point}.

\subsection{Final Simplification of the Free Entropy}\label{Sec:Final_Free_Entropy}

The replica calculation above led to the intermediate free entropy in \eqref{Eq:Intermediate_Free_Entropy}. At that stage, the saddle point equations had not yet been simplified using the Nishimori identities, so the free entropy was left in terms of the full set of RS parameters. Having now established Result \ref{Result:Saddle_Point}, we return to \eqref{Eq:Intermediate_Free_Entropy} to complete the simplification. Under the relations in \eqref{Eq:Overlap_Consistency}-\eqref{Eq:Noise_Side_Conjugate}, all scalar terms reduce directly to those in \eqref{eq:RSfreeEnt}. The remaining term is the contour integral.

Let $\cI$ denote the contour contribution in \eqref{Eq:Intermediate_Free_Entropy}. We have
\begin{align}
\cI&=\frac{1}{2\pi\imagUnit}\oint_{\Gamma_{\sx}}{\dd x}\,\Bigg\{xV'(x^{2})\log|\tilde{\bI}+\lambda\bB(x)|+\Tr\,\hat{\bG}(x)\bG(x)+\Tr\,\hat{\bM}^{\top}(x)\bM(x)\nonumber\\
&\quad+\Tr\,\hat{\bB}(x)\left[\bB(x)-\bG(x)-\bM(x)\lambda(\tilde{\bI}-\lambda\bC(x))^{-1}\bM^{\top}(x)\right]\Bigg\}\nonumber\\
&\overset{\text{\eqref{Eq:bG}}}{=}\frac{1}{2\pi\imagUnit}\oint_{\Gamma_{\sx}}{\dd x}\,\Bigg\{xV'(x^{2})\log|\tilde{\bI}+\lambda\bB(x)|+\Tr\,\hat{\bM}^{\top}(x)\bM(x)+\Tr\,\hat{\bB}(x)\left[\bB(x)-\lambda\bM(x)(\tilde{\bI}-\lambda\bC(x))^{-1}\bM^{\top}(x)\right]\Bigg\}\nonumber\\
&\overset{\text{\eqref{Eq:bM}}}{=}\frac{1}{2\pi\imagUnit}\oint_{\Gamma_{\sx}}{\dd x}\,\left(xV'(x^{2})\log|\tilde{\bI}+\lambda\bB(x)|+\Tr\,\hat{\bB}(x)\bB(x)+\frac{1}{2}\Tr\,\hat{\bM}^{\top}(x)\bM(x)\right)\nonumber\\
&\overset{\text{(A)}}{=}\frac{1}{4\pi\imagUnit}\oint_{\Gamma_{\sx}}{\dd x}\,\Tr\,\hat{\bM}^{\top}(x)\bM(x)+\psi_{\alpha,\lambda}.\label{Eq:Contour_Free_Entropy}
\end{align}
For (A), we use \eqref{Eq:bB} and \eqref{Eq:B_X}, which show that, in the thermodynamic limit, $\bB(x)$ and $\hat{\bB}(x)$ are independent of the remaining scalar parameters $\{m_{\su},m_{\sv},\hat{m}_{\su},\hat{m}_{\sv}\}$. Their contribution is therefore common to all scalar stationary solutions and is given by
\begin{align}
\psi_{\alpha,\lambda}&\triangleq\frac{1}{2\pi\imagUnit}\oint_{\Gamma_{\sx}}{\dd x}\,\left(xV'(x^{2})\log|\tilde{\bI}+\lambda\bB(x)|+\Tr\,\hat{\bB}(x)\bB(x)\right).\label{eq:free_entropy_additive_constant}
\end{align}
Finally, substituting \eqref{Eq:Contour_Free_Entropy} together with \eqref{Eq:Overlap_Consistency}-\eqref{Eq:Noise_Side_Conjugate} into \eqref{Eq:Intermediate_Free_Entropy} gives the RS free entropy in \eqref{eq:RSfreeEnt}, completing Result \ref{Result:Free_Entropy}.

\section{Conclusion and Future Work}\label{Sec:Conclusion}

In this work, we studied the IT limits of rectangular rank-one spiked matrix models with RI noise. Using the replica method, we derived the asymptotic free entropy and the corresponding Bayes-optimal saddle point equations. We then showed that, under an explicit parameter mapping, these equations coincide with the SE fixed point equations of the OAMP in \cite{Chen2026orthogonal}. This correspondence provides an algorithmic interpretation of the replica prediction and supports the predicted Bayes optimality of OAMP when the SE reaches the information-theoretically selected branch. The numerical results further confirm the correspondence between the replica saddle point equations and the SE fixed points of OAMP. In the unique informative fixed point regime, both descriptions yield the same performance, whereas in the multiple fixed point regime, free entropy comparison identifies the IT branch and the OAMP SE may instead converge to a metastable branch.

Several questions remain open. A natural next step is to establish the replica free entropy and the resulting Bayes-optimal MMSE characterization rigorously. Another direction is to extend the present rank-one analysis to finite-rank rectangular spiked models. Ongoing work aims to develop both OAMP-type algorithms and the corresponding replica analysis for this setting. Finally, a better understanding of hard phases and the associated statistical-to-computational gaps remains an important direction for future work.

\section*{Acknowledgements}

Part of this work was done while F.C was affiliated with the International Center for Theoretical Physics. J.B and F.C were funded by the European Union (ERC, CHORAL, Project No. 101039794). Views and opinions expressed are however those of the authors only and do not necessarily reflect those of the European Union or the European Research Council. Neither the European Union nor the granting authority can be held responsible for them. R.S. acknowledges discussions with Yizhou Xu at the initial stage of the work. Q.C. thanks Haohua Chen for helpful discussions while establishing the equivalence between the RS saddle point equations and the SE fixed point equations of OAMP. J.~Ma was supported by the National Key R\&D Program of China under Grant No.~2024YFA1014200, and the National Natural Science Foundation of China under Grant No.~62571526 and No.~12622133.

\bibliographystyle{IEEEtran}
\bibliography{IEEEabrv,CQ_original_mod,CQ_bib_mod}

\appendices

\section{Parameters and Functions Entering the RS Variational functional $f_{\Sf{RS}}$}\label{App:RS_Coupling_Functions}

Result \ref{Result:Free_Entropy} expresses the asymptotic free entropy $f$ through an extremization of the RS variational functional $f_{\Sf{RS}}$. The parameters and functions entering the functional are introduced at different stages of the replica derivation. For ease of reference, their final expressions are collected here.

Let $\operatorname{Sol}_{\Sf{RS}}\triangleq\left\{\left(m_{\su,\ell},m_{\sv,\ell},\hat{m}_{\su,\ell},\hat{m}_{\sv,\ell}\right)\right\}_{\ell=1}^{L}$ denote the set of scalar stationary solutions to the saddle point equations in Result \ref{Result:Saddle_Point}, where $L$ is the number of such solutions. We fix one solution in $\operatorname{Sol}_{\Sf{RS}}$ and suppress its branch index $\ell$ in what follows. The parameters and functions needed to evaluate $f_{\Sf{RS}}$ on this branch are organized into three groups:
\begin{itemize}

\item \textit{Directly evaluated parameters and functions}. Once a scalar stationary solution is fixed, the following expressions can be evaluated:
\begin{align}
\tilde{m}_{\su}&=1-(1-m_{\su})^{-1}+\hat{m}_{\su},\\
\tilde{m}_{\sv}&=1-(1-m_{\sv})^{-1}+\hat{m}_{\sv},\\
c_{\Sf{1}}(x)&=\E_{s^{2}\sim\mu}\frac{x}{x^{2}-s^{2}},\\
c_{\Sf{2}}(x)&=\alpha\left[\frac{1}{\alpha}\E_{s^{2}\sim\mu}\frac{x}{x^{2}-s^{2}}+\left(1-\frac{1}{\alpha}\right)\frac{1}{x}\right],\\
\log\varphi_{\alpha,\lambda}&=\frac{1}{2\pi\imagUnit}\oint_{\Gamma_{\sx}}{\dd x}\,xV'(x^{2})\log\left(\lambda^{2}c_{\Sf{1}}(x)c_{\Sf{2}}(x)-1\right),\\
\psi_{\alpha,\lambda}&=\frac{1}{2\pi\imagUnit}\oint_{\Gamma_{\sx}}{\dd x}\,\left(xV'(x^{2})\log|\tilde{\bI}+\lambda\bB(x)|+\Tr\,\hat{\bB}(x)\bB(x)\right),\\
\tilde{G}_{\Sf{uv}}(z)&=-\frac{1}{2}\lambda zV'(z^{2}),\label{Eq:FE_tG_UV}\\
\tilde{\cG}_{\su}(z)&=(1-m_{\su})^{-1}-\hat{m}_{\su}+\lambda^{2}\left[z^{2}\E_{s^{2}\sim\mu}\frac{V'(z^{2})-V'(s^{2})}{z^{2}-s^{2}}+\E_{s^{2}\sim\mu}V'(s^{2})+(\alpha-1)V'(z^{2})\right],\label{Eq:FE_tG_U}\\
\tilde{\cG}_{\sv}(z)&=(1-m_{\sv})^{-1}-\hat{m}_{\sv}+\lambda^{2}\left(z^{2}\E_{s^{2}\sim\mu}\frac{V'(z^{2})-V'(s^{2})}{z^{2}-s^{2}}+\E_{s^{2}\sim\mu}V'(s^{2})\right),\label{Eq:FE_tG_V}\\
\bar{\cD}(z)&=\tilde{\cG}_{\sv}(z)-\frac{4\tilde{G}_{\Sf{uv}}^{2}(z)}{\tilde{\cG}_{\su}(z)}.\label{Eq:FE_b_D}
\end{align}

\item \textit{Closed self-consistency map}. For the fixed scalar stationary solution, the following relations jointly determine the functions:
\begin{align}
\tilde{\cM}_{\su}(z)&=\tilde{m}_{\su}+\tilde{M}_{\Sf{uu}}(z),\label{Eq:FE_t_CM_u}\\
\tilde{\cM}_{\sv}(z)&=\tilde{m}_{\sv}+\tilde{M}_{\Sf{vv}}(z),\label{Eq:FE_t_CM_v}\\
M_{\Sf{uu}}(x)&=\E_{s^{2}\sim\mu}\left[-\frac{1}{\tilde{\cG}_{\su}(s)}\left[\tilde{\cM}_{\su}(s)-\frac{2\tilde{G}_{\Sf{uv}}(s)}{\bar{\cD}(s)}\left(\tilde{M}_{\Sf{vu}}(s)-\frac{2\tilde{\cM}_{\su}(s)\tilde{G}_{\Sf{uv}}(s)}{\tilde{\cG}_{\su}(s)}\right)\right]\frac{x}{x^{2}-s^{2}}\right],\label{Eq:FE_M_uu}\\
M_{\Sf{uv}}(x)&=\E_{s^{2}\sim\mu}\left[-\frac{1}{\tilde{\cG}_{\su}(s)}\left[\tilde{M}_{\Sf{uv}}(s)-\frac{2\tilde{G}_{\Sf{uv}}(s)}{\bar{\cD}(s)}\left(\tilde{\cM}_{\sv}(s)-\frac{2\tilde{M}_{\Sf{uv}}(s)\tilde{G}_{\Sf{uv}}(s)}{\tilde{\cG}_{\su}(s)}\right)\right]\frac{s}{x^{2}-s^{2}}\right],\label{Eq:FE_M_uv}\\
M_{\Sf{vu}}(x)&=\E_{s^{2}\sim\mu}\left[-\frac{1}{\bar{\cD}(s)}\left(\tilde{M}_{\Sf{vu}}(s)-\frac{2\tilde{\cM}_{\su}(s)\tilde{G}_{\Sf{uv}}(s)}{\tilde{\cG}_{\su}(s)}\right)\frac{s}{x^{2}-s^{2}}\right],\label{Eq:FE_M_vu}\\
M_{\Sf{vv}}(x)&=\E_{s^{2}\sim\tilde{\mu}}\left[-\frac{\alpha}{\bar{\cD}(s)}\left(\tilde{\cM}_{\sv}(s)-\frac{2\tilde{M}_{\Sf{uv}}(s)\tilde{G}_{\Sf{uv}}(s)}{\tilde{\cG}_{\su}(s)}\right)\frac{x}{x^{2}-s^{2}}\right],\label{Eq:FE_M_vv}\\
\hat{M}_{\Sf{uu}}(x)&=-\frac{2\lambda^{2}xV'(x^{2})}{c_{\Sf{1}}(x)c_{\Sf{2}}(x)-\lambda^{-2}}\left(c_{\Sf{2}}^{2}(x)M_{\Sf{uu}}(x)-\frac{1}{\lambda}c_{\Sf{2}}(x)M_{\Sf{vu}}(x)+\frac{1}{\lambda}c_{\Sf{2}}(x)M_{\Sf{uv}}(x)-\frac{1}{\lambda^{2}}M_{\Sf{vv}}(x)\right),\label{Eq:FE_hM_uu}\\
\hat{M}_{\Sf{uv}}(x)&=-\frac{2\lambda^{2}xV'(x^{2})}{c_{\Sf{1}}(x)c_{\Sf{2}}(x)-\lambda^{-2}}\left(\frac{1}{\lambda}c_{\Sf{2}}(x)M_{\Sf{uu}}(x)-\frac{1}{\lambda^{2}}M_{\Sf{vu}}(x)+c_{\Sf{1}}(x)c_{\Sf{2}}(x)M_{\Sf{uv}}(x)-\frac{1}{\lambda}c_{\Sf{1}}(x)M_{\Sf{vv}}(x)\right),\label{Eq:FE_hM_uv}\\
\hat{M}_{\Sf{vu}}(x)&=-\frac{2\lambda^{2}xV'(x^{2})}{c_{\Sf{1}}(x)c_{\Sf{2}}(x)-\lambda^{-2}}\left(-\frac{1}{\lambda}c_{\Sf{2}}(x)M_{\Sf{uu}}(x)+c_{\Sf{1}}(x)c_{\Sf{2}}(x)M_{\Sf{vu}}(x)-\frac{1}{\lambda^{2}}M_{\Sf{uv}}(x)+\frac{1}{\lambda}c_{\Sf{1}}(x)M_{\Sf{vv}}(x)\right),\label{Eq:FE_hM_vu}\\
\hat{M}_{\Sf{vv}}(x)&=-\frac{2\lambda^{2}xV'(x^{2})}{c_{\Sf{1}}(x)c_{\Sf{2}}(x)-\lambda^{-2}}\left(-\frac{1}{\lambda^{2}}M_{\Sf{uu}}(x)+\frac{1}{\lambda}c_{\Sf{1}}(x)M_{\Sf{vu}}(x)-\frac{1}{\lambda}c_{\Sf{1}}(x)M_{\Sf{uv}}(x)+c_{\Sf{1}}^{2}(x)M_{\Sf{vv}}(x)\right),\label{Eq:FE_hM_vv}\\
\tilde{M}_{\Sf{uu}}(z)&=\frac{1}{2\pi\imagUnit}\oint_{\Gamma_{\sx}}{\dd x}\,\hat{M}_{\Sf{uu}}(x)\frac{x}{x^{2}-z^{2}},\label{Eq:FE_tM_uu}\\
\tilde{M}_{\Sf{uv}}(z)&=\frac{1}{2\pi\imagUnit}\oint_{\Gamma_{\sx}}{\dd x}\,\hat{M}_{\Sf{uv}}(x)\frac{z}{x^{2}-z^{2}},\label{Eq:FE_tM_uv}\\
\tilde{M}_{\Sf{vu}}(z)&=\frac{1}{2\pi\imagUnit}\oint_{\Gamma_{\sx}}{\dd x}\,\hat{M}_{\Sf{vu}}(x)\frac{z}{x^{2}-z^{2}},\label{Eq:FE_tM_vu}\\
\tilde{M}_{\Sf{vv}}(z)&=\frac{1}{2\pi\imagUnit}\oint_{\Gamma_{\sx}}{\dd x}\,\hat{M}_{\Sf{vv}}(x)\frac{x}{x^{2}-z^{2}}.\label{Eq:FE_tM_vv}
\end{align}

\item \textit{Functions evaluated at the fixed point}. Once the preceding relations have been solved, the remaining scalar function entering the potential is
\begin{align}
\bar{\cN}(z)&=\tilde{m}_{\sv}-\tilde{M}_{\Sf{vu}}^{2}(z)+2\tilde{G}_{\Sf{uv}}(z)\left(\frac{2\tilde{\cM}_{\su}(z)\tilde{M}_{\Sf{vu}}(z)}{\tilde{\cG}_{\su}(z)}+\frac{2(\tilde{m}_{\su}-\tilde{\cM}_{\su}^{2}(z))\tilde{G}_{\Sf{uv}}(z)}{\tilde{\cG}_{\su}^{2}(z)}\right)-\left(\tilde{\cM}_{\sv}(z)-\frac{2\tilde{M}_{\Sf{uv}}(z)\tilde{G}_{\Sf{uv}}(z)}{\tilde{\cG}_{\su}(z)}\right)^{2}.\label{Eq:FE_b_N}
\end{align}
The matrix-valued functions appearing in the contour term are then assembled from their entries as
\begin{align}
\bM(x)&=\begin{bmatrix}M_{\Sf{uu}}(x)&M_{\Sf{uv}}(x)\\M_{\Sf{vu}}(x)&M_{\Sf{vv}}(x)\end{bmatrix},\quad\hat{\bM}(x)=\begin{bmatrix}\hat{M}_{\Sf{uu}}(x)&\hat{M}_{\Sf{uv}}(x)\\\hat{M}_{\Sf{vu}}(x)&\hat{M}_{\Sf{vv}}(x)\end{bmatrix}.\label{Eq:FE_Matrix_Entries}
\end{align}

\end{itemize}

\section{Numerical Implementation}\label{App:NumericalSP}

\subsection{Numerical Solution of the Saddle Point Equations}\label{Sec:Numerical_Solution}

We solve the saddle point equations numerically using two complementary procedures. The first uses fixed-point iteration to efficiently trace a solution branch. The second performs two-dimensional grid-based residual minimization. Although computationally more demanding, the grid search can reveal whether additional stationary branches coexist.\\

\subsubsection{Fixed-Point Iteration}

The saddle point equations in Result \ref{Result:Saddle_Point} are treated as update rules for $\{m^{t}_{\su},m^{t}_{\sv},\hat{m}^{t}_{\su},\hat{m}^{t}_{\sv}\}$:
\begin{align}
m^{t}_{\su}&=\E_{u_{0},z_{\su}}[u_{0}\langle u_{a}\rangle_{t}],\label{Eq:RM_m_u_{t}}\\
m^{t}_{\sv}&=\E_{v_{0},z_{\sv}}[v_{0}\langle v_{a}\rangle_{t}],\label{Eq:RM_m_v_{t}}\\
\hat{m}^{t+1}_{\su}&=-\cR_{\cJ^{t}_{\su};\mu}(1-m^{t}_{\su}),\label{Eq:RM_hm_u_{t}}\\
\hat{m}^{t+1}_{\sv}&=-\cR_{\cJ^{t}_{\sv};\tilde{\mu}}(1-m^{t}_{\sv}).\label{Eq:RM_hm_v_{t}}
\end{align}
The iteration is initialized at $\hat m_{\su}^{0}=\hat m_{\sv}^{0}=\hat m_{\rm init}$, with $0<\hat m_{\rm init}\ll 1$. The subscript $t$ in $\langle\,\cdot\,\rangle_{t}$ indicates that the scalar posterior in \eqref{eq:scalPosterior} is evaluated at the current values $\{\hat m^{t}_{\su},\hat m^{t}_{\sv}\}$. The functions $\{\cJ^{t}_{\su},\cJ^{t}_{\sv}\}$ are constructed from \eqref{Eq:Ju_x}-\eqref{Eq:Jv_x}. Since their R-transforms generally have no closed-form expressions, we evaluate them numerically through the inverse Stieltjes transform. For the $\su$-channel,
\begin{align}
\cR_{\cJ^{t}_{\su};\mu}(1-m^{t}_{\su})=\underbrace{\cS^{-1}_{\cJ^{t}_{\su};\mu}(1-m^{t}_{\su})}_{\triangleq\xi^{t}_{\su}}-(1-m^{t}_{\su})^{-1}\quad\implies\quad1-m^{t}_{\su}=\cS_{\cJ^{t}_{\su};\mu}(\xi^{t}_{\su})=\E_{\gamma\sim\mu}[\xi^{t}_{\su}-\cJ^{t}_{\su}(\gamma)]^{-1}.
\end{align}
We determine $\xi^{t}_{\su}$ by bisection on the right-hand branch $\left[\max_{\gamma\in\operatorname{supp}(\mu)}\cJ^{t}_{\su}(\gamma)+\eta,+\infty\right)$ where $\eta>0$ is introduced for numerical stability. The $\sv$-channel is evaluated analogously. Fixed-point iteration is computationally efficient and therefore well suited to tracing a solution branch. However, it returns only the branch selected by the iteration and does not rule out the coexistence of other stationary branches.\\

\subsubsection{Grid-Based Residual Minimization}

To search for additional stationary solutions, we supplement the iteration with grid-based residual minimization. Eliminating the intermediate overlaps $\{m_{\su},m_{\sv}\}$ from Result \ref{Result:Saddle_Point} gives a coupled system for the SNR-related order parameters:
\begin{align}
\hat{m}_{\su}=F_{\su}(\hat{m}_{\su},\hat{m}_{\sv}),\quad\hat{m}_{\sv}=F_{\sv}(\hat{m}_{\su},\hat{m}_{\sv}).\label{Eq:Unit_Saddle_Point}
\end{align}
We define the residual error as
\begin{align}
\mathcal{E}(\hat{m}_{\su},\hat{m}_{\sv})&\triangleq\left(F_{\su}(\hat{m}_{\su},\hat{m}_{\sv})-\hat{m}_{\su}\right)^{2}+\left(F_{\sv}(\hat{m}_{\su},\hat{m}_{\sv})-\hat{m}_{\sv}\right)^{2}.
\end{align}
The residual is evaluated over a prescribed two-dimensional domain of $\{\hat{m}_{\su},\hat{m}_{\sv}\}$. Grid points associated with local minima of $\mathcal{E}(\hat{m}_{\su},\hat{m}_{\sv})$ are then used to initialize a local refinement step, yielding high-precision stationary solutions.

\subsection{Free Entropy Evaluation and Branch Selection}\label{Sec:Branch_Selection}

The grid-based residual minimization described above returns the stationary solutions detected within the prescribed search domain. We collect their scalar order parameters as $\operatorname{Sol}_{\Sf{RS}}=\left\{\left(m_{\su,\ell},m_{\sv,\ell},\hat{m}_{\su,\ell},\hat{m}_{\sv,\ell}\right)\right\}_{\ell=1}^{L}$. To determine which of these solutions corresponds to the IT limit, we must evaluate the replica free entropy at each solution and compare the resulting values. An element of $\operatorname{Sol}_{\Sf{RS}}$ does not directly provide all the quantities required for this evaluation. The expression in Result \ref{Result:Free_Entropy} also depends on the self-consistent functions $\{\tilde{\bM},\bM,\hat{\bM}\}$. For each scalar stationary solution, these functions satisfy the closed self-consistency map
\begin{align}
\tilde{\bM}\xrightarrow{\text{\eqref{Eq:FE_M_uu}-\eqref{Eq:FE_M_vv}}}\bM\xrightarrow{\text{\eqref{Eq:FE_hM_uu}-\eqref{Eq:FE_hM_vv}}}\hat{\bM}\xrightarrow{\text{\eqref{Eq:FE_tM_uu}-\eqref{Eq:FE_tM_vv}}}\tilde{\bM}.
\end{align}
The functions must therefore be obtained from a fixed point of this map before the free entropy can be evaluated.

Once the scalar order parameters are fixed, the closed map is affine. We replace the spectral and contour integrals by finite quadrature sums, which converts the functional fixed-point problem into a finite-dimensional linear system. After solving this system for every element of $\operatorname{Sol}_{\Sf{RS}}
$, we substitute the resulting functions into Result \ref{Result:Free_Entropy} and compare the corresponding free entropies. The additive terms $\log\varphi_{\alpha,\lambda}$ and $\psi_{\alpha,\lambda}$ are independent of the stationary solution and therefore do not affect this comparison. We omit these common terms and denote the remaining branch-dependent contribution by $\tilde{f}$.\\

\subsubsection{Discretization and Function Solution}

Let $\{\gamma_{i},w_{i}\}_{i=1}^{Q}$ denote the Gauss-Legendre nodes and effective weights on the spectral interval $[\gamma_{\min},\gamma_{\max}]$, and set $s_{i}\triangleq\sqrt{\gamma_{i}}$. The weights $w_{i}$ include the density of $\mu(\cdot)$. Since $\tilde{\mu}=\frac{1}{\alpha}\mu+(1-\frac{1}{\alpha})\delta_{0}$, the corresponding bulk weights are $\frac{1}{\alpha}w_{i}$, supplemented by an atom of mass $1-\frac{1}{\alpha}$ at zero.

We discretize $\Gamma_{\sy}$ in the $\sy$-plane. For a small $\epsilon>0$, define
\begin{align}
c_{\Sf{\Gamma}}&\triangleq\frac{\gamma_{\max}+\gamma_{\min}}{2},\quad r_{\Sf{\Gamma}}\triangleq(1+\epsilon)\frac{\gamma_{\max}-\gamma_{\min}}{2},\\
\theta_{j}&\triangleq\frac{2\pi(j-1)}{Q_{\Sf{\Gamma}}},\quad y_{j}\triangleq c_{\Sf{\Gamma}}+r_{\Sf{\Gamma}}\exp(\imagUnit\theta_{j}),\quad j\in\{1,\cdots,Q_{\Sf{\Gamma}}\}.\label{Eq:Numerical_Contour_y}
\end{align}
We choose $\epsilon$ sufficiently small that $r_{\Sf{\Gamma}}<c_{\Sf{\Gamma}}$. The resulting contour encloses $[\gamma_{\min},\gamma_{\max}]$ but excludes the origin. We discretize $\Gamma_{\sx}$ in the $\sx$-plane using the positive square-root branch, with $Q_{\Sf{\Gamma}}$ nodes and weights given by
\begin{align}
z_{j}&\triangleq\sqrt{y_{j}},\quad\rho_{j}\triangleq\frac{1}{2\pi\imagUnit}\frac{2\pi}{Q_{\Sf{\Gamma}}}\frac{\imagUnit r_{\Sf{\Gamma}}\exp(\imagUnit\theta_{j})}{2z_{j}},\quad j\in\{1,\cdots,Q_{\Sf{\Gamma}}\}.\label{Eq:Numerical_Contour_z}
\end{align}

We represent the unknown functions by
\begin{align}
x_{\Sf{vv},0}&\triangleq\tilde{M}_{\Sf{vv}}(0),\quad x_{\bullet,i}\triangleq\tilde{M}_{\bullet}(s_{i}),\quad\bullet\in\{\Sf{uu},\Sf{uv},\Sf{vu},\Sf{vv}\},\quad i\in\{1,\cdots,Q\},\\
\bx&\triangleq\left[x_{\Sf{uu},1},x_{\Sf{uv},1},x_{\Sf{vu},1},x_{\Sf{vv},1},\cdots,x_{\Sf{uu},Q},x_{\Sf{uv},Q},x_{\Sf{vu},Q},x_{\Sf{vv},Q},x_{\Sf{vv},0}\right]^{\top}\in\mathbb{C}^{4Q+1}.\label{Eq:Aux_Vector_Def}
\end{align}
For each element of $\operatorname{Sol}_{\Sf{RS}}$, we first evaluate $\{\tilde{G}_{\Sf{uv}},\tilde{\cG}_{\su},\tilde{\cG}_{\sv},\bar{\cD}\}$ from \eqref{Eq:FE_tG_UV}-\eqref{Eq:FE_b_D}. For a trial vector $\bx$, we then evaluate $\{\tilde{\cM}_{\su},\tilde{\cM}_{\sv}\}$ from \eqref{Eq:FE_t_CM_u}-\eqref{Eq:FE_t_CM_v}, with $\tilde{M}_{\bullet}(s_{i})$ represented by $x_{\bullet,i}$ and $\tilde{M}_{\Sf{vv}}(0)$ represented by $x_{\Sf{vv},0}$.

The spectral integrals in \eqref{Eq:FE_M_uu}-\eqref{Eq:FE_M_vv} are evaluated at the contour nodes $\{z_{j}\}_{j=1}^{Q_{\Sf{\Gamma}}}$ using the quadrature sums
\begin{align}
M_{\Sf{uu}}(z_{j})&=\sum_{i=1}^{Q}\left\{-\frac{1}{\tilde{\cG}_{\su}(s_{i})}\left[\tilde{\cM}_{\su}(s_{i})-\frac{2\tilde{G}_{\Sf{uv}}(s_{i})}{\bar{\cD}(s_{i})}\left(\tilde{M}_{\Sf{vu}}(s_{i})-\frac{2\tilde{\cM}_{\su}(s_{i})\tilde{G}_{\Sf{uv}}(s_{i})}{\tilde{\cG}_{\su}(s_{i})}\right)\right]\frac{z_{j}w_{i}}{z_{j}^{2}-s_{i}^{2}}\right\},\label{Eq:Discrete_M_uu}\\
M_{\Sf{uv}}(z_{j})&=\sum_{i=1}^{Q}\left\{-\frac{1}{\tilde{\cG}_{\su}(s_{i})}\left[\tilde{M}_{\Sf{uv}}(s_{i})-\frac{2\tilde{G}_{\Sf{uv}}(s_{i})}{\bar{\cD}(s_{i})}\left(\tilde{\cM}_{\sv}(s_{i})-\frac{2\tilde{M}_{\Sf{uv}}(s_{i})\tilde{G}_{\Sf{uv}}(s_{i})}{\tilde{\cG}_{\su}(s_{i})}\right)\right]\frac{s_{i}w_{i}}{z_{j}^{2}-s_{i}^{2}}\right\},\\
M_{\Sf{vu}}(z_{j})&=\sum_{i=1}^{Q}\left[-\frac{1}{\bar{\cD}(s_{i})}\left(\tilde{M}_{\Sf{vu}}(s_{i})-\frac{2\tilde{\cM}_{\su}(s_{i})\tilde{G}_{\Sf{uv}}(s_{i})}{\tilde{\cG}_{\su}(s_{i})}\right)\frac{s_{i}w_{i}}{z_{j}^{2}-s_{i}^{2}}\right],\\
M_{\Sf{vv}}(z_{j})&=\frac{1}{\alpha}\sum_{i=1}^{Q}\left[-\frac{\alpha}{\bar{\cD}(s_{i})}\left(\tilde{\cM}_{\sv}(s_{i})-\frac{2\tilde{M}_{\Sf{uv}}(s_{i})\tilde{G}_{\Sf{uv}}(s_{i})}{\tilde{\cG}_{\su}(s_{i})}\right)\frac{z_{j}w_{i}}{z_{j}^{2}-s_{i}^{2}}\right]+\left(1-\frac{1}{\alpha}\right)\left(-\frac{\alpha\tilde{\cM}_{\sv}(0)}{\tilde{\cG}_{\sv}(0)z_{j}}\right).\label{Eq:Discrete_M_vv}
\end{align}
Equations \eqref{Eq:FE_hM_uu}-\eqref{Eq:FE_hM_vv} then determine
\begin{align}
\bigl\{\hat{M}_{\Sf{uu}}(z_{j}),\hat{M}_{\Sf{uv}}(z_{j}),\hat{M}_{\Sf{vu}}(z_{j}),\hat{M}_{\Sf{vv}}(z_{j})\bigr\}_{j=1}^{Q_{\Sf{\Gamma}}}.
\end{align}

The contour quadrature applied to \eqref{Eq:FE_tM_uu}-\eqref{Eq:FE_tM_vv} gives the following updates at the spectral nodes $\{s_{i}\}_{i=1}^{Q}$ and, for the $\Sf{vv}$ component, at the zero atom:
\begin{align}
x_{\Sf{uu},i}^{\Sf{new}}&=\sum_{j=1}^{Q_{\Sf{\Gamma}}}\hat{M}_{\Sf{uu}}(z_{j})\frac{z_{j}\rho_{j}}{z_{j}^{2}-s_{i}^{2}},\label{Eq:Discrete_tM_uu}\\
x_{\Sf{uv},i}^{\Sf{new}}&=\sum_{j=1}^{Q_{\Sf{\Gamma}}}\hat{M}_{\Sf{uv}}(z_{j})\frac{s_{i}\rho_{j}}{z_{j}^{2}-s_{i}^{2}},\\
x_{\Sf{vu},i}^{\Sf{new}}&=\sum_{j=1}^{Q_{\Sf{\Gamma}}}\hat{M}_{\Sf{vu}}(z_{j})\frac{s_{i}\rho_{j}}{z_{j}^{2}-s_{i}^{2}},\\
x_{\Sf{vv},i}^{\Sf{new}}&=\sum_{j=1}^{Q_{\Sf{\Gamma}}}\hat{M}_{\Sf{vv}}(z_{j})\frac{z_{j}\rho_{j}}{z_{j}^{2}-s_{i}^{2}},\\
x_{\Sf{vv},0}^{\Sf{new}}&=\sum_{j=1}^{Q_{\Sf{\Gamma}}}\hat{M}_{\Sf{vv}}(z_{j})\frac{\rho_{j}}{z_{j}}.\label{Eq:Discrete_tM_vv_0}
\end{align}

Let $\mathcal{T}(\bx)$ denote the vector formed by the updates above. At fixed scalar order parameters, this map is affine:
\begin{align}
\mathcal{T}(\bx)&=\bm{K}\bx+\bm{g},\quad\bm{g}\triangleq\mathcal{T}(\bm{0}),\quad\bm{K}_{:,i}\triangleq\mathcal{T}(\bm{e}_{i})-\mathcal{T}(\bm{0}),\quad i\in\{1,\cdots,4Q+1\},\label{Eq:Aux_Affine_Map}
\end{align}
where $\bm{e}_{i}$ is the $i$-th column of the $(4Q+1)\times(4Q+1)$ identity matrix. Hence, the fixed-point condition $\bx^{\star}=\mathcal{T}(\bx^{\star})$ is equivalent to
\begin{align}
(\bI-\bm{K})\bx^{\star}&=\bm{g}.\label{Eq:Aux_Linear_System}
\end{align}

\subsubsection{Free Entropy and Branch Selection}

For each scalar stationary solution, solving \eqref{Eq:Aux_Linear_System} gives the corresponding fixed point $\bx^{\star}$. Substituting $\bx^{\star}$ into the discrete relations above determines $\{\tilde{\cM}_{\su},\tilde{\cM}_{\sv},\bM,\hat{\bM}\}$, after which $\bar{\cN}(s_{i})$ is evaluated from \eqref{Eq:FE_b_N} at the spectral nodes $\{s_{i}\}_{i=1}^{Q}$. Together with $\{\tilde{G}_{\Sf{uv}},\tilde{\cG}_{\su},\tilde{\cG}_{\sv},\bar{\cD}\}$ computed above, these quantities provide all the function-dependent terms required in Result \ref{Result:Free_Entropy}. Let $\Psi_{\su}(\hat{m}_{\su})$ and $\Psi_{\sv}(\hat{m}_{\sv})$ denote the scalar-channel log-partition terms in Result \ref{Result:Free_Entropy}. The evaluation of these scalar terms is standard and will not be detailed further. The branch-dependent free entropy is then evaluated as
\begin{align}
\tilde{f}&=\frac{1}{2}\sum_{j=1}^{Q_{\Sf{\Gamma}}}\rho_{j}\left(\hat{M}_{\Sf{uu}}(z_{j})M_{\Sf{uu}}(z_{j})+\hat{M}_{\Sf{uv}}(z_{j})M_{\Sf{uv}}(z_{j})+\hat{M}_{\Sf{vu}}(z_{j})M_{\Sf{vu}}(z_{j})+\hat{M}_{\Sf{vv}}(z_{j})M_{\Sf{vv}}(z_{j})\right)\nonumber\\
&\quad-\frac{1}{2}\log(1-m_{\su})-\frac{\alpha}{2}\log(1-m_{\sv})-\frac{m_{\su}}{2(1-m_{\su})}-\frac{\alpha m_{\sv}}{2(1-m_{\sv})}+\Psi_{\su}(\hat{m}_{\su})+\alpha\Psi_{\sv}(\hat{m}_{\sv})\nonumber\\
&\quad-\frac{\alpha-1}{2}\left(\log\tilde{\cG}_{\sv}(0)+\frac{\tilde{m}_{\sv}-\tilde{\cM}_{\sv}^{2}(0)}{\tilde{\cG}_{\sv}(0)}\right)-\frac{1}{2}\sum_{i=1}^{Q}w_{i}\left(\log\tilde{\cG}_{\su}(s_{i})+\frac{\tilde{m}_{\su}-\tilde{\cM}_{\su}^{2}(s_{i})}{\tilde{\cG}_{\su}(s_{i})}+\frac{\bar{\cN}(s_{i})}{\bar{\cD}(s_{i})}-\frac{x_{\Sf{uv},i}^{2}}{\tilde{\cG}_{\su}(s_{i})}+\log\bar{\cD}(s_{i})\right).\label{Eq:Numerical_Free_Entropy}
\end{align}
The exact value of \eqref{Eq:Numerical_Free_Entropy} is real. A small imaginary component may remain after numerical contour integration; this component is a quadrature error, and only $\Re(\tilde{f})$ is used in the branch comparison.

Let $\tilde{f}_{\ell}$ denote the value of \eqref{Eq:Numerical_Free_Entropy} for the $\ell$-th element of $\operatorname{Sol}_{\Sf{RS}}
$. The branch corresponding to the IT limit is selected according to
\begin{align}
\ell_{\Sf{IT}}&=\mathop{\operatorname{argmax}}\limits_{\ell\in\{1,\cdots,L\}}\Re(\tilde{f}_{\ell}).
\end{align}
The branch indexed by $\ell_{\Sf{IT}}$ is reported as the IT limit, while the remaining coexisting stationary branches are reported as metastable states.


\section{Derivation of \texorpdfstring{$\varphi(Z_{0},Y)$}{$\varphi(Z_{0},Y)$}}\label{Sec:C_y}

We start from the definition of $\varphi(\bZ_{0},\bY)$ in \eqref{Eq:Partition_Func}:
\begin{align}
\varphi(\bZ_{0},\bY)&=\exp\left(-\frac{N}{2}\left(\Tr\,V(\bY\bY^{\top})-\Tr\,V(\bZ_{0}\bZ_{0}^{\top})\right)\right).
\end{align}
Applying the contour representation in \eqref{Eq:Trace_V} to the two trace terms gives
\begin{align}
\varphi(\bZ_{0},\bY)&=\exp\left(-\frac{N}{2}\left[\frac{1}{2\pi\imagUnit}\oint_{\Gamma_{\sx}}{\dd x}\,V(x^{2})\Big(\Tr\,\cG_{\tilde{\bY}}(x)-\Tr\,\cG_{\tilde{\bZ}_{0}}(x)\Big)\right]\right)\nonumber\\
&\overset{\text{(A)}}{=}\exp\left(-\frac{N}{2}\left[\frac{1}{2\pi\imagUnit}\oint_{\Gamma_{\sx}}{\dd x}\,V(x^{2})\Tr\,\frac{\lambda}{N}\cG_{\tilde{\bZ}_{0}}(x)\tilde{\bX}_{0}(\tilde{\bI}-\lambda\bC(x))^{-1}\tilde{\bX}_{0}^{\top}\cG_{\tilde{\bZ}_{0}}(x)\right]\right)\nonumber\\
&\overset{\text{(B)}}{=}\exp\left(-\frac{N}{2}\left(\frac{1}{2\pi\imagUnit}\oint_{\Gamma_{\sx}}{\dd x}\,V(x^{2})\frac{\dd}{\dd x}\log|\tilde{\bI}-\lambda\bC(x)|\right)\right)\nonumber\\
&\overset{\text{(C)}}{=}\exp\left(\frac{N}{2\pi\imagUnit}\oint_{\Gamma_{\sx}}{\dd x}\,xV'(x^{2})\log|\tilde{\bI}-\lambda\bC(x)|\right).
\end{align}
Here, (A) follows from the Woodbury matrix identity applied to the resolvent $\cG_{\tilde{\bY}}(x)$, as shown in step (G) of \eqref{Eq:Phi_Y}. (B) follows from the same logarithmic-determinant differentiation used in step (B) of \eqref{Eq:Partition_Func}, now applied to $\tilde{\bI}-\lambda\bC(x)$. (C) follows from integration by parts along the contour.

We next explain the role of the subtraction $\Tr\,V(\bZ_{0}\bZ_{0}^{\top})$. The trace $\Tr\,V(\bY\bY^{\top})$ contains $N$ eigenvalue contributions of order one and is therefore $O(N)$. Without the subtraction, the factor $\frac{N}{2}$ in the exponent would produce an $O(N^{2})$ contribution, so that the normalized free entropy $\frac{1}{N}\E\log\Omega(\bZ_{0},\bY)$ would diverge linearly with $N$. Equivalently, this is why the factor $\exp\left(\frac{N}{2}\Tr\,V(\bZ_{0}\bZ_{0}^{\top})\right)
$ must be included in the partition function \eqref{Eq:PartitionFunction}. The subtraction removes precisely this extensive noise contribution. Indeed, $\bY\bY^{\top}$ differs from $\bZ_{0}\bZ_{0}^{\top}$ only through the finite-rank signal perturbation. Under the scaling considered here, this perturbation changes the spectral trace only by an $O(1)$ amount, so that
\begin{align}
\Tr\,V(\bY\bY^{\top})-\Tr\,V(\bZ_{0}\bZ_{0}^{\top})=O(1).
\end{align}
The exponent defining $\varphi(\bZ_{0},\bY)$ is consequently $O(N)$ rather than $O(N^{2})$, giving an $O(1)$ contribution to the normalized free entropy. The same subtraction is also required for the exact relation between the free entropy and the mutual information in \eqref{Eq:MI_identity}.

It remains to evaluate this contribution in the thermodynamic limit. From \eqref{Eq:C_x}, $\bC(x)\to\bar{\bC}(x)$, where $\bar{\bC}(x)$ is the deterministic limit obtained from the asymptotic noise spectrum. Hence,
\begin{align}
\log\varphi_{\alpha,\lambda}
&\triangleq\lim_{N\to\infty}\frac{1}{N}\E_{\bu_{0},\bv_{0},\bZ_{0}}\log\varphi(\bZ_{0},\bY)\nonumber\\
&=\lim_{N\to\infty}\E_{\bu_{0},\bv_{0},\bZ_{0}}\left[\frac{1}{2\pi\imagUnit}\oint_{\Gamma_{\sx}}{\dd x}\,xV'(x^{2})\log|\tilde{\bI}-\lambda\bC(x)|\right]\nonumber\\
&\to\frac{1}{2\pi\imagUnit}\oint_{\Gamma_{\sx}}{\dd x}\,xV'(x^{2})\log|\tilde{\bI}-\lambda\bar{\bC}(x)|\nonumber\\
&=\frac{1}{2\pi\imagUnit}\oint_{\Gamma_{\sx}}{\dd x}\,xV'(x^{2})\log\left|\begin{bmatrix}0&1\\1&0\end{bmatrix}-\lambda\begin{bmatrix}c_{1}(x)&0\\0&c_{2}(x)\end{bmatrix}\right|\nonumber\\
&=\frac{1}{2\pi\imagUnit}\oint_{\Gamma_{\sx}}{\dd x}\,xV'(x^{2})\log(\lambda^{2}c_{1}(x)c_{2}(x)-1).\label{eq:logvarphi}
\end{align}
Here, $\{c_{1},c_{2}\}$ are defined in \eqref{Eq:C_1_x}-\eqref{Eq:C_2_x}, respectively. The quantity $\log\varphi_{\alpha,\lambda}$ depends only on the model parameters and the limiting noise spectrum. In particular, it does not depend on the replica order parameters introduced in the variational calculation. It therefore enters the free entropy as an additive term outside the replica extremization and generates no additional saddle point equations. Nevertheless, $\log\varphi_{\alpha,\lambda}$ must be retained when the absolute value of the free entropy is needed, in particular when the mutual information is evaluated through \eqref{Eq:MI_identity}. At fixed model parameters, the same term is shared by all stationary solutions and therefore does not affect free entropy comparisons between different fixed point branches.

\section{Derivation of Result \ref{Result:Free_Entropy}}\label{Sec:Proof_RM}

This appendix gives the detailed replica calculation leading to Result \ref{Result:Free_Entropy}. The prior-related terms $g_{\su,n}$ and $g_{\sv,n}$ are evaluated first, followed by the noise-related term $g_{\sz,n}$. Since only the replica limit $n\to0$ enters the free entropy, all quantities are expanded to first order in $n$. We begin with several algebraic formulas for RS matrices used throughout the calculation. The prior-related terms then reduce to integrals over scalar variables under the RS ansatz, while the noise-related term is evaluated using the rectangular inhomogeneous spherical integral in Result \ref{Lemma:HCIZ}.

\subsection{Algebraic Formulas for RS Matrices}\label{Sec:Property_RS}

We begin with a few algebraic formulas for RS matrices that will be used repeatedly in the derivation. Recall the $(n+1)\times(n+1)$ RS matrix $\bQ=\cQ(\beta^{(0)},m,q,\beta)$ introduced in \eqref{Eq:RS_Matrix}, which can be written as
\begin{align}
\cQ(\beta^{(0)},m,q,\beta)&\triangleq\left[\begin{array}{c|c}\beta^{(0)}&m\vec{\bm{1}}\\\hline m\bm{1}&(\beta-q)\Eye+q\bm{1}\vec{\bm{1}}\end{array}\right].
\end{align}
Here, $\beta^{(0)}$ and $\beta$ denote the self-overlaps of the ground-truth signal and a replica, respectively, while $m$ and $q$ denote the overlap between the ground-truth signal and a replica and the overlap between two distinct replicas, respectively.

The matrix $\bQ$ admits the block diagonalization \cite{shinzato2008perceptron,takahashi2022macroscopic}
\begin{align}
\bQ=\bU\left[\begin{array}{c|ccc}\beta^{(0)}&\sqrt{n}m&&\\\hline\sqrt{n}m&\beta-q+n q&&\\&&\beta-q&\\&&&\ddots\end{array}\right]\bU^{\dagger},\quad\bU\triangleq\left[\begin{array}{c|c}1&\vec{\bm{0}}\\\hline\bm{0}&\frac{1}{\sqrt{n}}\bm{E}_{n}\end{array}\right],\label{Eq:Block_Diagonal}
\end{align}
where $\bm{E}_{n}$ is the $n\times n$ discrete Fourier transform matrix. Its log-determinant therefore has the first-order expansion
\begin{align}
\log|\bQ|&=\log\beta^{(0)}+n\log(\beta-q)+n\frac{\beta^{(0)}q-m^{2}}{\beta^{(0)}(\beta-q)}+O(n^{2}).
\end{align}

The inverse $\hat{\bQ}\triangleq\bQ^{-1}$ retains the RS form and is parameterized as $\hat{\bQ}=\cQ(\hat{\beta}^{(0)}(n),\hat{m}(n),\hat{q}(n),\hat{\beta}(n))$. Expanding its parameters around $n=0$ gives \cite{shinzato2008perceptron,takahashi2022macroscopic}
\begin{align}
\hat{\beta}^{(0)}(n)&=\frac{1}{\beta^{(0)}}+n\frac{m^{2}}{(\beta^{(0)})^{2}(\beta-q)}+O(n^{2}),\\
\hat{m}(n)&=-\frac{m}{\beta^{(0)}(\beta-q)}+n\frac{m(\beta^{(0)}q-m^{2})}{(\beta^{(0)})^{2}(\beta-q)^{2}}+O(n^{2}),\\
\hat{q}(n)&=-\frac{\beta^{(0)}q-m^{2}}{\beta^{(0)}(\beta-q)^{2}}+n\frac{(\beta^{(0)}q-m^{2})^{2}}{(\beta^{(0)})^{2}(\beta-q)^{3}}+O(n^{2}),\\
\hat{\beta}(n)&=\frac{1}{\beta-q}+\hat{q}(n)+O(n^{2}).
\end{align}

We also use $(n+1)\times(n+1)$ replica coupling matrices of the form $\bC=\cC(h_{1},h_{2},h_{3})$, where
\begin{align}
\cC(h_{1},h_{2},h_{3})\triangleq\left[\begin{array}{c|ccc}0&h_{1}&\cdots\\\hline h_{2}&h_{3}&0\\\vdots&0&\ddots\end{array}\right].\label{Eq:C_Matrix}
\end{align}
The transformed matrix $\tilde{\bQ}\triangleq\bC^{\top}\bQ\bC$ also has the RS form and can be written as $\tilde{\bQ}=\cQ(\tilde{\beta}^{(0)}(n),\tilde{m}(n),\tilde{q}(n),\tilde{\beta}(n))$. Expanding these parameters around $n=0$ to first order gives
\begin{align}
\tilde{\beta}^{(0)}(n)&=n(\beta-q)h_{2}^{2}+O(n^{2}),\\
\tilde{m}(n)&=(\beta-q)h_{2}h_{3}+n(mh_{1}+qh_{3})h_{2}+O(n^{2}),\\
\tilde{q}(n)&=\beta^{(0)}h_{1}^{2}+(2mh_{1}+qh_{3})h_{3}+O(n^{2}),\\
\tilde{\beta}(n)&=\beta^{(0)}h_{1}^{2}+(2mh_{1}+\beta h_{3})h_{3}+O(n^{2}).
\end{align}

\subsection{Derivation of Prior-Related Terms $g_{\su,n}$ and $g_{\sv,n}$}\label{Sec:Proof_RM_Prior}

The calculations for $g_{\su,n}$ and $g_{\sv,n}$ have the same form, so we give the derivation for the $\su$-channel. From \eqref{Eq:g_u_def},
\begin{align}
g_{\su,n}&=\lim_{N\to\infty}\frac{1}{N}\log\int{\prod_{a=0}^{n}\dd P_{\su}(\bu_{a})}\,\prod_{0\leq a\leq a'\leq n}\delta(NQ_{\su,a,a'}-\bu_{a}^{\top}\bu_{a'}).
\end{align}

We impose the overlap constraints through the Fourier representation of the Dirac delta function \cite{shinzato2008learning,shinzato2008perceptron,takahashi2022macroscopic}. Introducing the conjugate replica matrix $\bm{\Lambda}_{\su}\in\R^{(n+1)\times(n+1)}$ gives
\begin{align}
\prod_{0\leq a\leq a'\leq n}\delta\left(NQ_{\su,a,a'}-\bu_{a}^{\top}\bu_{a'}\right)&=C\int{\dd\bm{\Lambda}_{\su}}\,\exp\left(\frac{N}{2}\Tr\bQ_{\su}\bm{\Lambda}_{\su}\right)\prod_{i=1}^{N}\exp\left(-\frac{1}{2}\vec{\bu}_{i}\bm{\Lambda}_{\su}\vec{\bu}^{\top}_{i}\right),
\end{align}
where $C$ collects the Fourier normalization factors. Each of the $(n+1)(n+2)/2$ constraints contributes a factor polynomial in $N$, so that $N^{-1}\log C=O(N^{-1}\log N)$ vanishes in the thermodynamic limit. On the other hand,  $\dd\bm{\Lambda}_{\su}$ denotes integration over the independent entries of $\bm{\Lambda}_{\su}$, and the row vector $\vec{\bu}_{i}\triangleq[u_{0,i},\cdots,u_{n,i}]$ collects the $n+1$ replica components at spin index $i$. The imaginary unit is absorbed into the conjugate variables, whose independent entries are integrated along the imaginary axes \cite{barbier2023fundamental}. Under the RS ansatz, the conjugate matrix is parameterized as $\bm{\Lambda}_{\su}=\cQ(\hat{\beta}^{(0)}_{\su},-\hat{m}_{\su},-\hat{q}_{\su},\hat{\beta}_{\su})$.

In the thermodynamic limit, the integral over the conjugate replica matrix $\bm{\Lambda}_{\su}$ is dominated by its saddle point, yielding
\begin{align}
g_{\su,n}&=\Extr_{\bm{\Lambda}_{\su}}\left(\frac{1}{2}\Tr\bQ_{\su}\bm{\Lambda}_{\su}+\log\int{\dd P_{\su}(\vec{\bu}_{i})}\,\exp\left(-\frac{1}{2}\vec{\bu}_{i}\bm{\Lambda}_{\su}\vec{\bu}_{i}^{\top}\right)\right),
\end{align}
where $P_{\su}(\vec{\bu}_{i})\triangleq\prod_{a=0}^{n}P_{\su}(u_{a,i})$. Substituting the RS parameterizations of $\bQ_{\su}$ and $\bm{\Lambda}_{\su}$ into the trace term gives
\begin{align}
\Tr\bQ_{\su}\bm{\Lambda}_{\su}&=\beta_{\su}^{(0)}\hat{\beta}_{\su}^{(0)}+n(-2m_{\su}\hat{m}_{\su}+q_{\su}\hat{q}_{\su}+\beta_{\su}\hat{\beta}_{\su})-n^{2}q_{\su}\hat{q}_{\su}.\label{Eq:Qu_Lu}
\end{align}
For the second term, expanding the quadratic form gives
\begin{align}
&\quad\log\int{\dd P_{\su}(\vec{\bu}_{i})}\,\exp\left(-\frac{1}{2}\vec{\bu}_{i}\bm{\Lambda}_{\su}\vec{\bu}_{i}^{\top}\right)\nonumber\\
&=\log\int{\dd P_{\su}(\vec{\bu}_{i})}\,\exp\left(-\frac{1}{2}\left[\hat{\beta}^{(0)}_{\su}u^{2}_{0}-2\hat{m}_{\su}u_{0}\left(\sum^{n}_{a=1}u_{a}\right)-\hat{q}_{\su}\left(\sum^{n}_{a=1}u_{a}\right)^{2}+(\hat{\beta}_{\su}+\hat{q}_{\su})\left(\sum^{n}_{a=1}u^{2}_{a}\right)\right]\right)\nonumber\\
&\overset{\text{(A)}}{=}\log\int{\dd z_{\su}\dd P_{\su}(u_{0})}\,\Normal[z_{\su}|0,1]\exp\left(-\frac{\hat{\beta}^{(0)}_{\su}}{2}u_{0}^{2}\right)\left(\int{\dd P_{\su}(u_{a})}\,\exp\left(\hat{m}_{\su}u_{0}u_{a}-\frac{\hat{\beta}_{\su}+\hat{q}_{\su}}{2}u_{a}^{2}+\sqrt{\hat{q}_{\su}}u_{a}z_{\su}\right)\right)^{n}.
\end{align}
Here, (A) follows from the Hubbard-Stratonovich transformation applied to the term $\left(\sum_{a=1}^{n}u_{a}\right)^{2}$, with $z_{\su}\sim\Normal(0,1)$. This decouples the replicas, and the integrals over $\{u_{1},\cdots,u_{n}\}$ factorize into the $n$-th power of the same scalar integral.

Combining the two terms and expanding to first order around $n=0$ gives
\begin{align}
g_{\su,n}&=\Extr_{\hat{\beta}^{(0)}_{\su}}(\bar{g}_{\su}+n\tilde{g}_{\su})+O(n^{2}),\label{Eq:g_u_n}
\end{align}
where
\begin{align}
\bar{g}_{\su}&=\frac{1}{2}\beta^{(0)}_{\su}\hat{\beta}^{(0)}_{\su}+\log Z_{\su},\\
\tilde{g}_{\su}&=\Extr_{\hat{m}_{\su},\hat{q}_{\su},\hat{\beta}_{\su}}\left(\frac{1}{2}\left(-2m_{\su}\hat{m}_{\su}+q_{\su}\hat{q}_{\su}+\beta_{\su}\hat{\beta}_{\su}\right) +\frac{1}{Z_{\su}}\E_{u_{0},z_{\su}}\left[e^{-\frac{\hat{\beta}^{(0)}_{\su}u_{0}^{2}}{2}}\log\E_{u_{a}}\exp\left(\hat{m}_{\su}u_{0}u_{a}-\frac{\hat{\beta}_{\su}+\hat{q}_{\su}}{2}u_{a}^{2}+\sqrt{\hat{q}_{\su}}u_{a}z_{\su}\right)\right]\right).\label{Eq:tg_u}
\end{align}
Here, $u_{0},u_{a}\overset{\text{i.i.d.}}{\sim}P_{\su}(\cdot)$, $z_{\su}\sim\Normal(0,1)$, and
\begin{align}
Z_{\su}&\triangleq\E_{u_{0}}\exp\left(-\frac{\hat{\beta}^{(0)}_{\su}}{2}u_{0}^{2}\right).\label{Eq:Z_u}
\end{align}

Stationarity of $\bar{g}_{\su}$ w.r.t. $\hat{\beta}^{(0)}_{\su}$ gives
\begin{align}
\beta^{(0)}_{\su}&=\frac{1}{Z_{\su}}\E_{u_{0}}\left[u_{0}^{2}\exp\left(-\frac{\hat{\beta}^{(0)}_{\su}}{2}u_{0}^{2}\right)\right]=1.\label{Eq:beta_0_u}
\end{align}
Equation \eqref{Eq:hb_u_hb_v} gives $\hat{\beta}^{(0)}_{\su}=0$, so \eqref{Eq:Z_u} reduces to $Z_{\su}=1$. Substituting these results into \eqref{Eq:g_u_n} yields the final expression for $g_{\su,n}$ in \eqref{Eq:G_SU_N}-\eqref{Eq:TG_SU}. The $\sv$-channel follows from the same calculation by replacing $\{u_{0},u_{a}\}$ with $\{v_{0},v_{a}\}\overset{\text{i.i.d.}}{\sim}P_{\sv}(\cdot)$ and the $\su$-channel parameters with their $\sv$-channel counterparts.

\subsection{Derivation of Noise-Related Term $g_{\sz,n}$}\label{Sec:Proof_RM_Noise}

We next evaluate the noise-related term $g_{\sz,n}$. To apply the rectangular inhomogeneous spherical integral in Result \ref{Lemma:HCIZ}, we first rewrite $g_{\sz,n}$ in a suitable form.

We start from the representation of $g_{\sz,n}$ in \eqref{Eq:g_z_{n}}:
\begin{align}
g_{\sz,n}&=\lim_{N\to\infty}\frac{1}{N}\log\E_{\bZ_{0}}\exp\left(-\sum_{a=1}^{n}\frac{1}{2\pi\imagUnit}\oint_{\Gamma_{\sx}}{\dd x}\,\Tr\,\left(\hat{\bG}(x)\tilde{\bX}_{a}^{\top}\cG_{\tilde{\bZ}_{0}}(x)\tilde{\bX}_{a}+\hat{\bM}^{\top}(x)\tilde{\bX}_{a}^{\top}\cG_{\tilde{\bZ}_{0}}(x)\tilde{\bX}_{0}\right)\right).
\end{align}
We write the conjugate functions in component form, with $\hat{\bG}(x)$ symmetric and $\hat{\bM}(x)$ not necessarily symmetric. Using the Schur complement, the resolvent $\cG_{\tilde{\bZ}_{0}}(x)$ can be written in block form as
\begin{align}
\hat{\bM}(x)&=\begin{bmatrix}\hat{M}_{\Sf{uu}}(x)&\hat{M}_{\Sf{uv}}(x)\\\hat{M}_{\Sf{vu}}(x)&\hat{M}_{\Sf{vv}}(x)\end{bmatrix},\quad\hat{\bG}(x)=\begin{bmatrix}\hat{G}_{\Sf{uu}}(x)&\hat{G}_{\Sf{uv}}(x)\\\hat{G}_{\Sf{uv}}(x)&\hat{G}_{\Sf{vv}}(x)\end{bmatrix},\quad\cG_{\tilde{\bZ}_{0}}(x)=\begin{bmatrix}x\cG_{\bZ_{0}\bZ_{0}^{\top}}(x^{2})&\cG_{\bZ_{0}\bZ_{0}^{\top}}(x^{2})\bZ_{0}\\\bZ_{0}^{\top}\cG_{\bZ_{0}\bZ_{0}^{\top}}(x^{2})&x\cG_{\bZ_{0}^{\top}\bZ_{0}}(x^{2})\end{bmatrix}.\label{Eq:Block_Structure}
\end{align}
Substituting these expressions gives
\begin{align}
g_{\sz,n}&=\lim_{N\to\infty}\frac{1}{N}\log\E_{\bZ_{0}}\exp\Bigg(-\sum_{a=1}^{n}\frac{1}{2\pi\imagUnit}\oint_{\Gamma_{\sx}}{\dd x}\,\Bigg(\hat{G}_{\Sf{uu}}(x)x\bu_{a}^{\top}\cG_{\bZ_{0}\bZ_{0}^{\top}}(x^{2})\bu_{a}+\hat{M}_{\Sf{uu}}(x)x\bu_{a}^{\top}\cG_{\bZ_{0}\bZ_{0}^{\top}}(x^{2})\bu_{0}\nonumber\\
&\quad+\hat{G}_{\Sf{vv}}(x)x\bv_{a}^{\top}\cG_{\bZ_{0}^{\top}\bZ_{0}}(x^{2})\bv_{a}+\hat{M}_{\Sf{vv}}(x)x\bv_{a}^{\top}\cG_{\bZ_{0}^{\top}\bZ_{0}}(x^{2})\bv_{0}+2\hat{G}_{\Sf{uv}}(x)\bu_{a}^{\top}\cG_{\bZ_{0}\bZ_{0}^{\top}}(x^{2})\bZ_{0}\bv_{a}\nonumber\\
&\quad+\hat{M}_{\Sf{vu}}(x)\bu_{0}^{\top}\cG_{\bZ_{0}\bZ_{0}^{\top}}(x^{2})\bZ_{0}\bv_{a}+\hat{M}_{\Sf{uv}}(x)\bu_{a}^{\top}\cG_{\bZ_{0}\bZ_{0}^{\top}}(x^{2})\bZ_{0}\bv_{0}\Bigg)\Bigg).
\end{align}

We evaluate the noise average with the singular values $\bs$ held fixed, assuming that averaging over $\bs$ does not change the asymptotic free entropy \cite{barbier2023fundamental,barbier2025information}. Using the SVD $\bZ_{0}=\bP^{\top}[\Diag(\bs),\bm{0}]\bQ$, the remaining expectation is taken over the independent Haar orthogonal matrices $\bP\sim\Sf{Unif}(\mathbb{O}(N))$ and $\bQ\sim\Sf{Unif}(\mathbb{O}(M))$. Hence,
\begin{align}
g_{\sz,n}&=\lim_{N\to\infty}\frac{1}{N}\log\E_{\bP,\bQ}\exp\Bigg(-\sum_{a=1}^{n}\frac{1}{2\pi\imagUnit}\oint_{\Gamma_{\sx}}{\dd x}\Bigg(\hat{G}_{\Sf{uu}}(x)x\bu_{a}^{\top}\bP^{\top}\Diag\left(\frac{1}{x^{2}-\bs^{2}}\right)\bP\bu_{a}+\hat{M}_{\Sf{uu}}(x)x\bu_{a}^{\top}\bP^{\top}\Diag\left(\frac{1}{x^{2}-\bs^{2}}\right)\bP\bu_{0}\nonumber\\
&\quad+\hat{G}_{\Sf{vv}}(x)x\bv_{a}^{\top}\bQ^{\top}\begin{bmatrix}\Diag\left(\frac{1}{x^{2}-\bs^{2}}\right)&\bm{0}\\\bm{0}&\frac{1}{x^{2}}\Eye\end{bmatrix}\bQ\bv_{a}+\hat{M}_{\Sf{vv}}(x)x\bv_{a}^{\top}\bQ^{\top}\begin{bmatrix}\Diag\left(\frac{1}{x^{2}-\bs^{2}}\right)&\bm{0}\\\bm{0}&\frac{1}{x^{2}}\Eye\end{bmatrix}\bQ\bv_{0}\nonumber\\
&\quad+2\hat{G}_{\Sf{uv}}(x)\bu_{a}^{\top}\bP^{\top}\left[\Diag\left(\frac{\bs}{x^{2}-\bs^{2}}\right),\bm{0}\right]\bQ\bv_{a}+\hat{M}_{\Sf{vu}}(x)\bu_{0}^{\top}\bP^{\top}\left[\Diag\left(\frac{\bs}{x^{2}-\bs^{2}}\right),\bm{0}\right]\bQ\bv_{a}\nonumber\\
&\quad+\hat{M}_{\Sf{uv}}(x)\bu_{a}^{\top}\bP^{\top}\left[\Diag\left(\frac{\bs}{x^{2}-\bs^{2}}\right),\bm{0}\right]\bQ\bv_{0}\Bigg)\Bigg)\nonumber\\
&=\lim_{N\to\infty}\frac{1}{N}\log\E_{\bP,\bQ}\exp\Bigg(-\sum_{a=1}^{n}\Bigg(\bu_{a}^{\top}\bP^{\top}\Diag(\tilde{G}_{\Sf{uu}}(\bs))\bP\bu_{a}+\bu_{a}^{\top}\bP^{\top}\Diag(\tilde{M}_{\Sf{uu}}(\bs))\bP\bu_{0}\nonumber\\
&\quad+\bv_{a}^{\top}\bQ^{\top}\begin{bmatrix}\Diag(\tilde{G}_{\Sf{vv}}(\bs))&\bm{0}\\\bm{0}&\tilde{G}_{\Sf{vv}}(0)\Eye\end{bmatrix}\bQ\bv_{a}+\bv_{a}^{\top}\bQ^{\top}\begin{bmatrix}\Diag(\tilde{M}_{\Sf{vv}}(\bs))&\bm{0}\\\bm{0}&\tilde{M}_{\Sf{vv}}(0)\Eye\end{bmatrix}\bQ\bv_{0}\nonumber\\
&\quad+2\bu_{a}^{\top}\bP^{\top}[\Diag(\tilde{G}_{\Sf{uv}}(\bs)),\bm{0}]\bQ\bv_{a}+\bu_{0}^{\top}\bP^{\top}[\Diag(\tilde{M}_{\Sf{vu}}(\bs)),\bm{0}]\bQ\bv_{a}+\bu_{a}^{\top}\bP^{\top}[\Diag(\tilde{M}_{\Sf{uv}}(\bs)),\bm{0}]\bQ\bv_{0}\Bigg)\Bigg),\label{Eq:g_Z_n}
\end{align}
where the functions are defined via the following contour integrals:
\begin{align}
\tilde{M}_{\Sf{uu}}(z)&\triangleq\frac{1}{2\pi\imagUnit}\oint_{\Gamma_{\sx}}{\dd x}\,\hat{M}_{\Sf{uu}}(x)\frac{x}{x^{2}-z^{2}},\label{Eq:tM_uu}\\
\tilde{M}_{\Sf{uv}}(z)&\triangleq\frac{1}{2\pi\imagUnit}\oint_{\Gamma_{\sx}}{\dd x}\,\hat{M}_{\Sf{uv}}(x)\frac{z}{x^{2}-z^{2}},\label{Eq:tM_uv}\\
\tilde{M}_{\Sf{vu}}(z)&\triangleq\frac{1}{2\pi\imagUnit}\oint_{\Gamma_{\sx}}{\dd x}\,\hat{M}_{\Sf{vu}}(x)\frac{z}{x^{2}-z^{2}},\label{Eq:tM_vu}\\
\tilde{M}_{\Sf{vv}}(z)&\triangleq\frac{1}{2\pi\imagUnit}\oint_{\Gamma_{\sx}}{\dd x}\,\hat{M}_{\Sf{vv}}(x)\frac{x}{x^{2}-z^{2}},\label{Eq:tM_vv}\\
\tilde{G}_{\Sf{uu}}(z)&\triangleq\frac{1}{2\pi\imagUnit}\oint_{\Gamma_{\sx}}{\dd x}\,\hat{G}_{\Sf{uu}}(x)\frac{x}{x^{2}-z^{2}},\label{Eq:tG_uu}\\
\tilde{G}_{\Sf{uv}}(z)&\triangleq\frac{1}{2\pi\imagUnit}\oint_{\Gamma_{\sx}}{\dd x}\,\hat{G}_{\Sf{uv}}(x)\frac{z}{x^{2}-z^{2}},\label{Eq:tG_uv}\\
\tilde{G}_{\Sf{vv}}(z)&\triangleq\frac{1}{2\pi\imagUnit}\oint_{\Gamma_{\sx}}{\dd x}\,\hat{G}_{\Sf{vv}}(x)\frac{x}{x^{2}-z^{2}}.\label{Eq:tG_vv}
\end{align}

As shown in the main text, matching \eqref{Eq:g_Z_n} with the general form in \eqref{Eq:HCIZ_Definition} identifies the nonzero matrices $\bA_{a,a'}(\bs)$, $\bB_{a,a'}(\bs)$, and $\bC_{a,a'}(\bs)$ in \eqref{Eq:A_aa}-\eqref{Eq:C_aa}. Collecting these entries for each spin index gives the matrices $\bA(s_{i})$, $\bB(s_{i})$, and $\bC(s_{i})$ in \eqref{Eq:A_i}-\eqref{Eq:C_i}. With these identifications, Result \ref{Lemma:HCIZ} applied to \eqref{Eq:g_Z_n} gives
\begin{align}
g_{\sz,n}&=\Extr_{\tilde{\bQ}_{\su},\tilde{\bQ}_{\sv}}\Bigg(\frac{1}{2}\Tr\bQ_{\su}\tilde{\bQ}_{\su}+\frac{\alpha}{2}\Tr\bQ_{\sv}\tilde{\bQ}_{\sv}-\frac{1}{2N}\sum_{i=1}^{N}\log|\tilde{\bQ}_{\su}-2\bA(s_{i})|-\frac{1}{2N}\sum_{i=N+1}^{M}\log|\tilde{\bQ}_{\sv}-2\bB(s_{i})|\nonumber\\
&\quad-\frac{1}{2N}\sum_{i=1}^{N}\log|(\tilde{\bQ}_{\sv}-2\bB(s_{i}))-\bC^{\top}(s_{i})(\tilde{\bQ}_{\su}-2\bA(s_{i}))^{-1}\bC(s_{i})|\Bigg)-\frac{1}{2}\log|\bQ_{\su}|-\frac{\alpha}{2}\log|\bQ_{\sv}| -\frac{(n+1)(1+\alpha)}{2}.\label{Eq:tI_HCIZ}
\end{align}
Here, $\bQ_{\su}$ and $\bQ_{\sv}$ are defined in \eqref{Eq:RS_Matrices}, and the corresponding conjugate RS matrices are parameterized as $\tilde{\bQ}_{\su}=\cQ(\tilde{\beta}^{(0)}_{\su},\tilde{m}_{\su},\tilde{q}_{\su},\tilde{\beta}_{\su})$ and $\tilde{\bQ}_{\sv}=\cQ(\tilde{\beta}^{(0)}_{\sv},\tilde{m}_{\sv},\tilde{q}_{\sv},\tilde{\beta}_{\sv})$, as shown in \eqref{Eq:tQ_u_tQ_v}.

Using the convergence of the empirical distribution of $\{s_{i}^{2}\}_{i=1}^{N}$ to $\mu(\cdot)$, \eqref{Eq:tI_HCIZ} becomes
\begin{align}
g_{\sz,n}&=\Extr_{\tilde{\bQ}_{\su},\tilde{\bQ}_{\sv}}\Bigg(\frac{1}{2}\Tr\bQ_{\su}\tilde{\bQ}_{\su}+\frac{\alpha}{2}\Tr\bQ_{\sv}\tilde{\bQ}_{\sv}-\frac{1}{2}\E_{s^{2}\sim\mu}\log|\tilde{\bQ}_{\su}-2\bA(s)|-\frac{\alpha-1}{2}\log|\tilde{\bQ}_{\sv}-2\bB(0)|\nonumber\\
&\quad-\frac{1}{2}\E_{s^{2}\sim\mu}\log|(\tilde{\bQ}_{\sv}-2\bB(s))-\bC^{\top}(s)(\tilde{\bQ}_{\su}-2\bA(s))^{-1}\bC(s)|\Bigg)-\frac{1}{2}\log|\bQ_{\su}|-\frac{\alpha}{2}\log|\bQ_{\sv}| -\frac{(n+1)(1+\alpha)}{2}.\label{Eq:gz_n}
\end{align}
All terms in \eqref{Eq:gz_n} except the coupled log-determinant can be expanded directly using the formulas for RS matrices in Appendix \ref{Sec:Property_RS}. We therefore focus on
\begin{align}
\log|(\tilde{\bQ}_{\sv}-2\bB(s))-\bC^{\top}(s)(\tilde{\bQ}_{\su}-2\bA(s))^{-1}\bC(s)|.
\end{align}

To expand the coupled log-determinant, introduce the four matrices
\begin{align}
\bM_{1}&\triangleq\tilde{\bQ}_{\su}-2\bA(s),\quad\bM_{2}\triangleq\bM_{1}^{-1},\quad\bM_{3}\triangleq\bC^{\top}(s)\bM_{2}\bC(s),\quad\bM_{4}\triangleq(\tilde{\bQ}_{\sv}-2\bB(s))-\bM_{3}.
\end{align}
All four matrices retain the RS structure. The matrix $\bM_{1}$ is RS by construction; $\bM_{2}$ remains RS under inversion; $\bM_{3}$ has the RS form by the transformation $\bC^{\top}\bQ\bC$ derived in Appendix \ref{Sec:Property_RS}; and $\bM_{4}$ is again RS as the difference of two RS matrices. We therefore parameterize
\begin{align}
\bM_{k}&=\cQ(\beta^{(0)}_{k}(n),m_{k}(n),q_{k}(n),\beta_{k}(n)),\quad k\in\{1,2,3,4\}.
\end{align}

For convenience, introduce
\begin{align}
\bar{\cN}(z)&\triangleq\tilde{\beta}^{(0)}_{\sv}\left[\tilde{q}_{\sv}-\frac{\tilde{M}_{\Sf{vu}}^{2}(z)}{\tilde{\beta}^{(0)}_{\su}}+2\tilde{G}_{\Sf{uv}}(z)\left(\frac{2\tilde{\cM}_{\su}(z)\tilde{M}_{\Sf{vu}}(z)}{\tilde{\beta}^{(0)}_{\su}\tilde{\cG}_{\su}(z)}+\frac{2(\tilde{\beta}^{(0)}_{\su}\tilde{q}_{\su}-\tilde{\cM}_{\su}^{2}(z))\tilde{G}_{\Sf{uv}}(z)}{\tilde{\beta}^{(0)}_{\su}\tilde{\cG}_{\su}^{2}(z)}\right)\right]\nonumber\\
&\quad-\left(\tilde{\cM}_{\sv}(z)-\frac{2\tilde{M}_{\Sf{uv}}(z)\tilde{G}_{\Sf{uv}}(z)}{\tilde{\cG}_{\su}(z)}\right)^{2},\label{Eq:b_N}\\
\bar{\cD}(z)&\triangleq\tilde{\beta}^{(0)}_{\sv}\left(\tilde{\cG}_{\sv}(z)-\frac{4\tilde{G}_{\Sf{uv}}^{2}(z)}{\tilde{\cG}_{\su}(z)}\right),\label{Eq:b_D}\\
\tilde{\cM}_{\su}(z)&\triangleq\tilde{m}_{\su}+\tilde{M}_{\Sf{uu}}(z),\label{Eq:t_CM_u}\\
\tilde{\cM}_{\sv}(z)&\triangleq\tilde{m}_{\sv}+\tilde{M}_{\Sf{vv}}(z),\label{Eq:t_CM_v}\\
\tilde{\cG}_{\su}(z)&\triangleq\tilde{\beta}_{\su}-\tilde{q}_{\su}+2\tilde{G}_{\Sf{uu}}(z),\label{Eq:t_CG_u}\\
\tilde{\cG}_{\sv}(z)&\triangleq\tilde{\beta}_{\sv}-\tilde{q}_{\sv}+2\tilde{G}_{\Sf{vv}}(z).\label{Eq:t_CG_v}
\end{align}

We first determine the RS parameters of $\bM_{1}$. Comparing $\bM_{1}=\tilde{\bQ}_{\su}-2\bA(s)$ with the form $\cQ(\beta^{(0)},m,q,\beta)$ gives
\begin{align}
\beta^{(0)}_{1}&=\tilde{\beta}^{(0)}_{\su},\quad m_{1}=\tilde{\cM}_{\su}(s),\quad q_{1}=\tilde{q}_{\su},\quad\beta_{1}=\tilde{\beta}_{\su}+2\tilde{G}_{\Sf{uu}}(s),\quad\beta_{1}-q_{1}=\tilde{\cG}_{\su}(s).\label{Eq:M1_beta_q}
\end{align}

We next consider $\bM_{2}=\bM_{1}^{-1}$. Applying the inverse formulas for an RS matrix gives
\begin{align}
\beta^{(0)}_{2}(n)&=\frac{1}{\beta^{(0)}_{1}}+n\frac{m^{2}_{1}}{(\beta^{(0)}_{1})^{2}(\beta_{1}-q_{1})}+O(n^{2}),\nonumber\\
m_{2}(n)&=-\frac{m_{1}}{\beta^{(0)}_{1}(\beta_{1}-q_{1})}+n\frac{m_{1}(\beta^{(0)}_{1}q_{1}-m^{2}_{1})}{(\beta^{(0)}_{1})^{2}(\beta_{1}-q_{1})^{2}}+O(n^{2}),\nonumber\\
q_{2}(n)&=-\frac{\beta^{(0)}_{1}q_{1}-m^{2}_{1}}{\beta^{(0)}_{1}(\beta_{1}-q_{1})^{2}}+n\frac{(\beta^{(0)}_{1}q_{1}-m^{2}_{1})^{2}}{(\beta^{(0)}_{1})^{2}(\beta_{1}-q_{1})^{3}}+O(n^{2}),\nonumber\\
\beta_{2}(n)&=\frac{1}{\beta_{1}-q_{1}}+q_{2}(n)+O(n^{2}).
\end{align}
Substituting the parameters of $\bM_{1}$ and setting $n=0$ gives
\begin{align}
\beta^{(0)}_{2}(0)=\frac{1}{\tilde{\beta}^{(0)}_{\su}},\quad m_{2}(0)=-\frac{\tilde{\cM}_{\su}(s)}{\tilde{\beta}^{(0)}_{\su}\tilde{\cG}_{\su}(s)},\quad q_{2}(0)=-\frac{\tilde{\beta}^{(0)}_{\su}\tilde{q}_{\su}-\tilde{\cM}_{\su}^{2}(s)}{\tilde{\beta}^{(0)}_{\su}\tilde{\cG}_{\su}^{2}(s)},\quad\beta_{2}(0)-q_{2}(0)=\frac{1}{\tilde{\cG}_{\su}(s)}.\label{Eq:M2_Zero}
\end{align}

For $\bM_{3}=\bC^{\top}(s)\bM_{2}\bC(s)$, comparison of $\bC(s)$ with the form in \eqref{Eq:C_Matrix} gives
\begin{align}
h_{\Sf{1}}=-\tilde{M}_{\Sf{vu}}(s),\quad h_{\Sf{2}}=-\tilde{M}_{\Sf{uv}}(s),\quad h_{\Sf{3}}=-2\tilde{G}_{\Sf{uv}}(s).
\end{align}
Using these values in the RS transformation formulas for $\bC^{\top}\bQ\bC$ in Appendix \ref{Sec:Property_RS}, we obtain
\begin{align}
\beta^{(0)}_{3}(n)&=n(\beta_{2}(0)-q_{2}(0))\tilde{M}^{2}_{\Sf{uv}}(s)+O(n^{2}),\\
m_{3}(n)&=2(\beta_{2}(n)-q_{2}(n))\tilde{M}_{\Sf{uv}}(s)\tilde{G}_{\Sf{uv}}(s)+n(m_{2}(0)\tilde{M}_{\Sf{vu}}(s)+2q_{2}(0)\tilde{G}_{\Sf{uv}}(s))\tilde{M}_{\Sf{uv}}(s)+O(n^{2}),\\
q_{3}(n)&=\beta^{(0)}_{2}(n)\tilde{M}^{2}_{\Sf{vu}}(s)+2\tilde{G}_{\Sf{uv}}(s)(2m_{2}(n)\tilde{M}_{\Sf{vu}}(s)+2q_{2}(n)\tilde{G}_{\Sf{uv}}(s)),\\
\beta_{3}(n)&=\beta^{(0)}_{2}(n)\tilde{M}^{2}_{\Sf{vu}}(s)+2\tilde{G}_{\Sf{uv}}(s)(2m_{2}(n)\tilde{M}_{\Sf{vu}}(s)+2\beta_{2}(n)\tilde{G}_{\Sf{uv}}(s)).
\end{align}
Using \eqref{Eq:M2_Zero}, the quantities needed below reduce to
\begin{align}
\beta^{(0)}_{3}(n)&=n\frac{\tilde{M}^{2}_{\Sf{uv}}(s)}{\tilde{\cG}_{\su}(s)}+O(n^{2}),\label{Eq:M3_beta0}\\
m_{3}(0)&=\frac{2\tilde{M}_{\Sf{uv}}(s)\tilde{G}_{\Sf{uv}}(s)}{\tilde{\cG}_{\su}(s)},\label{Eq:M3_m}\\
q_{3}(0)&=\frac{\tilde{M}_{\Sf{vu}}^{2}(s)}{\tilde{\beta}^{(0)}_{\su}}-2\tilde{G}_{\Sf{uv}}(s)\left(\frac{2\tilde{\cM}_{\su}(s)\tilde{M}_{\Sf{vu}}(s)}{\tilde{\beta}^{(0)}_{\su}\tilde{\cG}_{\su}(s)}+\frac{2(\tilde{\beta}^{(0)}_{\su}\tilde{q}_{\su}-\tilde{\cM}_{\su}^{2}(s))\tilde{G}_{\Sf{uv}}(s)}{\tilde{\beta}^{(0)}_{\su}\tilde{\cG}_{\su}^{2}(s)}\right),\label{Eq:M3_q}\\
\beta_{3}(0)-q_{3}(0)&=4(\beta_{2}(0)-q_{2}(0))\tilde{G}^{2}_{\Sf{uv}}(s)=\frac{4\tilde{G}^{2}_{\Sf{uv}}(s)}{\tilde{\cG}_{\su}(s)}.\label{Eq:M3_beta_q}
\end{align}

The parameters of $\bM_{4}=(\tilde{\bQ}_{\sv}-2\bB(s))-\bM_{3}$ follow by direct subtraction:
\begin{align}
\beta^{(0)}_{4}(n)&=\tilde{\beta}^{(0)}_{\sv}-\beta^{(0)}_{3}(n)+O(n^{2}),\nonumber\\
m_{4}(n)&=\tilde{m}_{\sv}+\tilde{M}_{\Sf{vv}}(s)-m_{3}(n)+O(n^{2}),\nonumber\\
q_{4}(n)&=\tilde{q}_{\sv}-q_{3}(n)+O(n^{2}),\nonumber\\
\beta_{4}(n)&=\tilde{\beta}_{\sv}+2\tilde{G}_{\Sf{vv}}(s)-\beta_{3}(n)+O(n^{2}).
\end{align}
Substituting \eqref{Eq:M3_beta0}-\eqref{Eq:M3_beta_q} gives
\begin{align}
\beta^{(0)}_{4}(n)&=\tilde{\beta}^{(0)}_{\sv}-n\frac{\tilde{M}^{2}_{\Sf{uv}}(s)}{\tilde{\cG}_{\su}(s)}+O(n^{2}),\label{Eq:M4_beta0}\\
m_{4}(0)&=\tilde{\cM}_{\sv}(s)-\frac{2\tilde{M}_{\Sf{uv}}(s)\tilde{G}_{\Sf{uv}}(s)}{\tilde{\cG}_{\su}(s)},\label{Eq:M4_m}\\
q_{4}(0)&=\tilde{q}_{\sv}-\frac{\tilde{M}_{\Sf{vu}}^{2}(s)}{\tilde{\beta}^{(0)}_{\su}}+2\tilde{G}_{\Sf{uv}}(s)\left(\frac{2\tilde{\cM}_{\su}(s)\tilde{M}_{\Sf{vu}}(s)}{\tilde{\beta}^{(0)}_{\su}\tilde{\cG}_{\su}(s)}+\frac{2(\tilde{\beta}^{(0)}_{\su}\tilde{q}_{\su}-\tilde{\cM}_{\su}^{2}(s))\tilde{G}_{\Sf{uv}}(s)}{\tilde{\beta}^{(0)}_{\su}\tilde{\cG}_{\su}^{2}(s)}\right),\label{Eq:M4_q}\\
\beta_{4}(0)-q_{4}(0)&=\tilde{\cG}_{\sv}(s)-\frac{4\tilde{G}^{2}_{\Sf{uv}}(s)}{\tilde{\cG}_{\su}(s)}=\frac{\bar{\cD}(s)}{\tilde{\beta}^{(0)}_{\sv}}.\label{Eq:M4_beta_q}
\end{align}
Since $\bM_{4}$ has the RS form, applying the log-determinant formula gives
\begin{align}
\log|\bM_{4}|&\overset{\text{(A)}}{=}\log\beta^{(0)}_{4}(n)+n\log(\beta_{4}(0)-q_{4}(0))+n\frac{\beta^{(0)}_{4}(0)q_{4}(0)-m^{2}_{4}(0)}{\beta^{(0)}_{4}(0)(\beta_{4}(0)-q_{4}(0))}+O(n^{2})\nonumber\\
&\overset{\text{(B)}}{=}\log\left(\tilde{\beta}^{(0)}_{\sv}-n\frac{\tilde{M}^{2}_{\Sf{uv}}(s)}{\tilde{\cG}_{\su}(s)}\right)+n\log\frac{\bar{\cD}(s)}{\tilde{\beta}^{(0)}_{\sv}}+n\frac{\bar{\cN}(s)}{\bar{\cD}(s)}+O(n^{2})\nonumber\\
&\overset{\text{(C)}}{=}\log\tilde{\beta}^{(0)}_{\sv}-n\frac{\tilde{M}^{2}_{\Sf{uv}}(s)}{\tilde{\beta}^{(0)}_{\sv}\tilde{\cG}_{\su}(s)}+n\log\frac{\bar{\cD}(s)}{\tilde{\beta}^{(0)}_{\sv}}+n\frac{\bar{\cN}(s)}{\bar{\cD}(s)}+O(n^{2}).\label{Eq:M4_LogDet_Expansion}
\end{align}
In (A), the last two terms already carry a factor $n$, so only the corresponding parameters at $n=0$ are required. In (B), we substitute \eqref{Eq:M4_beta0}-\eqref{Eq:M4_beta_q}, together with $\beta^{(0)}_{4}(0)=\tilde{\beta}^{(0)}_{\sv}$ and the definitions of $\{\bar{\cN},\bar{\cD}\}$ in \eqref{Eq:b_N}-\eqref{Eq:b_D}. In (C), the first logarithm is expanded to first order around $n=0$.

Combining \eqref{Eq:M4_LogDet_Expansion} with the remaining terms in \eqref{Eq:gz_n} and collecting the constant and first-order terms in $n$ gives
\begin{align}
g_{\sz,n}&=\Extr_{\tilde{\beta}^{(0)}_{\su},\tilde{\beta}^{(0)}_{\sv}}(\bar{g}_{\sz}+n\tilde{g}_{\sz})+O(n^{2}),
\end{align}
where
\begin{align}
\bar{g}_{\sz}&=\frac{1}{2}\beta^{(0)}_{\su}\tilde{\beta}^{(0)}_{\su}+\frac{\alpha}{2}\beta^{(0)}_{\sv}\tilde{\beta}^{(0)}_{\sv}-\frac{1}{2}\log\tilde{\beta}^{(0)}_{\su}-\frac{\alpha}{2}\log\tilde{\beta}^{(0)}_{\sv}-\frac{1}{2}\log \beta_{\su}^{(0)}-\frac{\alpha}{2}\log\beta_{\sv}^{(0)}-\frac{1+\alpha}{2},\\
\tilde{g}_{\sz}&=\Extr_{\tilde{m}_{\su},\tilde{q}_{\su},\tilde{\beta}_{\su},\tilde{m}_{\sv},\tilde{q}_{\sv},\tilde{\beta}_{\sv}}\Bigg(\frac{1}{2}(2m_{\su}\tilde{m}_{\su}-q_{\su}\tilde{q}_{\su}+\beta_{\su}\tilde{\beta}_{\su})+\frac{\alpha}{2}(2m_{\sv}\tilde{m}_{\sv}-q_{\sv}\tilde{q}_{\sv}+\beta_{\sv}\tilde{\beta}_{\sv})-\frac{\alpha-1}{2}\left(\log\tilde{\cG}_{\sv}(0)+\frac{\tilde{\beta}^{(0)}_{\sv}\tilde{q}_{\sv}-\tilde{\cM}_{\sv}^{2}(0)}{\tilde{\beta}^{(0)}_{\sv}\tilde{\cG}_{\sv}(0)}\right)\nonumber\\
&\quad-\frac{1}{2}\E_{s^{2}\sim\mu}\left[\log\tilde{\cG}_{\su}(s)+\frac{\tilde{\beta}^{(0)}_{\su}\tilde{q}_{\su}-\tilde{\cM}_{\su}^{2}(s)}{\tilde{\beta}^{(0)}_{\su}\tilde{\cG}_{\su}(s)}+\frac{\bar{\cN}(s)}{\bar{\cD}(s)}-\frac{\tilde{M}_{\Sf{uv}}^{2}(s)}{\tilde{\beta}^{(0)}_{\sv}\tilde{\cG}_{\su}(s)}++\log\frac{\bar{\cD}(s)}{\tilde{\beta}^{(0)}_{\sv}}\right]\Bigg)\nonumber\\
&\quad-\frac{1}{2}\left(\log(\beta_{\su}-q_{\su})+\frac{\beta_{\su}^{(0)}q_{\su}-m_{\su}^{2}}{\beta_{\su}^{(0)}(\beta_{\su}-q_{\su})}\right)-\frac{\alpha}{2}\left(\log(\beta_{\sv}-q_{\sv})+\frac{\beta_{\sv}^{(0)}q_{\sv}-m_{\sv}^{2}}{\beta_{\sv}^{(0)}(\beta_{\sv}-q_{\sv})}\right) -\frac{1+\alpha}{2}.
\end{align}

Stationarity of $\bar{g}_{\sz}$ w.r.t. $\{\tilde{\beta}^{(0)}_{\su},\tilde{\beta}^{(0)}_{\sv}\}$ gives
\begin{align}
\tilde{\beta}^{(0)}_{\su}=\frac{1}{\beta^{(0)}_{\su}}=1,\quad\tilde{\beta}^{(0)}_{\sv}=\frac{1}{\beta^{(0)}_{\sv}}=1,\label{Eq:tb_u_tb_v}
\end{align}
where $\beta^{(0)}_{\su}=\beta^{(0)}_{\sv}=1$ has been used. Substituting these relations into $\bar{g}_{\sz}$ and $\tilde{g}_{\sz}$ gives the final form of $g_{\sz,n}$ in \eqref{Eq:g_sz_n}-\eqref{Eq:tb_u_0_tb_v_0}.

\subsection{Evaluation of the Replica Limit}\label{Sec:Completion}

The four expansions derived above provide the $O(1)$ and $O(n)$ contributions to the four constituent terms in \eqref{Eq:f_n}. Their combination gives
\begin{align}
f_{n}&=\Extr(n\tilde{g}+g_{\sz,n}+g_{\su,n}+\alpha g_{\sv,n})\nonumber\\
&=\Extr_{\beta_{\su}^{(0)},\beta_{\sv}^{(0)}}(\bar{f}+n\tilde{f})+O(n^{2}),
\end{align}
where
\begin{align}
\bar{f}&=\bar{g}_{\su}+\alpha\bar{g}_{\sv}+\bar{g}_{\sz},\\
\tilde{f}&=\Extr\left(\tilde{g}+\tilde{g}_{\su}+\alpha\tilde{g}_{\sv}+\tilde{g}_{\sz}\right).
\end{align}

Stationarity of $\bar{f}$ w.r.t. $\{\beta^{(0)}_{\su},\beta^{(0)}_{\sv}\}$ then gives
\begin{align}
0=\hat{\beta}^{(0)}_{\su}+\tilde{\beta}^{(0)}_{\su}-\frac{1}{\beta^{(0)}_{\su}},\quad0=\hat{\beta}^{(0)}_{\sv}+\tilde{\beta}^{(0)}_{\sv}-\frac{1}{\beta^{(0)}_{\sv}}.
\end{align}
Equation \eqref{Eq:tb_u_tb_v} gives $\tilde{\beta}^{(0)}_{\su}=\frac{1}{\beta^{(0)}_{\su}}=1$ and $\tilde{\beta}^{(0)}_{\sv}=\frac{1}{\beta^{(0)}_{\sv}}=1$. These relations reduce the stationarity conditions to
\begin{align}
\hat{\beta}^{(0)}_{\su}=0,\quad\hat{\beta}^{(0)}_{\sv}=0.\label{Eq:hb_u_hb_v}
\end{align}

We finally return to the replica representation of the free entropy in \eqref{Eq:f_n_Variational}. The definition of $f_{n}$ implies $\left.f_{n}\right|_{n=0}=0$, and hence $\lim_{n\to0}\frac{1}{n}f_{n}$ takes the indeterminate form $\frac{0}{0}$. L'Hôpital's rule expresses this limit in terms of $\left.\frac{\partial f_{n}}{\partial n}\right|_{n=0}$, while the first-order expansion above identifies this derivative with $\tilde{f}$. Therefore,
\begin{align}
f&=\lim_{n\to0}\frac{1}{n}f_n+\log\varphi_{\alpha,\lambda}=\tilde{f}+\log\varphi_{\alpha,\lambda}.
\end{align}
We then insert the explicit expression for $\tilde{f}$ into this result and obtain the intermediate free entropy representation stated in \eqref{Eq:Intermediate_Free_Entropy}.

\section{Derivation of the Rectangular Inhomogeneous Spherical Integral}\label{app:spherical_int}

We derive the variational representation stated in Result \ref{Lemma:HCIZ}. Throughout the derivation, we use the notation introduced there. Thus, $\bP\sim\Sf{Unif}(\mathbb{O}(N))$ and $\bQ\sim\Sf{Unif}(\mathbb{O}(M))$ are independent Haar orthogonal matrices, while $\bX\in\R^{N\times n}$ and $\bY\in\R^{M\times n}$ collect the replicas in the $\sx$- and $\sy$-channels, respectively. Their replica matrices are $\bQ_{\sx}=\frac{1}{N}\bX^{\top}\bX$ and $\bQ_{\sy}=\frac{1}{M}\bY^{\top}\bY$.

For convenience, we recall from \eqref{Aaa_Baa_Caa}-\eqref{Eq:Ai_Bi_Ci} the two representations of the replica coupling arrays $\bA$, $\bB$, and $\bC$ used below:
\begin{alignat}{3}
\bA_{a,a'}&=\Diag([A_{i,a,a'}]_{i\leq N}),\quad&\bB_{a,a'}&=\Diag([B_{i,a,a'}]_{i\leq M}),\quad&\bC_{a,a'}&=\Diag([C_{i,a,a'}]_{i\leq N}),\\
\bA_{i}&=[A_{i,a,a'}]_{a,a'\leq n},\quad&\bB_{i}&=[B_{i,a,a'}]_{a,a'\leq n},\quad&\bC_{i}&=[C_{i,a,a'}]_{a,a'\leq n}.
\end{alignat}

Recall from \eqref{Eq:HCIZ_Definition} that the rectangular inhomogeneous spherical integral is
\begin{align}
g_{\sh,n}&\triangleq\lim_{N\to\infty}\frac{1}{N}\log\E_{\bP,\bQ}\exp\left(\sum_{1\leq a,a'\leq n}\bx_{a}^{\top}\bP^{\top}\bA_{a,a'}\bP\bx_{a'}+\by_{a}^{\top}\bQ^{\top}\bB_{a,a'}\bQ\by_{a'}+\bx_{a}^{\top}\bP^{\top}[\bC_{a,a'},\bm{0}]\bQ\by_{a'}\right).\label{Eq:HCIZ_Definition_1}
\end{align}
To evaluate this Haar average, set $\tilde{\bX}\triangleq\bP\bX$ and $\tilde{\bY}\triangleq\bQ\bY$, and denote their $a$-th columns by $\tilde{\bx}_{a}$ and $\tilde{\by}_{a}$, respectively. These orthogonal transformations preserve the replica matrices, so that $\bQ_{\sx}=\frac{1}{N}\tilde{\bX}^{\top}\tilde{\bX}$ and $\bQ_{\sy}=\frac{1}{M}\tilde{\bY}^{\top}\tilde{\bY}$. Haar invariance further implies that the conditional laws of $\tilde{\bX}$ and $\tilde{\bY}$ depend on the original replicas only through these fixed overlap matrices \cite{barbier2023fundamental,barbier2025information}. Since $\bP$ and $\bQ$ are independent, the joint density of $\tilde{\bX}$ and $\tilde{\bY}$ can therefore be written as
\begin{align}
P(\tilde{\bX}\mid\bX)P(\tilde{\bY}\mid\bY)&=P(\tilde{\bX}\mid\bQ_{\sx})P(\tilde{\bY}\mid\bQ_{\sy})\nonumber\\
&=\frac{1}{Z_{\sx}(\bQ_{\sx})Z_{\sy}(\bQ_{\sy})}\prod_{1\leq a\leq a'\leq n}\delta(NQ_{\sx,a,a'}-\tilde{\bx}_{a}^{\top}\tilde{\bx}_{a'})\prod_{1\leq a\leq a'\leq n}\delta(MQ_{\sy,a,a'}-\tilde{\by}_{a}^{\top}\tilde{\by}_{a'}),
\end{align}
where the normalization factors are
\begin{align}
Z_{\sx}(\bQ_{\sx})&\triangleq\int{\prod_{a=1}^{n}\dd\tilde{\bx}_{a}}\,\prod_{1\leq a\leq a'\leq n}\delta(NQ_{\sx,a,a'}-\tilde{\bx}_{a}^{\top}\tilde{\bx}_{a'}),\\
Z_{\sy}(\bQ_{\sy})&\triangleq\int{\prod_{a=1}^{n}\dd\tilde{\by}_{a}}\,\prod_{1\leq a\leq a'\leq n}\delta(MQ_{\sy,a,a'}-\tilde{\by}_{a}^{\top}\tilde{\by}_{a'}).
\end{align}
This density representation turns \eqref{Eq:HCIZ_Definition_1} into
\begin{align}
g_{\sh,n}&=\lim_{N\to\infty}\frac{1}{N}\log\int{\prod_{a=1}^{n}\dd\tilde{\bx}_{a}\prod_{a=1}^{n}\dd\tilde{\by}_{a}}\,\exp\left(\sum_{1\leq a,a'\leq n}\tilde{\bx}_{a}^{\top}\bA_{a,a'}\tilde{\bx}_{a'}+\tilde{\by}_{a}^{\top}\bB_{a,a'}\tilde{\by}_{a'}+\tilde{\bx}_{a}^{\top}[\bC_{a,a'},\bm{0}]\tilde{\by}_{a'}\right)\nonumber\\
&\quad\times\prod_{1\leq a\leq a'\leq n}\delta(NQ_{\sx,a,a'}-\tilde{\bx}_{a}^{\top}\tilde{\bx}_{a'})\prod_{1\leq a\leq a'\leq n}\delta(MQ_{\sy,a,a'}-\tilde{\by}_{a}^{\top}\tilde{\by}_{a'})-c_{\sx,n}-\alpha c_{\sy,n},\label{Eq:gh_n}
\end{align}
with
\begin{align}
c_{\sx,n}\triangleq\lim_{N\to\infty}\frac{1}{N}\log Z_{\sx}(\bQ_{\sx}),\quad c_{\sy,n}\triangleq\lim_{M\to\infty}\frac{1}{M}\log Z_{\sy}(\bQ_{\sy}).
\end{align}

The diagonal structure of the matrices makes the exponent in \eqref{Eq:gh_n} separable over the spin indices. For example, the $\bA$ term satisfies
\begin{align}
\sum_{1\leq a,a'\leq n}\tilde{\bx}_{a}^{\top}\bA_{a,a'}\tilde{\bx}_{a'}&=\sum_{1\leq a,a'\leq n}\sum_{i=1}^{N}A_{i,a,a'}\tilde{x}_{a,i}\tilde{x}_{a',i}=\sum_{i=1}^{N}\sum_{1\leq a,a'\leq n}A_{i,a,a'}\tilde{x}_{a,i}\tilde{x}_{a',i}=\sum_{i=1}^{N}\vec{\tilde{\bx}}_{i}\bA_{i}\vec{\tilde{\bx}}^{\top}_{i}.
\end{align}
The same rearrangement applies to the $\bB$ and $\bC$ terms. For notational convenience, define
\begin{align}
\tilde{\bA}_{i}&\triangleq\tilde{\bQ}_{\sx}-2\bA_{i},\quad\tilde{\bB}_{i}\triangleq\tilde{\bQ}_{\sy}-2\bB_{i}.
\end{align}
Equation \eqref{Eq:gh_n} consequently becomes
\begin{align}
g_{\sh,n}&=\lim_{N\to\infty}\frac{1}{N}\log\int{\prod_{a=1}^{n}\dd\tilde{\bx}_{a}\prod_{a=1}^{n}\dd\tilde{\by}_{a}}\,\exp\left(\sum_{i=1}^{N}\vec{\tilde{\bx}}_{i}\bA_{i}\vec{\tilde{\bx}}_{i}^{\top}+\sum_{i=1}^{M}\vec{\tilde{\by}}_{i}\bB_{i}\vec{\tilde{\by}}_{i}^{\top}+\sum_{i=1}^{N}\vec{\tilde{\bx}}_{i}\bC_{i}\vec{\tilde{\by}}_{i}^{\top}\right)\nonumber\\
&\quad\times\prod_{1\leq a\leq a'\leq n}\delta(NQ_{\sx,a,a'}-\tilde{\bx}_{a}^{\top}\tilde{\bx}_{a'})\prod_{1\leq a\leq a'\leq n}\delta(MQ_{\sy,a,a'}-\tilde{\by}_{a}^{\top}\tilde{\by}_{a'})-c_{\sx,n}-\alpha c_{\sy,n}\nonumber\\
&\overset{\text{(A)}}{=}\lim_{N\to\infty}\frac{1}{N}\log\int{\dd\tilde{\bQ}_{\sx}\dd\tilde{\bQ}_{\sy}}\,\exp\left(\frac{N}{2}\Tr\,\bQ_{\sx}\tilde{\bQ}_{\sx}\right)\exp\left(\frac{M}{2}\Tr\,\bQ_{\sy}\tilde{\bQ}_{\sy}\right)\prod_{i=N+1}^{M}\int{\dd\vec{\tilde{\by}}_{i}}\,\exp\left(-\frac{1}{2}\vec{\tilde{\by}}_{i}\tilde{\bB}_{i}\vec{\tilde{\by}}_{i}^{\top}\right)\nonumber\\
&\quad\times\prod_{i=1}^{N}\int{\dd\vec{\tilde{\by}}_{i}}\,\exp\left(-\frac{1}{2}\vec{\tilde{\by}}_{i}\left(\tilde{\bB}_{i}-\bC_{i}^{\top}\tilde{\bA}_{i}^{-1}\bC_{i}\right)\vec{\tilde{\by}}_{i}^{\top}\right)\prod_{i=1}^{N}\int{\dd\vec{\tilde{\bx}}_{i}}\,\exp\left(-\frac{1}{2}\left(\vec{\tilde{\bx}}_{i}-\vec{\tilde{\by}}_{i}\bC_{i}^{\top}\tilde{\bA}_{i}^{-1}\right)\tilde{\bA}_{i}\left(\vec{\tilde{\bx}}_{i}-\vec{\tilde{\by}}_{i}\bC_{i}^{\top}\tilde{\bA}_{i}^{-1}\right)^{\top}\right)\nonumber\\
&\quad-c_{\sx,n}-\alpha c_{\sy,n}+C\nonumber\\
&\overset{\text{(B)}}{=}\lim_{N\to\infty}\frac{1}{N}\log\int{\dd\tilde{\bQ}_{\sx}\dd\tilde{\bQ}_{\sy}}\exp\left(\frac{N}{2}\Tr\,\bQ_{\sx}\tilde{\bQ}_{\sx}\right)\exp\left(\frac{M}{2}\Tr\,\bQ_{\sy}\tilde{\bQ}_{\sy}\right)\prod_{i=1}^{N}|\tilde{\bQ}_{\sx}-2\bA_{i}|^{-\frac{1}{2}}\nonumber\\
&\quad\times\prod_{i=N+1}^{M}|\tilde{\bQ}_{\sy}-2\bB_{i}|^{-\frac{1}{2}}\prod_{i=1}^{N}|\tilde{\bQ}_{\sy}-2\bB_{i}-\bC_{i}^{\top}(\tilde{\bQ}_{\sx}-2\bA_{i})^{-1}\bC_{i}|^{-\frac{1}{2}}-c_{\sx,n}-\alpha c_{\sy,n}+C.
\end{align}
In step (A), we introduce the conjugate replica matrices through Fourier representations of the overlap constraints. For the $\sx$-channel,
\begin{align}
\prod_{1\leq a\leq a'\leq n}\delta(NQ_{\sx,a,a'}-\tilde{\bx}_{a}^{\top}\tilde{\bx}_{a'})&=C\int{\dd\tilde{\bQ}_{\sx}}\exp\left(\frac{N}{2}\Tr\,\bQ_{\sx}\tilde{\bQ}_{\sx}\right)\prod_{i=1}^{N}\exp\left(-\frac{1}{2}\vec{\tilde{\bx}}_{i}\tilde{\bQ}_{\sx}\vec{\tilde{\bx}}_{i}^{\top}\right),
\end{align}
where $\vec{\tilde{\bx}}_{i}\triangleq[\tilde{x}_{1,i},\cdots,\tilde{x}_{n,i}]\in\R^{1\times n}$, and $\dd\tilde{\bQ}_{\sx}$ runs over the independent entries of $\tilde{\bQ}_{\sx}$. The $\sy$-channel constraint is treated analogously. In step (B), we evaluate the resulting Gaussian integrals and substitute the definitions of $\tilde{\bA}_{i}$ and $\tilde{\bB}_{i}$. All normalization factors independent of the variational parameters are absorbed into $C$. Here $C$ collects the factors $(2\pi)^{n/2}$ produced by the Gaussian integrals and the Fourier normalizations of the overlap constraints. Exactly the same factors appear in $c_{\sx,n}$ and $c_{\sy,n}$ below, so they cancel in $g_{\sh,n}$.

Applying Laplace's method to the remaining integral over the conjugate replica matrices gives
\begin{align}
g_{\sh,n}&=\Extr_{\tilde{\bQ}_{\sx},\tilde{\bQ}_{\sy}}\Bigg(\frac{1}{2}\Tr\bQ_{\sx}\tilde{\bQ}_{\sx}+\frac{\alpha}{2}\Tr\bQ_{\sy}\tilde{\bQ}_{\sy}-\frac{1}{2N}\sum_{i=1}^{N}\log|\tilde{\bQ}_{\sx}-2\bA_{i}|-\frac{1}{2N}\sum_{i=N+1}^{M}\log|\tilde{\bQ}_{\sy}-2\bB_{i}|\nonumber\\
&\quad-\frac{1}{2N}\sum_{i=1}^{N}\log|(\tilde{\bQ}_{\sy}-2\bB_{i})-\bC_{i}^{\top}(\tilde{\bQ}_{\sx}-2\bA_{i})^{-1}\bC_{i}|\Bigg)-c_{\sx,n}-\alpha c_{\sy,n}+C.
\end{align}

It remains to evaluate the normalization constants. For the $\sx$-channel, Gaussian integration followed by Laplace's method gives
\begin{align}
c_{\sx,n}&=\lim_{N\to\infty}\frac{1}{N}\log\int{\dd\tilde{\bQ}_{\sx}}\,\exp\left(\frac{N}{2}\Tr\,\bQ_{\sx}\tilde{\bQ}_{\sx}\right)\prod_{i=1}^{N}\int{\dd\vec{\tilde{\bx}}_{i}}\,\exp\left(-\frac{1}{2}\vec{\tilde{\bx}}_{i}\tilde{\bQ}_{\sx}\vec{\tilde{\bx}}_{i}^{\top}\right)+C\nonumber\\
&=\lim_{N\to\infty}\frac{1}{N}\log\int{\dd\tilde{\bQ}_{\sx}}\,\exp\left(\frac{N}{2}\Tr\,\bQ_{\sx}\tilde{\bQ}_{\sx}\right)|\tilde{\bQ}_{\sx}|^{-\frac{N}{2}}+C\nonumber\\
&=\Extr_{\tilde{\bQ}_{\sx}}\left(\frac{1}{2}\Tr\,\bQ_{\sx}\tilde{\bQ}_{\sx}-\frac{1}{2}\log|\tilde{\bQ}_{\sx}|\right)+C\nonumber\\
&=\frac{n}{2}+\frac{1}{2}\log|\bQ_{\sx}|+C,
\end{align}
where we used the stationarity condition $\tilde{\bQ}_{\sx}=\bQ_{\sx}^{-1}$, so that $\frac{1}{2}\Tr\,\bQ_{\sx}\tilde{\bQ}_{\sx}=\frac n2$. Unlike $C$, this term does not cancel and must be retained: it is proportional to $n$, and the free entropy is obtained from the $O(n)$ part of $f_n$. The same calculation gives $c_{\sy,n}=\frac n2+\frac12\log|\bQ_{\sy}|+C$. Substituting these normalization constants into the preceding variational expression yields \eqref{Eq:Gen_Spherical_Int_Result_Lemma}, completing the derivation.

\section{Proofs of the Spectral Mapping Identities}\label{Appendix:Proof_Spectral_Mapping}

This appendix proves the identities used to express the saddle point equations in spectral form.

\begin{proof}[Proof of Lemma \ref{Lemma:Hilbert_Potential_Identities}]

We first evaluate the integral of the squared Stieltjes transform. For every $\varepsilon>0$,
\begin{align}
\int{\dd x}\,\cS^{2}_{\mu}(x+\imagUnit\varepsilon)
&=\int{\dd x}\,\left(\int{\dd\lambda_{1}}\,\frac{\mu(\lambda_{1})}{x+\imagUnit\varepsilon-\lambda_{1}}\right)\left(\int{\dd\lambda_{2}}\,\frac{\mu(\lambda_{2})}{x+\imagUnit\varepsilon-\lambda_{2}}\right)\nonumber\\
&=\int{\dd x\dd\lambda_{1}\dd\lambda_{2}}\,\mu(\lambda_{1})\mu(\lambda_{2})\frac{1}{\lambda_{1}-\lambda_{2}}\left(\frac{1}{x+\imagUnit\varepsilon-\lambda_{1}}-\frac{1}{x+\imagUnit\varepsilon-\lambda_{2}}\right)\nonumber\\
&=\int{\dd\lambda_{1}\dd\lambda_{2}}\,\mu(\lambda_{1})\mu(\lambda_{2})\frac{1}{\lambda_{1}-\lambda_{2}}\int{\dd x}\,\left(\frac{1}{x+\imagUnit\varepsilon-\lambda_{1}}-\frac{1}{x+\imagUnit\varepsilon-\lambda_{2}}\right)\nonumber\\
&=0.
\end{align}
For fixed $\varepsilon>0$, the corresponding triple integral is absolutely convergent because $\mu(\cdot)$ is a probability density, so the order of integration may be exchanged. To verify the last equality, fix $\lambda_{1}\neq\lambda_{2}$. The difference in parentheses decays as $O(x^{-2})$, and
\begin{align}
\int{\dd x}\,\left(\frac{1}{x+\imagUnit\varepsilon-\lambda_{1}}-\frac{1}{x+\imagUnit\varepsilon-\lambda_{2}}\right)&=\int_{\lambda_{2}}^{\lambda_{1}}{\dd t}\,\int{\dd x}\,\frac{1}{(x+\imagUnit\varepsilon-t)^{2}}\nonumber\\
&=\int_{\lambda_{2}}^{\lambda_{1}}{\dd t}\,\left.\left(-\frac{1}{x+\imagUnit\varepsilon-t}\right)\right|_{-\infty}^{+\infty}\nonumber\\
&=0.
\end{align}
At $\lambda_{1}=\lambda_{2}$, the partial-fraction expression is understood by continuity as $(x+\imagUnit\varepsilon-\lambda_{1})^{-2}$, whose integral also vanishes.

Decompose the Stieltjes transform as
\begin{align}
\cS_{\mu}(x+\imagUnit\varepsilon)
=\cR_{\mu}(x,\varepsilon)+\imagUnit\cI_{\mu}(x,\varepsilon).
\end{align}
The identity above gives
\begin{align}
0=\int{\dd x}\,\cS^{2}_{\mu}(x+\imagUnit\varepsilon)=\int{\dd x}\,\left(\cR^{2}_{\mu}(x,\varepsilon)-\cI^{2}_{\mu}(x,\varepsilon)+2\imagUnit\cR_{\mu}(x,\varepsilon)\cI_{\mu}(x,\varepsilon)\right).
\end{align}
Taking the imaginary part yields
\begin{align}
0=\int{\dd x}\,\cR_{\mu}(x,\varepsilon)\cI_{\mu}(x,\varepsilon),\label{Eq:int}
\end{align}
which holds for every $\varepsilon>0$.

Since the Hilbert transform is bounded on $L^{2}(\R)$, the assumption $\mu(\cdot)\in L^{2}(\R)$ implies $\cH_{\mu}(\cdot)\in L^{2}(\R)$. Here, $L^{2}(\R)$ denotes the space of square-integrable functions on $\R$. As $\varepsilon\to0^{+}$, the $L^{2}$ boundary-value form of the Sokhotski-Plemelj formula gives
\begin{align}
\cR_{\mu}(x,\varepsilon)&\to\pi\cH_{\mu}(x),\quad\cI_{\mu}(x,\varepsilon)\to-\pi\mu(x).
\end{align}
Both convergences are in $L^{2}(\R)$. The two sequences are therefore bounded in $L^{2}(\R)$, and the Cauchy-Schwarz inequality gives
\begin{align}
\cR_{\mu}(x,\varepsilon)\cI_{\mu}(x,\varepsilon)\to-\pi^{2}\mu(x)\cH_{\mu}(x)
\end{align}
in $L^{1}(\R)$. Here, $L^{1}(\R)$ denotes the space of absolutely integrable functions on $\R$. Taking $\varepsilon\to0^{+}$ in \eqref{Eq:int}, we obtain
\begin{align}
0&=\lim_{\varepsilon\to0^{+}}\int{\dd x}\,\cR_{\mu}(x,\varepsilon)\cI_{\mu}(x,\varepsilon)=-\pi^{2}\int{\dd x}\,\mu(x)\cH_{\mu}(x).
\end{align}
Consequently,
\begin{align}
\int{\dd x}\,\mu(x)\cH_{\mu}(x)=0,
\end{align}
which proves the lemma.

\end{proof}

\begin{proof}[Proof of Lemma \ref{Lemma:Spectral_Mapping}]

The two identities follow from the equilibrium equation for the limiting spectral measure. When $\frac{M}{N}\to\alpha$, the joint eigenvalue density in \eqref{Eq:Joint_Eigenvalue_Density} corresponds, to leading order in $N$, to a log-gas on $(0,\infty)$ with effective potential $V_{\Sf{eff}}(\gamma)\triangleq V(\gamma)-(\alpha-1)\log\gamma$, where the logarithmic term arises from the rectangular Jacobian. Under standard regularity and confinement assumptions, the empirical spectral measure converges to the equilibrium measure associated with this potential; see \cite[Section 2.6]{anderson2010introduction}. We derive the required identities below from the corresponding Coulomb-gas saddle point equation.

Let $\{\gamma_{i}\}_{i=1}^{N}$ denote the nonzero eigenvalues of $\bZ_{0}\bZ_{0}^{\top}$. For the rectangular invariant ensemble, their joint density is given by \cite{di2003rectangular,potters2020first}
\begin{align}
\label{Eq:Joint_Eigenvalue_Density}
P(\gamma_{1},\cdots,\gamma_{N})&\propto\prod_{i=1}^{N}\left[\gamma_{i}^{\frac{1}{2}(M-N-1)}\exp\left(-\frac{N}{2} V(\gamma_{i})\right)\right]\prod_{\substack{i,j=1\\i<j}}^{N}|\gamma_{i}-\gamma_{j}|.
\end{align}
The corresponding normalization constant can be written as
\begin{align}
Z\triangleq&\int{\prod_{i=1}^{N}\dd \gamma_{i}}\,\exp(-N^{2}\cL),
\end{align}
where
\begin{align}
\cL\triangleq&-\frac{1}{N^{2}}\log\left(\prod_{i=1}^{N}\left[\gamma_{i}^{\frac{1}{2}(M-N-1)}\exp\left(-\frac{N}{2}V(\gamma_{i})\right)\right]\prod_{\substack{i,j=1\\j>i}}^{N}|\gamma_{i}-\gamma_{j}|\right)\nonumber\\
&=-\frac{M-N-1}{2N^{2}}\sum_{i=1}^{N}\log \gamma_{i}+\frac{1}{2N}\sum_{i=1}^{N}V(\gamma_{i})-\frac{1}{2N^{2}}\sum_{\substack{i,j=1\\j\neq i}}^{N}\log|\gamma_{i}-\gamma_{j}|.
\end{align}
The saddle point condition $\frac{\partial\cL}{\partial \gamma_{i}}=0$ gives
\begin{align}
V'(\gamma_{i})-\frac{M-N-1}{N}\frac{1}{\gamma_{i}}&=\frac{2}{N}\sum_{\substack{j=1\\j\neq i}}^{N}\frac{1}{\gamma_{i}-\gamma_{j}}.\label{Eq:Section:Saddle_Point}
\end{align}

For $z\in\C\setminus\{\gamma_1,\cdots,\gamma_N\}$, define the finite-$N$ resolvent
\begin{align}
\omega_N(z)&\triangleq\frac{1}{N}\sum_{i=1}^{N}\frac{1}{z-\gamma_{i}}.
\label{Eq:Section:Resolvent}
\end{align}
Multiplying \eqref{Eq:Section:Saddle_Point} by $\frac{1}{z-\gamma_{i}}$ and summing over $i$ gives
\begin{align}
\sum_{i=1}^{N}\frac{V'(\gamma_{i})}{z-\gamma_{i}}-\frac{M-N-1}{N}\sum_{i=1}^{N}\frac{1}{z-\gamma_{i}}\frac{1}{\gamma_{i}}&=\frac{2}{N}\sum_{\substack{i,j=1\\j\neq i}}^{N}\frac{1}{z-\gamma_{i}}\frac{1}{\gamma_{i}-\gamma_{j}}\nonumber\\
&=\frac{1}{N}\left(\sum_{\substack{i,j=1\\j\neq i}}^{N}\frac{1}{z-\gamma_{i}}\frac{1}{\gamma_{i}-\gamma_{j}}+\sum_{\substack{j,i=1\\i\neq j}}^{N}\frac{1}{z-\gamma_{j}}\frac{1}{\gamma_{j}-\gamma_{i}}\right)\nonumber\\
&=\frac{1}{N}\sum_{\substack{i,j=1\\j\neq i}}^{N}\frac{1}{z-\gamma_{i}}\frac{1}{z-\gamma_{j}}\nonumber\\
&=\frac{1}{N}\left(\sum_{i,j=1}^{N}\frac{1}{z-\gamma_{i}}\frac{1}{z-\gamma_{j}}-\sum_{i=1}^{N}\frac{1}{(z-\gamma_{i})^{2}}\right)\nonumber\\
&=N\omega_{N}^{2}(z)+\omega_{N}'(z).\label{Eq:Section:Func_1}
\end{align}
For $z$ in the analytic domain of $V(\cdot)$ and off the spectral support, multiplying \eqref{Eq:Section:Resolvent} by $V'(z)$ and adding and subtracting $V'(\gamma_{i})$ in the numerator gives
\begin{align}
\omega_N(z)V'(z)&=\frac{1}{N}\sum_{i=1}^{N}\frac{V'(z)}{z-\gamma_{i}}=\frac{1}{N}\sum_{i=1}^{N}\frac{V'(z)-V'(\gamma_{i})}{z-\gamma_{i}}
+\frac{1}{N}\sum_{i=1}^{N}\frac{V'(\gamma_{i})}{z-\gamma_{i}}.
\label{Eq:Section:Func_2}
\end{align}
Equations \eqref{Eq:Section:Func_1}-\eqref{Eq:Section:Func_2} then give
\begin{align}
N\omega_{N}^{2}(z)+\omega_{N}'(z)&=N\omega_{N}(z)V'(z)-\sum_{i=1}^{N}\frac{V'(z)-V'(\gamma_{i})}{z-\gamma_{i}}-\frac{M-N-1}{N}\sum_{i=1}^{N}\frac{1}{z}\left(\frac{1}{z-\gamma_{i}}+\frac{1}{\gamma_{i}}\right)\nonumber\\
&=N\omega_{N}(z)V'(z)-\sum_{i=1}^{N}\frac{V'(z)-V'(\gamma_{i})}{z-\gamma_{i}}-(M-N-1)\frac{1}{z}\omega_{N}(z)+(M-N-1)\frac{1}{z}\omega_{N}(0).
\end{align}

After division by $N$, let $N\to\infty$ with $\frac{M}{N}\to\alpha$. For the saddle point configurations considered here, the empirical spectral measures converge to $\mu(\cdot)$ and their spectra remain in a common compact subset of $(0,\infty)$. Hence, for fixed $z$ off the real axis, we have
\begin{align}
\omega_N(z)&\to\cS_{\mu}(z),\quad\omega_N(0)\to-\E_{\gamma\sim\mu}\frac{1}{\gamma},\quad\frac{1}{N}\omega_N'(z)\to0,\quad\frac{1}{N}\sum_{i=1}^{N}\frac{V'(z)-V'(\gamma_{i})}{z-\gamma_{i}}\to\E_{\gamma\sim\mu}\frac{V'(z)-V'(\gamma)}{z-\gamma}.
\end{align}
Therefore,
\begin{align}
\cS_{\mu}^{2}(z)&=\underbrace{\left[V'(z)-(\alpha-1)\frac{1}{z}\right]}_{\triangleq D(z)}\cS_{\mu}(z)-\underbrace{\left[\E_{\gamma\sim\mu}\frac{V'(z)-V'(\gamma)}{z-\gamma}+(\alpha-1)\frac{1}{z}\E_{\gamma\sim\mu}\frac{1}{\gamma}\right]}_{\triangleq F(z)}.
\end{align}
The resulting quadratic equation gives
\begin{align}
\cS_{\mu}(z)&=\frac{1}{2}D(z)\pm\frac{1}{2}\sqrt{D^{2}(z)-4F(z)}.
\end{align}
The appropriate branch is the one agreeing with the Stieltjes transform in its analytic domain. If the displayed formula extends to infinity, it satisfies $\cS_{\mu}(z)\sim\frac{1}{z}$ as $|z|\to\infty$.

For $x$ in the interior of the spectral support, the Sokhotski-Plemelj formula and the upper-half-plane boundary value of this branch give
\begin{align}
\lim_{\varepsilon\to0^{+}}\cS_{\mu}(x+\imagUnit\varepsilon)&=\pi\cH_{\mu}(x)-\imagUnit\pi\mu(x)=\frac{1}{2}D(x)-\imagUnit\frac{1}{2}\sqrt{4F(x)-D^{2}(x)}.\label{Eq:SP_R}
\end{align}
The real part of \eqref{Eq:SP_R} gives
\begin{align}
\cH_{\mu}(x)&=\frac{1}{\pi}\lim_{\varepsilon\to0^{+}}\Re\left[\cS_{\mu}(x+\imagUnit\varepsilon)\right]=\frac{1}{2\pi}D(x)=\frac{1}{2\pi}\left[V'(x)-(\alpha-1)\frac{1}{x}\right],
\end{align}
and hence
\begin{align}
V'(x)&=2\pi\cH_{\mu}(x)+(\alpha-1)\frac{1}{x}.\label{Eq:V_x}
\end{align}
The imaginary part of \eqref{Eq:SP_R} gives
\begin{align}
\mu(x)&=-\frac{1}{\pi}\lim_{\varepsilon\to0^{+}}\Im\left[\cS_{\mu}(x+\imagUnit\varepsilon)\right]=\frac{1}{2\pi}\sqrt{4F(x)-D^{2}(x)}\nonumber\\
&=\frac{1}{\pi}\sqrt{\E_{\gamma\sim\mu}\frac{V'(x)-V'(\gamma)}{x-\gamma}+(\alpha-1)\frac{1}{x}\E_{\gamma\sim\mu}\frac{1}{\gamma}-\pi^{2}\cH_{\mu}^{2}(x)}.
\end{align}
This identity can be rewritten as
\begin{align}
\E_{\gamma\sim\mu}\frac{V'(x)-V'(\gamma)}{x-\gamma}&=\pi^{2}\left(\mu^{2}(x)+\cH_{\mu}^{2}(x)\right)-(\alpha-1)\frac{1}{x}\E_{\gamma\sim\mu}\frac{1}{\gamma}.\label{Eq:E_s}
\end{align}
Equations \eqref{Eq:V_x} and \eqref{Eq:E_s} give the two identities stated in the lemma, completing the derivation.

\end{proof}

\clearpage

\end{document}